\newcommand{\wt}{\tilde}
\newcommand{\wh}{\hat}
\newcommand{\wb}{\bar}
\renewcommand{\th}[1]{\wh{\wt{#1}}}
\newcommand{\hb}[1]{\wb{\wh{#1}}}
\newcommand{\bt}[1]{\wt{\wb{#1}}}
\newcommand{\dis}{\mathop{\mathrm{\Delta}}\nolimits}
\newcommand{\res}{\mathop{\mathrm{res}}\nolimits}
\newcommand{\CR}{\mathop{\mathrm{cr}}\nolimits}
\newcommand{\mm}{\mathop{\sf{m}}\nolimits}
\newcommand{\id}{\mathrm{id}}
\newcommand{\flm}[2]
{
  \left[
  \begin{array}{c}
  #1\\
  \vdots \\
  #2
  \end{array}
  \right]
}
\renewcommand\section{\@startsection{section}{1}{\z@}%
                                  {-3.5ex \@plus -1ex \@minus -.2ex}%
                                  {2.3ex \@plus.2ex}%
                                  {\normalfont\large\bfseries}}
\spnewtheorem{prop}{Proposition}{\bf}{\it}
\numberwithin{equation}{section}
\title{Idempotent biquadratics, Yang-Baxter maps and birational representations of Coxeter groups}
\titlerunning{Idempotent biquadratics, Yang-Baxter maps and birational Coxeter groups}
\author{{James Atkinson$^{1,2}$}}
\authorrunning{{J. Atkinson}}
\institute{1: Faculty of Engineering and Environment, Northumbria University, Newcastle upon Tyne, UK. 2: School of Mathematics and Statistics, the University of Sydney, NSW 2006, Australia. }
\date{December 2013}
\begin{document}
\maketitle
\begin{abstract}
A transformation is obtained which completes the unification of quadrirational Yang-Baxter maps and known integrable multi-quadratic quad equations.
By combining theory from these two classes of quad-graph models we find an extension of the known integrability feature, and show how this leads subsequently to a natural extension of the associated lattice geometry.
The extended lattice is encoded in a birational representation of a particular sequence of Coxeter groups.
In this setting the usual quad-graph is associated with a subgroup of type $BC_n$, and is part of a larger and more symmetric ambient space.
The model also defines, for instance, integrable dynamics on a triangle-graph associated with a subgroup of type $A_n$, as well as finite degree-of-freedom dynamics, in the simplest cases associated with $\wt{E}_6$ and $\wt{E}_8$ affine subgroups.
Underlying this structure is a class of biquadratic polynomials, that we call idempotent, which express the trisection of elliptic function periods algebraically via the addition law.
\end{abstract}
\section{Introduction}
The point of departure in this paper is the investigation of integrable quad equations.
It will be demonstrated that the primary model in the multi-quadratic class that was discovered recently by the author in collaboration with Nieszporski \cite{AtkNie}, is related in a natural way to the primary model of the quadrirational Yang-Baxter maps that have been introduced earlier by Adler, Bobenko and Suris \cite{ABSf}.
The existence of this relationship clarifies the position of the new model with respect to known quad-graph models and transformation theory in \cite{WE,hirota-0,NQC,NC,ABS,ABS2,James,PTV,PSTV,KaNie,KaNie3,James2,JamesQ}. 
The subsequent focus of this article is on a particular repercussion of this connection, namely a new interpretation in terms of the {\it idempotent} class of biquadratic polynomials.

The idempotent biquadratic has its roots in the theory of elliptic functions; it is the formula for trisection of the periods expressed algebraically via the addition law.
Interesting properties of this formula in part correspond to integrability of the aforementioned quad-graph models, but go also beyond that, in particular leading to a natural generalisation of the quad-graph dynamics.

A practical way to understand the additional structure is in terms of the braid group representation emerging from the Yang-Baxter maps \cite{VesRev}.
In the quadrirational case the braid-type generators are self-inverse, and the representation is only of the trivial part of the pure braid group, i.e., the symmetric group.
But here we identify this as a subgroup of a larger ambient group.
The precise characterisation of this larger group is one of the main technical achievements of this paper, it is a Coxeter group with unexpectedly rich structure (cf. Figure \ref{cdd}).
For instance, in the sequence of groups formed by an increasing number of generators, the last finite case is the exceptional finite reflection group $E_6$, and the associated lattice geometry has the same combinatorics as the 27 lines on a cubic surface.
The general group structure bears closest resemblance to groups whose birational representation forms part of the basic framework in which the Painlev\'e equations and their generalisations may be understood \cite{ny1,sc,kny,10E9,tt,kmnoy}.

We proceed as follows.
Section \ref{F1Q4} introduces the aforementioned quad-graph models and establishes the new relationship between them.
The idempotent class of biquadratics is introduced in Section \ref{IB} with additional material given in Appendix \ref{H}.
Section \ref{IB2} connects the quad-graph models with the idempotents via the notion of quadrirationality.
A natural closure property of the idempotents, and an associated integrable triangle-graph dynamics, is established in Section \ref{OTC}.
The nature of the relationship between the quad-graph and triangle-graph dynamics is a quite delicate matter, motivation behind a method to unify them is given by a local consideration of the problem in Section \ref{5S}, and a constructive approach which completes the solution globally, based on vertex maps, is given in Section \ref{FH}.
The solution emerges in the form of the aforementioned birational group structure, which is interpreted from the point of view of lattice-geometry and dynamics in Section \ref{LG}.
A summary and discussion of how this kind of structure fits into the broader theory of discrete integrable systems is given in Section \ref{D}.

\section{Preliminaries, $F_I$ and $Q4^*$}\label{F1Q4}
The two relevant classes of models are the quadrirational Yang-Baxter maps and integrable multi-quadratic quad-equations.
We recall the primary model from each class and connect them by a new Miura-type transformation.

Rational mappings with the Yang-Baxter property were constructed by Adler, Bobenko and Suris \cite{ABSf} on the basis of a simpler property termed quadrirationality.
The primary model obtained was denoted $F_{I}$:
\begin{equation}
\begin{split}
\wh{u} &= \alpha v \frac{(1-\beta)u+\beta-\alpha-(1-\alpha)v}{(1-\alpha)\beta u+(\alpha-\beta)uv-(1-\beta)\alpha v},\\
\wt{v} &= \beta u \frac{(1-\alpha)v+\alpha-\beta-(1-\beta)u}{(1-\beta)\alpha v+(\beta-\alpha)vu-(1-\alpha)\beta u}.
\end{split}\label{F1}
\end{equation}
It is natural to view such models as dynamical systems on a quad-graph \cite{BS1,AdVeQ,pip}, see Figure \ref{quadgraph}(a); variables assigned to edges are governed by the system (\ref{F1}) on each quad, whilst the essential parameters of the system, $\alpha$ and $\beta$, are associated with characteristics.
This is unambiguous due to invariance of (\ref{F1}) under permutations $\wh{u}\leftrightarrow u$, $\wt{v}\leftrightarrow v$ and $(u,\wh{u},\beta)\leftrightarrow (v,\wt{v},\alpha)$, or in other words the system respects the symmetry of the quad, see Figure \ref{quadgraph}(b).
The rationality of the mapping plus the first two of these permutation symmetries is the essence of the quadrirationality.

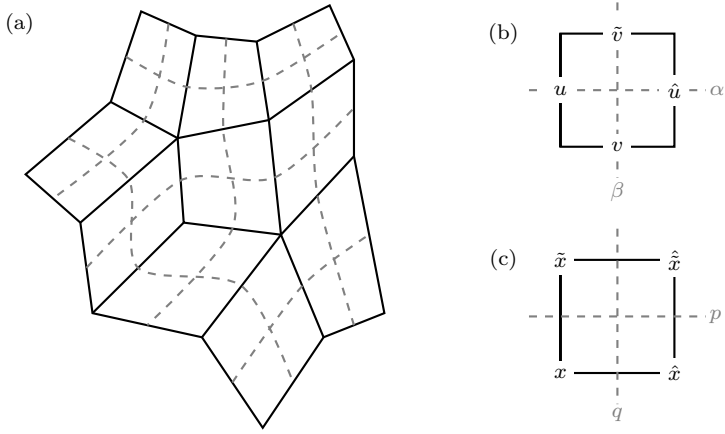
\begin{figure}[t]
\begin{center}
\begin{tikzpicture}[thick,scale=0.8]
  \tikzstyle{every node}=[inner sep=0pt]
  \coordinate (1) at (2.5,7.2); 
  \coordinate (2) at (3.4,6.8);
  \coordinate (3) at (4.4,6.7); 
  \coordinate (4) at (5.6,7.3);
  \coordinate (5) at (2.0,5.7);
  \coordinate (6) at (3.1,5.1);
  \coordinate (7) at (4.6,5.4);
  \coordinate (8) at (6.0,6.4);
  \coordinate (9) at (0.6,4.5);
  \coordinate (10) at (1.5,3.7);
  \coordinate (11) at (3.2,3.7);
  \coordinate (12) at (4.8,3.5);
  \coordinate (13) at (6.0,4.8);
  \coordinate (14) at (1.7,2.2);
  \coordinate (15) at (3.5,1.8);
  \coordinate (16) at (4.5,0.3);
  \coordinate (17) at (5.5,1.8);
  \coordinate (18) at (6.5,2.2);
  \draw (1) -- node(e12){} (2) -- node(e23){} (3) -- node(e34){} (4) -- node(e48){} (8) -- node(e813){} (13) -- node(e1318){} (18) -- node(e1718){} (17) -- node(e1617){} (16) -- node(e1516){} (15) -- node(e1415){} (14) -- node(e1014){} (10) -- node(e910){} (9) -- node(e59){} (5) -- node(e15){} (1) -- cycle;
  \draw (5) -- node(e56){} (6) -- node(e611){} (11) -- node(e1112){} (12) -- node(e1217){} (17);
  \draw (3) -- node(e37){} (7) -- node(e712){} (12) -- node(e1213){} (13);
  \draw (6) -- node(e67){} (7) -- node(e78){} (8);
  \draw (2) -- node(e26){} (6) -- node(e610){} (10);
  \draw (11) -- node(e1114){} (14);
  \draw (15) -- node(e1215){} (12);
  \draw [dashed,gray=10!white] plot [smooth] coordinates {(e59) (e610) (e1114) (e1215) (e1617)};
  \draw [dashed,gray=10!white] plot [smooth] coordinates {(e15) (e26) (e37) (e48)};
  \draw [dashed,gray=10!white] plot [smooth] coordinates {(e23) (e67) (e1112) (e1415)};
  \draw [dashed,gray=10!white] plot [smooth] coordinates {(e34) (e78) (e1213) (e1718)};
  \draw [dashed,gray=10!white] plot [smooth] coordinates {(e12) (e56) (e910)};
  \draw [dashed,gray=10!white] plot [smooth] coordinates {(e1014) (e611) (e712) (e813)};
  \draw [dashed,gray=10!white] plot [smooth] coordinates {(e1516) (e1217) (e1318)};
  \draw (0.5,7) node{(a)};
\end{tikzpicture}
\hspace{30pt}
\begin{tikzpicture}[thick,scale=0.75]
  \draw[transparent] (1,5) -- node(a){} (1,7) -- node(b){} (3,7) -- node(c){} (3,5) -- node(d){} (1,5) -- (1,7);
  \draw[dashed,gray=10!white, shorten <=-15, shorten >=-15] (a) -- (c) node[right=10]{$\alpha$};
  \draw[dashed,gray=10!white, shorten <=-15, shorten >=-15] (b) -- (d) node[below=10]{$\beta$};
  \draw (1,5) -- node[fill=white,inner sep=3](a){$u$} (1,7) -- node(b)[fill=white,inner sep=3]{$\wt{v}$} (3,7) -- node(c)[fill=white,inner sep=3]{$\wh{u}$} (3,5) -- node(d)[fill=white,inner sep=3]{$v$} (1,5) -- (1,7);
  \draw (0,7) node{(b)};
  \draw (1,1) node[fill=white]{$x$} -- node(aa){} (1,3) node[fill=white]{$\wt{x}$} -- node(bb){} (3,3) node[fill=white]{$\th{x}$} -- node(cc){} (3,1) node[fill=white]{$\wh{x}$} -- node(dd){} (1,1) -- (1,3);
  \draw[dashed,gray=10!white, shorten <=-15, shorten >=-15] (aa) -- (cc) node[right=10]{$p$};
  \draw[dashed,gray=10!white, shorten <=-15, shorten >=-15] (bb) -- (dd) node[below=10]{$q$};
  \draw (0,3) node{(c)};
\end{tikzpicture}
\end{center}
\caption{(a) Planar quad-graph, (b) quad with variables on edges, (c) quad with variables on vertices. Parameters are associated with the dashed lines, which represent characteristics.}
\label{quadgraph}
\end{figure}
A class of transformations connecting systems with variables on edges, such as (\ref{F1}), with polynomial quad equations in which variables are on vertices, were introduced by Papageorgiou, Tongas and Veselov \cite{PTV} and developed by Kassotakis and Nieszporski \cite{KaNie}.
It is a new transformation in this class which is the point of departure in this article, it involves equations on edges of the graph which around a single quad take the form
\begin{equation}
\begin{split}
&B(u,x,\wt{x},\alpha)=0, \quad B(v,x,\wh{x},\beta)=0,\\
&B(\wh{u},\wh{x},\th{x},\alpha)=0, \quad B(\wt{v},\wt{x},\th{x},\beta)=0,
\end{split}\label{BT}
\end{equation}
where $B$ is the following polynomial
\begin{multline}
B(u,x,\wt{x},\alpha):=\\ \ (1+cx)(1+c\wt{x})u^2-[(1+c^2)(1+x\wt{x})\alpha+2c(x+\wt{x})]u+(c+x)(c+\wt{x})\alpha,
\end{multline}
and $c\in\mathbb{C}\setminus\{0,1,-1,i,-i\}$ is a constant parameter.
This system connects model (\ref{F1}) with the primary quad equation found in \cite{AtkNie}:
\begin{equation}
\begin{split}
&(p-q)[(c^{-2}p-c^2q)(x\wt{x}-\wh{x}\th{x})^2-(c^{-2}q-c^2p)(x\wh{x}-\wt{x}\th{x})^2]\\
&\quad -(p-q)^2[(x+\th{x})^2(1+\wt{x}^2\wh{x}^2)+(\wt{x}+\wh{x})^2(1+x^2\th{x}^2)]\\
&\quad +[(x-\th{x})(\wt{x}-\wh{x})(c^{-1}-cpq)-2(p-q)(1+x\wt{x}\wh{x}\th{x})]\\
&\quad \times [(x-\th{x})(\wt{x}-\wh{x})(c^{-1}pq-c)-2(p-q)(x\th{x}+\wt{x}\wh{x})]=0,
\end{split}
\label{QQ4}
\end{equation}
where
\begin{equation}
p=\mm(\alpha), \quad q=\mm(\beta), \quad \mm:=\alpha\mapsto\frac{2c^2-(1+c^2)\alpha}{c(1+c^2)\alpha-2c},\label{mm}
\end{equation}
see Figure \ref{quadgraph}(c).
The quad equation (\ref{QQ4}), denoted $Q4^*$, is quadratic in the dependent variable, but the basis of its original construction guarantees existence of a rational reformulation.
In fact the system on edges (\ref{BT}) is a way to introduce auxiliary edge variables, the (autonomous) rational reformulation in terms of these variables being exactly system (\ref{F1}).
Typically such reformulation would involve both edge and vertex variables, so relations (\ref{BT}) are especially natural because in the resulting reformulation, dependence on the original (vertex) variables disappears.
It means system (\ref{BT}) is also a Miura-type transformation obtaining solutions of (\ref{QQ4}) from solutions of (\ref{F1}).
This can be verified directly, specifically a calculation will verify the following two non-trivial facts.
\begin{prop}
(i) Elimination of variables $\th{x},\wt{x},\wh{x}$ from system (\ref{BT}) yields a quadratic polynomial equation in $x$ whose coefficients vanish exactly when (\ref{F1}) holds.
(ii) Elimination of $u,\wh{u},v,\wt{v}$ between (\ref{F1}) and (\ref{BT}) yields a single equation in the remaining variables, which is exactly (\ref{QQ4}).
\end{prop}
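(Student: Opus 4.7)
The plan for (i) exploits the following structure of $B$: it is quadratic in its first argument but \emph{affine-linear} in each of the second and third arguments. Accordingly, each equation of (\ref{BT}) can be solved as a M\"obius function for either of its two vertex entries. From the first and second equations one expresses $\wt{x}$ and $\wh{x}$ as M\"obius functions of $x$ (with coefficients depending on $u,\alpha$ and $v,\beta$ respectively). The third and fourth equations then each determine $\th{x}$ as a M\"obius function of a vertex variable which is itself M\"obius in $x$, so both yield a M\"obius expression for $\th{x}$ in $x$. Setting these two expressions equal and clearing denominators produces a polynomial equation in $x$ of degree at most two, whose three coefficients $C_0,C_1,C_2$ are polynomial in the remaining variables and parameters.

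The content of (i) is then to identify the ideal $\langle C_0,C_1,C_2\rangle$ with the ideal generated by the numerators of the two relations in (\ref{F1}). One direction, that $C_0=C_1=C_2=0$ whenever (\ref{F1}) holds, is a substitution check. For the converse, one exhibits two of the three coefficients as explicit nonvanishing multiples of the two numerators of (\ref{F1}), so that the third is then an algebraic consequence. For (ii), one proceeds conversely: assume (\ref{F1}) and use it to substitute $\wh{u}$ and $\wt{v}$ in (\ref{BT}) in terms of $u,v,\alpha,\beta$. Viewing the first and third equations of (\ref{BT}) as quadratic polynomials in $u$, their resultant in $u$ eliminates that variable; the analogous step with the second and fourth equations eliminates $v$. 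The common polynomial content of the two resultants is the desired single equation in $x,\wt{x},\wh{x},\th{x},\alpha,\beta$, which under the rational involution $p=\mm(\alpha)$, $q=\mm(\beta)$ of (\ref{mm}) should coincide with (\ref{QQ4}).

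The main obstacle in both parts is the sheer size of the intermediate expressions: each equation of (\ref{BT}) is quadratic in an edge variable and bilinear in its two vertex variables, so repeated elimination or resultant calculations produce polynomials that are not feasible to manipulate by hand. In practice I would delegate the final identifications to a computer algebra system, verifying factorisation of $C_0,C_1,C_2$ through the numerators of (\ref{F1}) for part (i), and matching the resultant with (\ref{QQ4}) after applying $\mm$ for part (ii). The conceptual point, however, is transparent: the M\"obius structure of $B$ in the vertex slots and its quadratic structure in the edge slot are finely tuned so that elimination closes exactly onto (\ref{F1}) in one direction and (\ref{QQ4}) in the other, realising (\ref{BT}) as the Miura-type transformation linking these two models.
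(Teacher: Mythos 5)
Your overall strategy is the same as the paper's: the paper gives no argument beyond ``this can be verified directly, specifically a calculation will verify'', and your part (i) plan --- exploit that $B$ is quadratic in its edge slot but affine in each vertex slot, so that $\wt{x}$, $\wh{x}$ and hence the two competing expressions for $\th{x}$ are M\"obius in $x$, then compare the coefficients $C_0,C_1,C_2$ of the resulting quadratic with the numerators of (\ref{F1}) --- is a reasonable organisation of exactly that calculation. One caveat there: your converse step assumes two of the $C_i$ factor as nonvanishing multiples of the two numerators of (\ref{F1}); this specific factorisation is a guess, and what the proposition actually needs (and what a computer check should target) is only that the two numerators and the three coefficients generate the same variety away from degenerate loci, e.g.\ each expressible as a combination of the other pair.

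Part (ii), however, contains a concrete gap as written. After substituting $\wh{u}$ and $\wt{v}$ from (\ref{F1}), the third equation of (\ref{BT}) depends on \emph{both} $u$ and $v$, and so does the fourth. Consequently $\res_u$ of the first and third equations is a polynomial still containing $v$, while $\res_v$ of the second and fourth still contains $u$: the elimination is not finished, and ``the common polynomial content of the two resultants'' is neither a well-defined operation between a $v$-dependent and a $u$-dependent polynomial nor guaranteed to exist as a $u,v$-free common factor. You need a further elimination step (for instance, take the resultant with respect to $v$ of your first output against the second equation of (\ref{BT})), and you must then strip the extraneous factors that iterated resultants inevitably introduce (vanishing leading coefficients, degenerate loci in $\alpha,\beta,c$) before identifying the surviving factor with (\ref{QQ4}) via (\ref{mm}). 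Note also that the paper itself indicates, immediately after the proposition, a cheaper route you could borrow: by (\ref{BF1}), under the substitution (\ref{sbs}) each equation of (\ref{BT}) is itself a copy of the map (\ref{F1}) with parameter $\gamma$, so both claims reduce to the Yang--Baxter (cubic) consistency of (\ref{F1}) on the cube of Figure \ref{cube}, which keeps the expressions to be verified substantially smaller.
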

Therefore, assuming that $u,\wh{u},v,\wt{v}$ are such that (\ref{F1}) holds, and choosing $x$ freely, system (\ref{BT}) uniquely determines $\wt{x}$, $\wh{x}$ and $\th{x}$, which then satisfy (\ref{QQ4}).
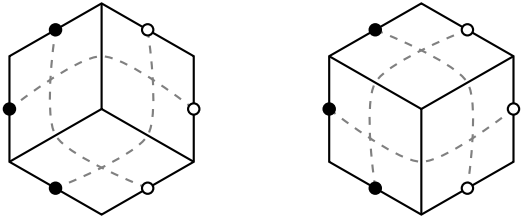
\begin{figure}[t]
\begin{center}
\begin{tikzpicture}[thick,scale=0.7]
  \tikzstyle{every node}=[inner sep=0pt]
  \coordinate (1) at (30:2);
  \coordinate (2) at (90:2);
  \coordinate (3) at (150:2);
  \coordinate (4) at (210:2);
  \coordinate (5) at (270:2);
  \coordinate (6) at (330:2);
  \coordinate (7) at (0,0);
  \draw[transparent] (1) -- node(e12){} (2) -- node(e23){} (3) -- node(e34){} (4) -- node(e45){} (5) -- node(e56){} (6) -- node(e16){} (1) -- cycle;
  \draw[transparent] (2) -- node(e27){} (7) -- node(e67){} (6);
  \draw[transparent] (4) -- node(e47){} (7);
  \draw [dashed,gray=10!white] plot [smooth] coordinates {(e16) (e27) (e34)};
  \draw [dashed,gray=10!white] plot [smooth] coordinates {(e12) (e67) (e45)};
  \draw [dashed,gray=10!white] plot [smooth] coordinates {(e56) (e47) (e23)};
  \draw (1) -- node[draw,circle,fill=white,inner sep=1.5]{} (2) -- node[draw,circle,fill=black,inner sep=1.5]{} (3) -- node[draw,circle,fill=black,inner sep=1.5]{} (4) -- node[draw,circle,fill=black,inner sep=1.5]{} (5) -- node[draw,circle,fill=white,inner sep=1.5]{} (6) -- node[draw,circle,fill=white,inner sep=1.5]{} (1) -- cycle;
  \draw (2) -- node{} (7) -- node{} (6);
  \draw (4) -- node{} (7);
\end{tikzpicture}
\hspace{40pt}
\begin{tikzpicture}[thick,scale=0.7]
  \tikzstyle{every node}=[inner sep=0pt]
  \coordinate (4) at (30:2);
  \coordinate (5) at (90:2);
  \coordinate (6) at (150:2);
  \coordinate (1) at (210:2);
  \coordinate (2) at (270:2);
  \coordinate (3) at (330:2);
  \coordinate (7) at (0,0);
  \draw[transparent] (1) -- node(e12){} (2) -- node(e23){} (3) -- node(e34){} (4) -- node(e45){} (5) -- node(e56){} (6) -- node(e16){} (1) -- cycle;
  \draw[transparent] (2) -- node(e27){} (7) -- node(e67){} (6);
  \draw[transparent] (4) -- node(e47){} (7);
  \draw [dashed,gray=10!white] plot [smooth] coordinates {(e16) (e27) (e34)};
  \draw [dashed,gray=10!white] plot [smooth] coordinates {(e12) (e67) (e45)};
  \draw [dashed,gray=10!white] plot [smooth] coordinates {(e56) (e47) (e23)};
  \draw (1) -- node[draw,circle,fill=black,inner sep=1.5]{} (2) -- node[draw,circle,fill=white,inner sep=1.5]{} (3) -- node[draw,circle,fill=white,inner sep=1.5]{} (4) -- node[draw,circle,fill=white,inner sep=1.5]{} (5) -- node[draw,circle,fill=black,inner sep=1.5]{} (6) -- node[draw,circle,fill=black,inner sep=1.5]{} (1) -- cycle;
  \draw (2) -- node{} (7) -- node{} (6);
  \draw (4) -- node{} (7);
\end{tikzpicture}
\end{center}
\caption{The Yang-Baxter property, global transfer of data along characteristics is independent of the order in which they cross.
This is equivalent to the local condition that equations for variables on white nodes in terms of variables on black ones are the same for both of these quad graphs.}
\label{YB}
\end{figure}

In fact these features can be explained in terms of the integrability of (\ref{F1}), namely the Yang-Baxter property, or cubic consistency, see Figure \ref{YB}.
The main observation is that
\begin{equation}
\begin{split}
& B(u,\mm(w),\mm(\wt{w}),\alpha)=0 \quad \Leftrightarrow \quad \\
& \wt{w} = \gamma u \frac{(1-\alpha)w+\alpha-\gamma-(1-\gamma)u}{(1-\gamma)\alpha w+(\gamma-\alpha)wu-(1-\alpha)\gamma u}, \quad \gamma=\left(\frac{2c}{1+c^2}\right)^2,
\end{split}\label{BF1}
\end{equation}
thus (\ref{BT}) is really nothing but the same underlying model (\ref{F1}).
The system of all equations (\ref{F1}), (\ref{BT}), (\ref{QQ4}) is associated with the non-planar cube that, due to the consistency property, may be formed as the union of the two graphs in Figure \ref{YB} without breaking the well-posedness of the indicated initial value problem.
This cube is shown in projected form in Figure \ref{cube} with variables and parameters added explicitly.
It can be verified by calculation that as a consequence of imposing (\ref{F1}) on each face, the variables on the four edges through which the closed $\gamma$ characteristic passes are related by (\ref{QQ4}) via the substitution
\begin{equation}
\begin{split}
&x=\mm(w),\ \wt{x}=\mm(\wt{w}),\ \wh{x}=\mm(\wh{w}),\ \th{x}=\mm(\th{w}), \\ & p=\mm(\alpha), \ q=\mm(\beta), \ c=\mm(\gamma).
\end{split}\label{sbs}
\end{equation}
This M\"obius change of variables brings the resulting multi-quadratic quad-equation to the canonical form that was given originally in \cite{AtkNie} (\ref{QQ4}).
The generic form of (\ref{QQ4}) will be given later in this article.
The M\"obius transformation $\mm$ here and in (\ref{mm}) is characterised by its action $\mm:(0,1,\infty)\mapsto(-c,1/c,-1/c)$.

Up to the same change of variables, the equations (\ref{BT}) are associated with the four quads through which the $\gamma$ characteristic passes.
\begin{figure}[t]
\begin{center}
\begin{tikzpicture}[thick,scale=1.3]
  \draw [dashed,gray=10!white] plot [smooth cycle] coordinates {(45:1.68) (135:1.68) (225:1.68) (315:1.68)};
  \draw [dashed,gray=10!white] (0:1.75) -- (180:1.75);
  \draw [dashed,gray=10!white] (90:1.75) -- (270:1.75);
  \tikzstyle{every node}=[circle,fill=white,inner sep=0.7pt]
  \draw (45:1) -- node{$\wt{v}$} (135:1) -- node{$u$} (225:1) -- node{$v$} (315:1) -- node{$\wh{u}$} (45:1) -- cycle;
  \draw (45:2.4) -- node{$\bt{v}$} (135:2.4) -- node{$\wb{u}$} (225:2.4) -- node{$\wb{v}$} (315:2.4) -- node{$\hb{u}$} (45:2.4) -- cycle;
  \draw (45:1) -- node{$\th{w}$} (45:2.4);
  \draw (135:1) -- node{$\wt{w}$} (135:2.4);
  \draw (225:1) -- node{$w$} (225:2.4);
  \draw (315:1) -- node{$\wh{w}$} (315:2.4);
  \draw[gray=10!white] (115:1.52) node{$\gamma$};
  \draw[gray=10!white] (180:1.15) node{$\alpha$};
  \draw[gray=10!white] (90:1.15) node{$\beta$};
\end{tikzpicture}
\end{center}
\caption{A quad-graph domain for (\ref{F1}). The four variables along the closed characteristic $\gamma$ are related by the multi-quadratic quad equation (\ref{QQ4}) modulo the change of variables (\ref{sbs}).}
\label{cube}
\end{figure}
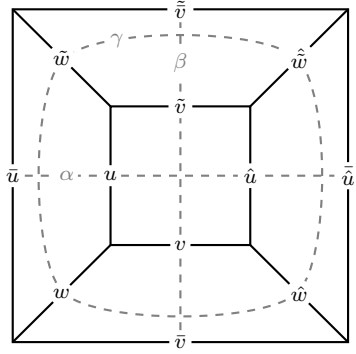
Upon this observation the statement that (\ref{BT}) constitutes a Miura-type transformation between (\ref{F1}) and (\ref{QQ4}) can be identified with the statement of the Yang-Baxter property of (\ref{F1}) itself.

The theory of the simpler rational model (\ref{F1}) therefore encompasses the transformation (\ref{BT}) and the multi-quadratic model (\ref{QQ4}).
In hindsight one can notice that the basis of construction of (\ref{F1}) and (\ref{QQ4}) are in fact similar. 
In the first instance it is the preservation of rationality under permutation of the variables of the model, and in the second it is existence of a rational reformulation.

However, there is one particular element from the theory of (\ref{QQ4}) which turns out to give new insight into (\ref{F1}).
It is known \cite{AtkNie} that such polynomials as $B$ that define a B\"acklund transformation or rational re-formulation for (\ref{QQ4}) have the following discriminant property,
\begin{multline}
\dis[B(u,\mm(w),\mm(\wt{w}),\alpha),u] = 0 \quad \Leftrightarrow \quad \\
\left[{(1-\gamma)^{-1}}(w-\gamma)(\wt{w}-\gamma)(\alpha-\gamma) + \gamma(w+\wt{w}+\alpha-\gamma)\right]^2 - 4\gamma w\wt{w}\alpha = 0,\label{stq}
\end{multline}
where the polynomial appearing in (\ref{stq}) is the Weierstrass-type symmetric triquadratic with discriminant polynomial $r(w) = w(w-1)(w-\gamma)$.
This characterises $B$ as being amongst a particular class of polynomials that we call idempotent, as shown in Appendix A.
Due to (\ref{BF1}), the model (\ref{F1}) is also connected with this class of polynomials, and it is this connection which is the key observation we develop in this paper.
The idempotent class has some remarkable features of independent interest, and the connection with (\ref{F1}) can be established with greater directness after some of those features are understood.

\section{Idempotent biquadratic correspondences}\label{IB}
We define and give basic properties of the relevant class of algebraic correspondences.

Symmetric biqudaratic polynomials,
\begin{equation}
h(x,y) = c_0+c_1(x+y)+c_2xy+c_3(x^2+y^2)+c_4xy(x+y)+c_5x^2y^2, \label{sb}
\end{equation}
$c_0,\ldots,c_5\in\mathbb{C}$, provided they don't decompose into a product involving one-variable polynomial factors, define dynamics on {\it orbits}, that is sequences of values 
\begin{equation}
\ldots,x_{-2},x_{-1},x_0,x_1,x_2,\ldots\in\mathbb{C}\cup\{\infty\}\label{orbit}
\end{equation}
such that 
\begin{equation}
h(x_n,y)=0\ \Leftrightarrow \ y\in\{x_{n-1},x_{n+1}\}, \quad n\in\mathbb{Z}.\label{orbdef}
\end{equation}
The two-valued dynamics defined by the biquadratic manifests only in the choice of direction along the orbit.
The significance of the orbits, which are sequences defined uniquely up to orientation, is that they may be taken as a definition of elliptic functions in the discrete setting.

The self-composition of the correspondence defined by $h$ yields another symmetric biquadratic polynomial, $g$, via the formula
\begin{equation}
\res_2[h(x,y),h(y,z),y] = (x-z)^2g(x,z),\label{scomp}
\end{equation}
where $\res_2$ denotes the resultant of quadratic polynomials\footnote{More precisely $\res_2[p(x),q(x),x]$ is the determinant of the Sylvester matrix of $p$ and $q$ with the explicit assumption that both $p$ and $q$ are degree two polynomials, allowing a possibly zero coefficient of the second-degree monomial term.}.
The other factor appearing in (\ref{scomp}), $(x-z)^2$, is due to the fact that $h$ is symmetric.
This notion of self-composition allows to define the idempotent class.
\begin{definition}\label{idempotent}
We refer to $h$ (\ref{sb}) as idempotent if its self-composition $g$, defined by (\ref{scomp}), is a non-zero scalar multiple of $h$.
\end{definition}
Note that if $h$ has a one-variable polynomial factor, or it drops in degree, then its self-composition vanishes identically.
Such degenerate cases are excluded in Definition \ref{idempotent} by requiring the scalar multiple be non-zero.
In particular, idempotent biquadratic polynomials have well-defined orbits.

The utility to be found in biquadratic polynomials from this class begins with the following list of complementary ways to characterise them.
\begin{prop}\label{character}
Assume the symmetric biquadratic polynomial $h$ in (\ref{sb}) is not a scalar multiple of $(x-y)^2$, then the following are equivalent,
\begin{itemize}
\item[(i)] $h$ is idempotent,
\item[(ii)$^*$] all orbits of $h$ are three-periodic ($x_{n+3}=x_n$ in (\ref{orbit})),
\item[(iii)$^*$] there exists a non-constant three-periodic orbit of $h$,
\item[(iv)$^*$] the coefficients of $h$ satisfy the constraint $c_1c_4+c_3^2=c_0c_5+c_2c_3$,
\item[(v)$^*$] $h$ satisfies the four-variable polynomial identity
\begin{multline}
(w-z)h(w,z)(x-y)h(x,y)
+(w-y)h(w,y)(z-x)h(z,x)\\
+(w-x)h(w,x)(y-z)h(y,z)=0,\label{fe}
\end{multline}
\item[(vi)] $h$ can be written in the form
\begin{equation}
h(x,y) = \frac{r_1(x)r_2(y)-r_2(x)r_1(y)}{x-y}, \label{redrag}
\end{equation}
where $r_1,r_2$ are cubic polynomials with no common roots.
\end{itemize}
Here the $^*$ indicates addition of the non-degeneracy condition on $h$ that its self-composition, $g$ in (\ref{scomp}), does not vanish identically.
\end{prop}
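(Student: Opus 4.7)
The plan is to establish the explicit form (vi) as the hub: first show that (vi) implies each of the remaining five conditions, and then close the loop by proving (i)$\,\Rightarrow\,$(vi). Starting from (vi), introduce the degree-three rational map $\phi\colon \mathbb{P}^1 \to \mathbb{P}^1$, $\phi(x):=[r_1(x):r_2(x)]$. Since $(x-y)h(x,y) = r_1(x)r_2(y)-r_2(x)r_1(y)$, the off-diagonal zero set of $h$ is exactly $\{\phi(x)=\phi(y)\}$, so the orbits of $h$ are precisely the three-element fibres of $\phi$, giving (ii)$^*$ and, with non-degeneracy, (iii)$^*$. The resultant (\ref{scomp}) then vanishes on $(x,z)$ exactly when some $y$ satisfies $\phi(x)=\phi(y)=\phi(z)$, which is again the condition $\phi(x)=\phi(z)$; hence $g$ and $h$ have the same zero divisor in the space of symmetric biquadratics and must be proportional, giving (i). The four-variable identity (v)$^*$ is the three-term Pl\"ucker relation $p_{wx}p_{yz}-p_{wy}p_{xz}+p_{wz}p_{xy}=0$ for the $2\times 2$ minors $p_{ab}=r_1(a)r_2(b)-r_2(a)r_1(b)$ of the $2\times 4$ matrix with columns $(r_1(a),r_2(a))^{T}$, $a\in\{w,x,y,z\}$. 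The single coefficient constraint (iv)$^*$ falls out of expanding (\ref{redrag}) in the monomial basis of (\ref{sb}).

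For the converse direction (i)$\,\Rightarrow\,$(vi) I would use an elliptic-curve interpretation. Generically the curve $C:=\{h=0\}\subset\mathbb{P}^1\times\mathbb{P}^1$ is smooth of genus one, with each coordinate projection realizing it as a double cover of $\mathbb{P}^1$; the orbit-advance $\sigma\colon(x_0,x_1)\mapsto(x_1,x_2)$, where $x_2$ is the second root of $h(x_1,\cdot)=0$, is the composition of the two hyperelliptic involutions and is therefore a translation $T_P$ on $C$. By (\ref{scomp}) the pair $(x_0,x_2)$ lies on $\{g=0\}$, and with $g=\lambda h$ nonzero it also lies on $\{h=0\}$; thus $\sigma^{2}$ coincides with $\sigma^{-1}$, forcing $3P=0$. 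The quotient $C/\langle T_P\rangle\cong\mathbb{P}^1$, and composing the quotient map with either $x$-projection yields a degree-three map of $\mathbb{P}^1$ which, written as $[r_1:r_2]$ with coprime cubics, reproduces $h$ in the form (vi) by matching the three roots of $r_1(x)t_2-r_2(x)t_1$ for each fibre label $[t_1:t_2]$. The same picture closes (iii)$^*$$\,\Rightarrow\,$(i): a single non-constant three-periodic orbit witnesses $3P=0$, which propagates to every fibre, so $\{g=0\}$ and $\{h=0\}$ coincide as curves and the two non-degenerate symmetric biquadratics must be proportional. The implication (i)$\,\Rightarrow\,$(ii)$^*$ is the direct orbit calculation: for $(x_0,x_1,x_2)$ on an orbit the resultant identity forces $g(x_0,x_2)=0$, hence $h(x_0,x_2)=0$, so $x_2$ is a neighbour of $x_0$ in the correspondence, necessarily $x_{-1}$.

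The main obstacle is the degenerate stratum where the discriminant polynomial $B(x)^2-4A(x)C(x)$ of $h$ as a quadratic in $y$ acquires a repeated root, so that $C$ is singular and the translation $T_P$ is not directly available. This is exactly the stratum ruled out by the starred non-degeneracy hypotheses and by the exclusion $h\not\propto(x-y)^2$, but rigorously matching the geometric statements with these algebraic hypotheses requires a short case analysis (including the cases where $h$ is reducible or drops in degree). A more mechanical backup route is to establish (i)$\,\Leftrightarrow\,$(iv)$^*$ and (iv)$^*$$\,\Leftrightarrow\,$(vi) by pure polynomial computation: expand the resultant (\ref{scomp}) and equate coefficients against (\ref{sb}) to extract the single constraint (iv)$^*$, then recognise (iv)$^*$ as the vanishing of a $2\times 2$ determinant of coefficient vectors from which the cubics $r_1,r_2$ can be read off. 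This calculation is routine but bulky, and it is precisely this hidden $2\times 2$ structure that the geometric argument above explains.
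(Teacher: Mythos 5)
There are two genuine gaps. First, the implication graph you set up does not establish equivalence. You prove (vi)$\Rightarrow$\{(i),(ii)$^*$,(iii)$^*$,(iv)$^*$,(v)$^*$\}, (i)$\Rightarrow$(vi) and (iii)$^*$$\Rightarrow$(i), but (iv)$^*$ and (v)$^*$ (and, trivially, (ii)$^*$) are left as sinks: nothing is ever deduced \emph{from} them, so they are only shown to be consequences of idempotency, not equivalent to it. The step that closes this in the paper is exactly the one you omit: fixing $w,z$ with $(w-z)h(w,z)=\eta^2\neq0$ in (\ref{fe}) and reading off $r_1,r_2$ from (\ref{r12}) gives (v)$^*$$\Rightarrow$(vi) (coprimality coming from the starred non-degeneracy), while (iv)$^*$$\Rightarrow$(v)$^*$ is a direct substitution. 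Your Pl\"ucker observation is correct and actually makes these converses natural --- the constraint $c_1c_4+c_3^2=c_0c_5+c_2c_3$ is precisely the Pl\"ucker quadric $p_{01}p_{23}-p_{02}p_{13}+p_{03}p_{12}=0$ in the coefficients of $(x-y)h(x,y)$, so satisfying it means decomposability, i.e.\ the form (\ref{redrag}) --- but in your text this converse only appears as an unexecuted ``backup'', and (v)$^*$ never gets a return implication at all.

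Second, the main route for (i)$\Rightarrow$(vi) assumes $\{h=0\}$ is a smooth genus-one curve so that the composition of the two involutions is a translation, and you assert that the excluded stratum ``is exactly'' what the hypotheses rule out. That is not so: the starred condition only excludes $g\equiv 0$ in (\ref{scomp}) (one-variable factors or degree drop), and $h\not\propto(x-y)^2$ excludes only that single polynomial; neither forces the curve to be smooth or irreducible. Concretely, $h(x,y)=x^2+xy+y^2$ is idempotent (orbits $\{x,\omega x,\omega^2x\}$ with $\omega^3=1$), satisfies every hypothesis of the proposition, yet $\{h=0\}$ is a pair of lines, so there are no hyperelliptic involutions and no translation $T_P$; such degenerate idempotents are not pathologies but precisely the cases underlying $F_{II}$ and $F_{IV}$ (cf.\ Proposition \ref{YBid} and Table \ref{list}). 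The promised case analysis (generalised Jacobians / multiplicative and additive degenerations) is therefore essential and is not supplied, so (i)$\Rightarrow$(vi) is only established generically. Relatedly, your divisor argument for (vi)$\Rightarrow$(i) leaves unaddressed that the proportionality constant is non-zero, which Definition \ref{idempotent} requires; the paper gets this for free from the explicit identity (\ref{ec}) with $\mu=\res[r_1,r_2]\neq0$. By contrast the paper's proof is a purely algebraic cycle (i)$\Rightarrow$(ii)$^*$$\Rightarrow$(iii)$^*$$\Rightarrow$(iv)$^*$$\Rightarrow$(v)$^*$$\Rightarrow$(vi)$\Rightarrow$(i) that never needs smoothness; your Pl\"ucker reading of (iv)$^*$ and (v)$^*$ is a genuinely illuminating gloss on the paper's Cayley--Bezout remark, but as written the proposal is a plan with two load-bearing steps missing rather than a proof.
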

\begin{proof}
Deleting every other point from an orbit of $h$ yields an orbit of its self-composition $g$ defined by (\ref{scomp}), considering this and the fact that orbits of $g$ and $h$ coincide due to (i), leads to (ii)$^*$.

Existence of a non-constant orbit is a consequence of the assumption that $h$ is not a scalar multiple of $(x-y)^2$, therefore (ii)$^*$ implies (iii)$^*$.

To see that (iii)$^*$ implies (iv)$^*$ denote the elements of the three-periodic orbit mentioned in (iii)$^*$ by $x_0$, $x_1$ and $x_2$, at least two of which are distinct, and consider the resulting three equalities between polynomials in $y$:
\begin{equation}
\begin{split}
&{h(x_0,y)}=({c_3+c_4x_0+c_5x_0^2})(y-x_1)(y-x_2),\\
&{h(x_1,y)}=({c_3+c_4x_1+c_5x_1^2})(y-x_2)(y-x_0),\\
&{h(x_2,y)}=({c_3+c_4x_2+c_5x_2^2})(y-x_0)(y-x_1).
\end{split}
\end{equation}
Substituting the explicit form of $h$ (\ref{sb}) followed by elimination of $x_0$, $x_1$, $x_2$ yields the condition on the coefficients which appears in (iv)$^*$.

Substitution of (\ref{sb}) directly shows that polynomial identity (\ref{fe}) is satisfied as a consequence of the condition on the coefficients appearing in (iv)$^*$, so (iv)$^*$ implies (v)$^*$.

Choosing $w$ and $z$ to be fixed constants such that $\eta^2:=(w-z)h(w,z)\neq 0$ it is clear that
\begin{equation}
\frac{1}{\eta}(w-x)h(w,x) = r_1(x), \quad \frac{1}{\eta}(z-x)h(z,x) = r_2(x),\label{r12}
\end{equation}
are cubic polynomials in $x$, and writing (\ref{fe}) in terms of these one obtains (\ref{redrag}).
Furthermore $r_1,r_2$ have no common roots due to the non-degeneracy assumption on $h$, therefore (v)$^*$ implies (vi).

Assuming $h$ is of the form (\ref{redrag}) specified in (vi), a direct calculation shows that
\begin{equation}
\res_2[h(x,y),h(y,z),y]=(x-z)^2\mu h(x,z), \quad \mu:=\res_3[r_1(y),r_2(y),y].\label{ec}
\end{equation}
The scalar multiple $\mu$ appearing in (\ref{ec}) is non-zero if and only if $r_1,r_2$ in (\ref{redrag}) have no common roots (res$_3$ excluding also the possibility of $\infty$ being a common root, which corresponds to a drop in degree of both polynomials), therefore (vi) implies (i).
\end{proof}
Equivalences (ii)$^*$ and (iii)$^*$ connect the idempotent class of biquadratics with the trisection of the elliptic function periods.
The three-periodic orbits will play the central role in what follows.
They are specified completely by the un-ordered triplet of participating variables, orbits being determined only up to orientation.
Therefore within this class of biquadratics the un-ordered triplet itself may be referred to as the orbit, and, to reflect the symmetry, it is natural think of variables in an orbit as being arranged in a triangle as in Figure \ref{tri}.
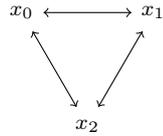
\begin{figure}[t]
\begin{center}
\begin{tikzpicture}
  \tikzstyle{every node}=[circle, fill=white, inner sep=0pt, minimum width=16pt]
  \node (a) at (150:1) {$x_0$};
  \node (b) at (270:1) {$x_2$} edge [<->] (a);
  \node (c) at (30:1) {$x_1$} edge [<->] (a) edge [<->] (b);
\end{tikzpicture}
\end{center}
\caption{An orbit of an idempotent biquadratic: $h(x_i,y)=0 \Leftrightarrow y\in\{x_j,x_k\}$, $\{i,j,k\} = \{0,1,2\}$.}
\label{tri}
\end{figure}

We remark that the idempotents occupy a similar special position in relation to the biquadratic correspondences as do M\"obius involutions to the group of M\"obius transformations.
This was observed in \cite{atk3} based on the similarity of characteristic features described by Proposition \ref{character} to characteristics of the M\"obius involutions, but can also be seen directly by considering formula (\ref{redrag}) in the degenerate case when $r_1$ and $r_2$ are degree two polynomials in which the correspondence defined by $h$ becomes a M\"obius involution.

The formula (\ref{redrag}) is well-known.
It was observed by Caley that this formula encodes Bezout's characterisation of the resultant of polynomials $r_1$ and $r_2$ (see \cite{hkw}), specifically the resultant is the determinant of the coefficient matrix of (\ref{redrag}).
The appearance of this formula here is interesting, but perhaps not completely surprising given the role played by the resultant in Definition \ref{idempotent}.

As a final remark, we note that the pattern of terms and variables in (\ref{fe}) is very similar to that of the three-term sigma-function identity (see p. 390 of \cite{han}), or Riemann relation, though an explanation for this is not so clear.

The characterising polynomial identity (\ref{fe}) leads directly to the following constructive existence-and-uniqueness result for the idempotent biquadratics.
\begin{lemma}\label{exists}
Given a pair of disjoint triplets taken from $\mathbb{C}\cup\{\infty\}$
\begin{equation}
x_0,x_1,x_2, \quad y_0,y_1,y_2,\label{triplets}
\end{equation}
there exists an idempotent biquadratic, which is unique up to a scalar multiple, for which these are orbits.
It may be expressed via the formula (\ref{redrag}) by choosing
\begin{equation}
r_1(x)=(x-x_0)(x-x_1)(x-x_2), \quad r_2(x)=(x-y_0)(x-y_1)(x-y_2).\label{r12roots}
\end{equation}
\end{lemma}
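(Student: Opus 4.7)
The plan is to verify existence by direct construction using formula (\ref{redrag}), and then to deduce uniqueness from the representation (vi) of Proposition \ref{character} by choosing base points cleverly.

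For existence, I take $r_1,r_2$ as in (\ref{r12roots}) and define $h$ by (\ref{redrag}). The numerator $r_1(x)r_2(y)-r_2(x)r_1(y)$ is antisymmetric in $x,y$, so $(x-y)$ divides it and $h$ is a polynomial; a degree count shows bidegree $(2,2)$, and symmetry of $h$ follows from antisymmetry of the numerator, so $h$ has the form (\ref{sb}). Since the two triplets are disjoint, $r_1,r_2$ share no common root, so implication (vi)$\Rightarrow$(i) of Proposition \ref{character} delivers idempotency (and in particular $h$ is not a scalar multiple of $(x-y)^2$, since that case corresponds to $r_1, r_2$ sharing all three roots). To see that $\{x_0,x_1,x_2\}$ is an orbit, substitute $x=x_i$ into (\ref{redrag}); using $r_1(x_i)=0$ and $r_2(x_i)\neq 0$ (by disjointness) one obtains
\begin{equation}
h(x_i,y)=-\frac{r_2(x_i)\,r_1(y)}{x_i-y}=r_2(x_i)\prod_{j\neq i}(y-x_j),
\end{equation}
which vanishes precisely at $y\in\{x_0,x_1,x_2\}\setminus\{x_i\}$; the same argument with the roles of $r_1$ and $r_2$ swapped handles $\{y_0,y_1,y_2\}$.

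For uniqueness, let $\tilde h$ be any idempotent biquadratic having the given triplets as orbits. First note that $\tilde h(x_0,y_0)\neq 0$: otherwise $x_0$ and $y_0$ would be adjacent points in a common orbit of $\tilde h$, contradicting the disjointness of the triplets. I can therefore apply the constructive step (v)$^*\Rightarrow$(vi) from the proof of Proposition \ref{character} with base points $w=x_0$, $z=y_0$: setting $\tilde\eta^2=(x_0-y_0)\tilde h(x_0,y_0)$, the cubics
\begin{equation}
\tilde r_1(x):=\frac{(x_0-x)\,\tilde h(x_0,x)}{\tilde\eta},\qquad \tilde r_2(x):=\frac{(y_0-x)\,\tilde h(y_0,x)}{\tilde\eta}
\end{equation}
reproduce $\tilde h$ through formula (\ref{redrag}). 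Because $\{x_0,x_1,x_2\}$ is an orbit of $\tilde h$, the polynomial $\tilde h(x_0,x)$ is of degree at most two in $x$ and vanishes at $x_1$ and $x_2$, so it is a scalar multiple of $(x-x_1)(x-x_2)$; combined with the explicit $(x_0-x)$ factor this forces $\tilde r_1$ to be a scalar multiple of $r_1$. Analogously $\tilde r_2$ is a scalar multiple of $r_2$. Substituting back into (\ref{redrag}) shows that $\tilde h$ is a scalar multiple of $h$.

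The main delicacy is in the uniqueness step: the choice of base points straddling the two given orbits is what allows $\tilde r_1$ and $\tilde r_2$ to be identified with the canonical cubics (\ref{r12roots}); existence and idempotency are then immediate consequences of the equivalences already established in Proposition \ref{character}.
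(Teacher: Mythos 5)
Your proof is correct and follows essentially the same route as the paper: existence by checking directly that the roots of the cubics (\ref{r12roots}) are orbits of the biquadratic (\ref{redrag}) (with idempotency from (vi)$\Rightarrow$(i) of Proposition \ref{character}), and uniqueness by invoking the characterising identity (\ref{fe})/representation (\ref{r12}) with base points $w=x_0$, $z=y_0$, so that the resulting cubics are forced to vanish on the given triplets. Your version is somewhat more explicit than the paper's (e.g.\ verifying $\tilde h(x_0,y_0)\neq 0$); the only small blemish is the parenthetical claim that $h\propto(x-y)^2$ would correspond to $r_1,r_2$ sharing all three roots, which is not the right reason (shared roots make the numerator vanish identically), though the needed conclusion follows at once from your own formula for $h(x_i,y)$, which vanishes at $y=x_1\neq x_0$.
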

\begin{proof}
For existence it is sufficient to check that the roots of the polynomials $r_1$ and $r_2$ give two particular orbits of $h$ in (\ref{redrag}).
Conversely, supposing (\ref{triplets}) are the orbits of a given idempotent biquadratic polynomial $h$, choose $w=x_0$ and $z=y_0$ in the identity characterising such polynomials (\ref{fe}). 
By inspection it can be seen that the polynomials $r_1$, $r_2$ appearing in (\ref{redrag}), (\ref{r12}), which is just a re-arrangement of (\ref{fe}), vanish on the orbits (\ref{triplets}) due to the choice of $w$ and $z$.
In other words, any such biquadratic $h$ coincides with (\ref{redrag}), (\ref{r12roots}) up to a scalar multiple.
\end{proof}
Thus two orbits can be chosen freely, which then determine an idempotent biquadratic polynomial uniquely up to a scalar multiple.
Clearly three orbits cannot be chosen freely.

\section{Rationality of the three-orbit constraint}\label{IB2}
We connect the previously defined class of algebraic correspondences with the scalar quadrirational dynamical systems, this emerges directly by looking more closely at the three-orbit constraint.
\subsection{The three-orbit constraint}

The following form of this constraint is a straightforward consequence of Lemma \ref{exists}.
\begin{prop}\label{constraint}
Denote three elements in the vector space of cubic polynomials by $r_1$, $r_2$ and $r_3$.
The roots of these three polynomials are three distinct orbits of an idempotent biquadratic correspondence if and only if there are no pair-wise common roots and they are linearly dependent:
\begin{equation}
\exists \gamma_1,\gamma_2,\gamma_3 \in\mathbb{C}\setminus \{0\} \ {\textrm{ such that }}\  \gamma_1r_1(x)+\gamma_2r_2(x)+\gamma_3r_3(x)=0 \quad \forall x.\label{nsc}
\end{equation}
\end{prop}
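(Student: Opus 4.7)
The plan is to leverage Lemma \ref{exists} and formula (\ref{redrag}) in both directions, translating between properties of the cubic polynomials and properties of the orbits of the associated idempotent biquadratic.

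For the forward direction, suppose the three disjoint triples of roots are orbits of an idempotent biquadratic $h$. By Lemma \ref{exists} applied to the first two orbits, $h$ admits the representation (\ref{redrag}) in terms of $r_1$ and $r_2$, so
\[
(z_0 - y)\,h(z_0, y) = r_1(z_0)\,r_2(y) - r_2(z_0)\,r_1(y)
\]
for any root $z_0$ of $r_3$. The right-hand side is a cubic in $y$ vanishing on the full orbit of $z_0$, which by hypothesis is exactly the root set of $r_3$. Hence it is proportional to $r_3(y)$, giving a linear relation $r_2(z_0)\,r_1 - r_1(z_0)\,r_2 + \mu\,r_3 = 0$ in the vector space of cubics. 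The coefficients $r_1(z_0)$ and $r_2(z_0)$ are non-zero because the orbits are pairwise disjoint, and the third coefficient $\mu$ is non-zero because otherwise $r_1$ and $r_2$ would be proportional, contradicting the disjointness of their root sets.

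For the reverse direction, use (\ref{redrag}) with $r_1$ and $r_2$ (which have no common roots by hypothesis) to define an idempotent biquadratic $h$ whose orbits include the root sets of $r_1$ and $r_2$. Let $z_0$ be a root of $r_3$. The linear dependence together with $r_3(z_0)=0$ gives $\gamma_1 r_1(z_0) + \gamma_2 r_2(z_0) = 0$, so the identity
\[
(z_0 - y)\,h(z_0, y) = r_1(z_0)\,r_2(y) - r_2(z_0)\,r_1(y)
\]
re-expresses the right-hand side as a non-zero scalar multiple of $\gamma_1 r_1(y) + \gamma_2 r_2(y) = -\gamma_3\,r_3(y)$. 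Dividing through by $(z_0 - y)$ shows that $h(z_0, y)$ is a non-zero scalar multiple of $(y-z_1)(y-z_2)$, so the orbit of $z_0$ under $h$ is exactly the root set of $r_3$.

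The main subtlety will be verifying non-vanishing of the three proportionality constants in both directions, particularly the scalar $\mu$ arising in the forward implication; this comes down to the disjointness hypothesis on the root sets and the fact that $r_1, r_2$ are linearly independent when they share no roots. A secondary point to address is the degenerate case in which one of the cubics drops in degree, corresponding to an orbit passing through $\infty$; this can be handled either by a preliminary M\"obius change of the independent variable that clears $\infty$ from all three orbits, or by reading the cubics projectively as sections of $\mathcal{O}(3)$ on $\mathbb{C}\cup\{\infty\}$ so that the linear dependence relation remains meaningful.
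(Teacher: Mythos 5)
Your argument is correct and is essentially the fleshed-out version of the paper's own route: the paper offers no detailed proof, stating only that the proposition is a straightforward consequence of Lemma \ref{exists}, and your two directions do exactly that by reading the linear-dependence relation off the Cayley--Bezout form (\ref{redrag}) evaluated at a root of $r_3$. The non-vanishing checks and the remark on degree-dropping cubics (orbits through $\infty$) are sensible additions, and the only elided detail --- that distinct orbits are automatically disjoint, and that the computation at $z_0$ should be repeated at $z_1,z_2$ or replaced by an appeal to three-periodicity --- is immediate.
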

An important feature of the three-orbit constraint is its permutation symmetry, by which we mean the group of permutations of the variables under which the constraint is invariant.
The full permutation symmetry group, which is clear from Proposition \ref{constraint}, is obscured when the constraint is written in the following more explicit form, which nevertheless complements Proposition \ref{constraint} by exhibiting its rationality.
\begin{prop}\label{constraint2}
Three disjoint triplets taken from $\mathbb{C}\cup\{\infty\}$,
\begin{equation}
x_0,x_1,x_2, \quad y_0,y_1,y_2, \quad z_0,z_1,z_2,\label{tt}
\end{equation}
are orbits of an idempotent biquadratic polynomial if and only if they satisfy the conditions
\begin{equation}
h(x_0,x_1;z_0,z_1,z_2;y_0,y_1,y_2)=0, \quad h(y_0,y_1;z_0,z_1,z_2;x_0,x_1,x_2)=0, \label{ratmap}
\end{equation}
where $h$ is the polynomial defined by expression
\begin{multline}
h(x,y;x_0,x_1,x_2;y_0,y_1,y_2):=\\
(x-y)^{-1}[(x-x_0)(x-x_1)(x-x_2)(y-y_0)(y-y_1)(y-y_2)-\\(y-x_0)(y-x_1)(y-x_2)(x-y_0)(x-y_1)(x-y_2)],\label{genh}
\end{multline}
from (\ref{redrag}), (\ref{r12roots}), cf. Lemma \ref{exists}. 
\end{prop}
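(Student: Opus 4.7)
The plan is to reduce to Proposition \ref{constraint}, which characterises three-orbit status as linear dependence of the cubics $r_1, r_2, r_3$ (with pairwise disjoint root sets) built from the triplets as in (\ref{r12roots}). Direct substitution into (\ref{genh}) shows that, under the standing assumption of distinct triplet entries, the two conditions in (\ref{ratmap}) are equivalent to the product equalities $r_3(x_0) r_2(x_1) = r_3(x_1) r_2(x_0)$ and $r_3(y_0) r_1(y_1) = r_3(y_1) r_1(y_0)$.

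The forward direction is then immediate. If the triplets are three orbits of a common idempotent biquadratic, Proposition \ref{constraint} supplies nonzero scalars $\gamma_1, \gamma_2, \gamma_3$ with $\gamma_1 r_1 + \gamma_2 r_2 + \gamma_3 r_3 \equiv 0$. Evaluating at $x_0$ and at $x_1$, where $r_1$ vanishes, yields $r_3(x_i)/r_2(x_i) = -\gamma_2/\gamma_3$ for $i \in \{0, 1\}$, which is the first product equality; the second follows symmetrically by evaluating at $y_0, y_1$. Disjointness of the triplets guarantees nonzero denominators.

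For the reverse direction, the strategy is to construct a nontrivial linear relation explicitly. Introduce $\mu := r_3(x_0)/r_2(x_0)$ and $\nu := r_3(y_0)/r_1(y_0)$: both are well-defined and nonzero by disjointness, and by the two product equalities $\mu$ also equals $r_3(x_1)/r_2(x_1)$ while $\nu$ also equals $r_3(y_1)/r_1(y_1)$. A short side argument eliminates the degenerate case $\mu = 1$: then $r_3 - r_2$ has vanishing leading coefficient and roots $x_0, x_1$, so it equals $c(x-x_0)(x-x_1)$ with $c \neq 0$, but then $r_3(y_i)/r_1(y_i) = c/(y_i - x_2)$ combined with the second product equality forces $y_0 = y_1$, contradicting distinctness; the case $\nu = 1$ is symmetric. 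With $\mu, \nu \neq 1$ one obtains factorizations $r_3 - \mu r_2 = (1-\mu)(x-x_0)(x-x_1)(x-x_2')$ and $r_3 - \nu r_1 = (1-\nu)(x-y_0)(x-y_1)(x-y_2')$ for suitable $x_2', y_2' \in \mathbb{C}$.

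The crux is to identify $x_2' = x_2$ and $y_2' = y_2$, which delivers the relation $(1-\mu) r_1 + \mu r_2 - r_3 \equiv 0$ with all three coefficients nonzero and completes the argument via Proposition \ref{constraint}. Equating the two expressions for $r_3$ and rearranging produces a polynomial identity whose two sides share leading coefficient $(1-\mu) - \nu$, with the left vanishing on $\{x_0, x_1\}$ and the right on $\{y_0, y_1\}$. I expect the main technical step to be the case split on this common leading coefficient. If it is nonzero, the two sides are cubics whose root sets are forced to coincide, which is precluded by disjointness of the $x$- and $y$-triplets. Hence $\mu + \nu = 1$, both sides collapse to quadratic multiples of $(x-x_0)(x-x_1)$ and $(x-y_0)(x-y_1)$ respectively, and the same disjointness obstruction forces both to vanish identically; since $\mu, \nu \neq 0$, this yields $x_2' = x_2$ and $y_2' = y_2$ as required.
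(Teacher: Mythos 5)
Your proof is correct (at the same level of genericity the paper itself works at: finite, pairwise-distinct triplet entries, which is what legitimises passing from (\ref{ratmap}) to the two ratio equalities $r_3(x_0)/r_2(x_0)=r_3(x_1)/r_2(x_1)$ and $r_3(y_0)/r_1(y_0)=r_3(y_1)/r_1(y_1)$), and it follows the same skeleton as the paper: everything is routed through Proposition \ref{constraint}, and the relation you land on, $(1-\mu)r_1+\mu r_2-r_3\equiv 0$ with $\mu+\nu=1$, is precisely the paper's choice $\gamma_3=-1$, $\gamma_2=r_3(x_1)/r_2(x_1)$, $\gamma_1=r_3(y_1)/r_1(y_1)$ in (\ref{gamchoice}). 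The differences are in execution. For necessity the paper argues via the uniqueness part of Lemma \ref{exists} (the two equations in (\ref{ratmap}) literally say that $x_0,x_1$, respectively $y_0,y_1$, lie on an orbit of the idempotent determined by the other two triplets), whereas you evaluate the dependence relation at the roots; both are fine. For sufficiency the paper asserts that, after eliminating $x_2$ and $y_2$ rationally via (\ref{ratmap}), the chosen combination vanishes identically ``by calculation''; you replace that computation with a structural argument: factor $r_3-\mu r_2$ and $r_3-\nu r_1$, exclude $\mu=1$ and $\nu=1$, compare leading coefficients to force $\mu+\nu=1$, then use disjointness of the root sets to force $x_2'=x_2$ and $y_2'=y_2$. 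What this buys is a hand-checkable, calculation-free verification that also makes visible the hidden constraint $\mu+\nu=1$; the price is explicit non-degeneracy bookkeeping ($x_0\neq x_1$, $y_0\neq y_1$, $\mu,\nu\notin\{0,1\}$), which the paper's computational shortcut keeps implicit (it, too, tacitly assumes (\ref{ratmap}) can be solved for $x_2,y_2$ and that the entries are finite, so your standing assumption is not a loss relative to the original).
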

\begin{proof}
According to Lemma \ref{exists}, the equation on the left in (\ref{ratmap}) expresses that $x_0$ and $x_1$ belong to an orbit of the idempotent biquadratic polynomial determined by the pair of orbits $y_0,y_1,y_2$ and $z_0,z_1,z_2$.
Similarly, the equation on the right in (\ref{ratmap}) expresses that $y_0$ and $y_1$ belong to an orbit of the idempotent biquadratic polynomial determined by the pair of orbits $x_0,x_1,x_2$ and $z_0,z_1,z_2$.
The conditions (\ref{ratmap}) are therefore necessary for the claimed existence, and it remains to demonstrate that they are also sufficient.

According to Proposition \ref{constraint} it is sufficient that conditions (\ref{ratmap}) lead to the linear dependence of the three polynomials
\begin{equation}
\begin{split}
r_1(x)&=(x-x_0)(x-x_1)(x-x_2),\\
r_2(x)&=(x-y_0)(x-y_1)(x-y_2),\\
r_3(x)&=(x-z_0)(x-z_1)(x-z_2).
\end{split}
\end{equation}
This linear dependence can be verified by calculation as follows. Choose
\begin{equation}
\gamma_3=-1, \quad \gamma_2=\frac{r_3(x_1)}{r_2(x_1)}, \quad \gamma_1=\frac{r_3(y_1)}{r_1(y_1)},\label{gamchoice}
\end{equation}
in the polynomial appearing in (\ref{nsc}).
This polynomial then vanishes identically upon substitution of $x_2$ and $y_2$ determined by (\ref{ratmap}). 
\end{proof}

It is clear from its general form (\ref{ratmap}) that amongst the participating variables (\ref{tt}), the three-orbit constraint determines $y_2$ and $x_2$ rationally from the remaining seven variables.
Combining this rationality with the particular permutation symmetries $x_1\leftrightarrow x_2$ and $y_1\leftrightarrow y_2$ (which are clear from Proposition \ref{constraint}), we have the following.
\begin{corollary}[following from Propositions \ref{constraint} and \ref{constraint2}]\label{qrr}
With $x_0$, $y_0$, $z_0$, $z_1$ and $z_2$ fixed, the three-orbit constraint (\ref{ratmap}), visualised by assigning the remaining four variables $x_1$, $x_2$, $y_1$ and $y_2$ to edges of a quad as in Figure \ref{quad}, determines any pair of adjacent variables rationally from those on the opposing edges.
\end{corollary}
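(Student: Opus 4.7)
The plan is to exploit the permutation symmetries of the three-orbit constraint given by Proposition~\ref{constraint} to re-express (\ref{ratmap}) in a form where the desired rationality becomes evident by inspection. Specifically, Proposition~\ref{constraint} characterises the constraint as the linear dependence of the three cubics $r_1,r_2,r_3$, and since each $r_i$ depends only on the root set of its triplet, the whole constraint is manifestly invariant under the swaps $x_1\leftrightarrow x_2$ and $y_1\leftrightarrow y_2$. Combined with Proposition~\ref{constraint2}, this yields an equivalent system in which the swap $x_1\leftrightarrow x_2$ has been applied to the first equation of (\ref{ratmap}):
\begin{equation*}
h(x_0,x_2;z_0,z_1,z_2;y_0,y_1,y_2)=0,\qquad h(y_0,y_1;z_0,z_1,z_2;x_0,x_1,x_2)=0.
\end{equation*}

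Next, I read off the degrees from the explicit formula (\ref{genh}): each root variable from a triplet in the third or fourth argument slot of $h$ enters through a single linear factor in each of the two terms of the numerator. Hence the first equation above is linear in $y_1$, while the second is linear in $x_1$. Solving the first for $y_1$ produces a rational expression in $(x_0,y_0,z_0,z_1,z_2,x_2,y_2)$; substituting into the second, which is linear in $x_1$, gives $x_1$ rationally in the same variables. This establishes the corollary for the adjacent pair $(x_1,y_1)$ opposite $(x_2,y_2)$; the remaining three adjacent pairs are obtained by applying the appropriate combination of the two swaps.

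The only step requiring any care is the equivalence of the swapped system with the original (\ref{ratmap}), but this is essentially automatic: the linear-dependence formulation of Proposition~\ref{constraint} is invariant under reordering roots within each triplet, so Proposition~\ref{constraint2} applies equally to any such reordering. The whole argument therefore reduces to the two cited propositions plus a degree check in (\ref{genh}).
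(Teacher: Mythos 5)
Your proposal is correct and essentially reproduces the paper's own argument: the paper reads the rationality directly off (\ref{ratmap}) (the first equation is linear in $y_2$, the second in $x_2$, so the system as written already yields one adjacent pair rationally) and then invokes the within-triplet swaps $x_1\leftrightarrow x_2$, $y_1\leftrightarrow y_2$ guaranteed by Proposition \ref{constraint} to cover the remaining adjacent pairs. Your only deviation is cosmetic—you apply a swap first and then solve for $(x_1,y_1)$ via a linear elimination rather than solving the decoupled original system for $(x_2,y_2)$—so the content and the reliance on Propositions \ref{constraint} and \ref{constraint2} are the same.
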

\begin{figure}[t]
\begin{center}
\begin{tikzpicture}[thick,scale=0.9]
  \draw (0,0) -- node[circle,fill=white,inner sep=1pt]{$x_2$} (1.5,1.5) -- node[circle,fill=white,inner sep=1pt]{$y_2$} (3,0) -- node[circle,fill=white,inner sep=1pt]{$x_1$} (1.5,-1.5) -- node[circle,fill=white,inner sep=1pt]{$y_1$} (0,0) -- cycle;
\end{tikzpicture}
\end{center}
\caption{Quadrirationality: A system of equations in four variables assigned to edges of a quad, with the feature that variables on any pair of adjacent edges are determined rationally from the variables on the opposing edges.}
\label{quad}
\end{figure}
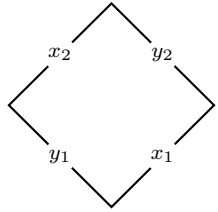
By definition, this system therefore falls within the {\it quadrirational} class introduced in \cite{ABSf}.

\subsection{Equivalence with the quadrirational Yang-Baxter maps}
Up to a M\"obius change of variables, there are five non-degenerate scalar quadrirational systems \cite{ABSf}, which have been designated $F_{I},\ldots,F_{V}$.
To make contact with the canonical forms of those systems, we proceed first by using the action of the M\"obius group to reduce the 5-parameter quadrirational system described in Corollary \ref{qrr} to a two-parameter family.
Such transformations can bring the variables $z_0,z_1,z_2$ of the third orbit to some fixed canonical values, but will not alter the size of the set $\{z_0,z_1,z_2\}$, so this procedure leads to three distinct cases depending on this size.
\begin{prop}\label{conform}
The constraint on three orbits of an idempotent biquadratic polynomial (\ref{ratmap}), in the case where variables of the third orbit, $(z_0,z_1,z_2)$, are set equal to $(0,1,\infty)$, takes the form
\begin{equation}
(1-x_0)(1-x_1)(1-x_2) = (1-y_0)(1-y_1)(1-y_2),\quad x_0x_1x_2=y_0y_1y_2,\label{F1a}
\end{equation}
in the case $(z_0,z_1,z_2)=(0,\infty,\infty)$ it takes the form
\begin{equation}
x_0x_1x_2=y_0y_1y_2, \quad x_0+x_1+x_2=y_0+y_1+y_2,\label{F2a}
\end{equation}
and in the case $(z_0,z_1,z_2)=(\infty,\infty,\infty)$ it takes the form
\begin{equation}
x_0+x_1+x_2=y_0+y_1+y_2,\quad x_0^2+x_1^2+x_2^2=y_0^2+y_1^2+y_2^2.\label{F4a}
\end{equation}
\end{prop}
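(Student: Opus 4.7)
The plan is to invoke Proposition \ref{constraint}, which recasts the three-orbit condition as linear dependence, with all three scalar coefficients non-zero, of the cubic polynomials $r_1, r_2, r_3$ having the three triplets as roots. This formulation is much cleaner than the rational identity (\ref{ratmap}) for handling the degenerations to roots at infinity appearing in the proposition. I work inside the four-dimensional vector space of polynomials in $x$ of degree at most three, and interpret a root at infinity as simply lowering the degree of $r_3$, so that the three specified choices of $(z_0,z_1,z_2)$ correspond to $r_3(x)=x(x-1)$, $r_3(x)=x$, and $r_3(x)=1$ respectively.

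In each case I write $\gamma_1 r_1(x)+\gamma_2 r_2(x)+\gamma_3 r_3(x)\equiv 0$ as a polynomial identity and extract equations on the elementary symmetric polynomials of the $x$- and $y$-triplets. Since $r_3$ has degree strictly less than three in every case, the $x^3$ coefficient always forces $\gamma_2=-\gamma_1$. For the first case, evaluating the identity at the roots $x=0$ and $x=1$ of $r_3$ bypasses $\gamma_3$ entirely and yields directly the two equations of (\ref{F1a}); the remaining coefficient then merely determines $\gamma_3$. For the second case, the $x^2$ coefficient gives $\sum x_i=\sum y_i$ while evaluation at $x=0$ gives $\prod x_i=\prod y_i$, which is (\ref{F2a}). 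For the third case, the $x^2$ and $x^1$ coefficients give $e_1(x)=e_1(y)$ and $e_2(x)=e_2(y)$, which via the Newton identity $p_2=e_1^2-2e_2$ are together equivalent to the conditions of (\ref{F4a}).

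For the converse, in each case the derived relations supply enough equations to exhibit a non-trivial triple $(\gamma_1,\gamma_2,\gamma_3)$, and the disjointness hypothesis ensures $\gamma_3\neq 0$: were $\gamma_3$ to vanish, $r_1$ and $r_2$ would be proportional monic cubics, hence identical, contradicting disjointness of the $x$- and $y$-triplets. The only genuine subtlety in the whole argument is the limiting interpretation of $r_3$ when some of its roots escape to infinity, and this is the main place where careful bookkeeping is required; once that is pinned down, each of the three cases reduces to a short, essentially mechanical comparison of polynomial coefficients.
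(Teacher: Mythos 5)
Your proof is correct, but it takes a different route from the paper's. The paper's own proof is a direct verification: it substitutes the canonical values of $(z_0,z_1,z_2)$ into the explicit rational form (\ref{ratmap}) (interpreting the values $\infty$ by limits), re-arranges both (\ref{ratmap}) and the claimed canonical systems so that $x_2$ and $y_2$ appear as rational expressions in the remaining variables, and compares — essentially a calculation, with no appeal to Proposition \ref{constraint}. You instead pass through the chain (\ref{ratmap}) $\Leftrightarrow$ three orbits of an idempotent (Proposition \ref{constraint2}) $\Leftrightarrow$ linear dependence of $r_1,r_2,r_3$ with no pairwise common roots (Proposition \ref{constraint}), and then carry out the coefficient/evaluation bookkeeping on $\gamma_1 r_1+\gamma_2 r_2+\gamma_3 r_3\equiv 0$ with $r_3$ of reduced degree. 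Your interpretation of roots at infinity as degree drops is consistent with how the paper itself treats $\infty$ (compare the remark on $\res_3$ in the proof of Proposition \ref{character}), your coefficient extraction in each of the three cases is right, and your converse correctly produces a non-trivial dependence with $\gamma_3\neq 0$ from disjointness; you could add one sentence noting that disjointness likewise guarantees the ``no pairwise common roots'' clause of Proposition \ref{constraint} (the roots at infinity of the degenerate $r_3$ are not roots of the monic cubics $r_1,r_2$), which is needed for the full equivalence. What your route buys is a cleaner, symmetric treatment of the degenerations to $\infty$ and a transparent two-way equivalence; what the paper's direct comparison buys is an explicit link between the canonical systems and the rational form (\ref{ratmap}) that is used verbatim in the later quadrirationality arguments.
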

\begin{proof}
This is by calculation.
The canonical forms (\ref{F1a}), (\ref{F2a}) and (\ref{F4a}) can be compared with (\ref{ratmap}) when $(z_0,z_1,z_2)$ are taken to be the respective canonical values.
To make the comparison it is useful to re-arrange the above expressions to give $x_2$ and $y_2$ as rational expressions in the remaining variables.
\end{proof}

The second step, completing the connection from the quadrirational system described in Corollary \ref{qrr} to the canonical forms of \cite{ABSf} (in the notation of Section \ref{F1Q4}), involves just making some substitutions and rearrangements.
\begin{prop}[Connection with canonical forms in \cite{ABSf}]\label{YBid}
On substitution of $\{x_0,x_1,x_2\}=\{\beta,u,\wh{u}\}$ and  $\{y_0,y_1,y_2\}=\{\alpha,v,\wt{v}\}$ and subsequent re-arrangement, the system (\ref{F1a}) is equivalent to $F_I$:
\begin{equation}
\begin{split}
\wh{u} &= \alpha v \frac{(1-\beta)u+\beta-\alpha-(1-\alpha)v}{(1-\alpha)\beta u+(\alpha-\beta)uv-(1-\beta)\alpha v},\\
\wt{v} &= \beta u \frac{(1-\alpha)v+\alpha-\beta-(1-\beta)u}{(1-\beta)\alpha v+(\beta-\alpha)vu-(1-\alpha)\beta u}.
\end{split}\label{listF1}
\end{equation}
On substitution of $\{x_0,x_1,x_2\}=\{\beta,\alpha u,\alpha \wh{u}\}$ and $\{y_0,y_1,y_2\}=\{\alpha,\beta v,\beta \wt{v}\}$, system (\ref{F2a}) is equivalent to $F_{II}$:
\begin{equation}
\begin{split}
\wh{u} &= \frac{v(\alpha u-\beta v+\beta-\alpha)}{\beta(u-v)},\\
\wt{v} &= \frac{u(\alpha u-\beta v+\beta-\alpha)}{\alpha(u-v)}.
\end{split}\label{listF2}
\end{equation}
On substitution of $\{x_0,x_1,x_2\}=\{\beta,\alpha-u,\alpha-\wh{u}\}$ and $\{y_0,y_1,y_2\}=\{\alpha,\beta-v,\beta-\wt{v}\}$, system (\ref{F4a}) is equivalent to $F_{IV}$:
\begin{equation}
\begin{split}
\wh{u} &= v \left( 1 + \frac{\beta-\alpha}{u-v}\right),\\
\wt{v} &= u \left( 1 + \frac{\beta-\alpha}{u-v}\right).
\end{split}\label{listF4}
\end{equation}
\end{prop}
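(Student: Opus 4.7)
The plan is to verify each of the three equivalences by direct substitution into the canonical forms (\ref{F1a}), (\ref{F2a}), (\ref{F4a}) from Proposition \ref{conform}, followed by elimination of one variable. Since each of those canonical forms is symmetric in the entries of its two triplets, the set-valued assignments such as $\{x_0,x_1,x_2\}=\{\beta,u,\wh{u}\}$ are well defined, and the three cases are independent.

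For the $F_I$ case, the substitution converts (\ref{F1a}) into
\begin{equation}
\beta u \wh{u} = \alpha v \wt{v}, \qquad (1-\beta)(1-u)(1-\wh{u}) = (1-\alpha)(1-v)(1-\wt{v}).
\end{equation}
Solving the first for $\wh{u}$ and substituting into the second yields a relation that is linear in $\wt{v}$; clearing denominators and rearranging reproduces the second line of (\ref{listF1}). The formula for $\wh{u}$ then follows by the involutive $(u,\wh{u},\alpha)\leftrightarrow(v,\wt{v},\beta)$ symmetry enjoyed by both (\ref{F1a}) and $F_I$.

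The $F_{II}$ and $F_{IV}$ cases proceed in the same way. The $F_{II}$ substitution reduces (\ref{F2a}) to $\alpha u \wh{u} = \beta v \wt{v}$ together with $\alpha(u+\wh{u})-\beta(v+\wt{v})=\alpha-\beta$, and elimination of $\wh{u}$ gives a linear relation for $\wt{v}$ matching (\ref{listF2}). The $F_{IV}$ substitution gives a linear relation $u+\wh{u}-v-\wt{v}=\alpha-\beta$ together with the identity on sums of squares; substituting the linear relation into the latter collapses it to a linear equation in $\wt{v}$, and the resulting rational expression simplifies to the form in (\ref{listF4}). In every case the symmetry argument above delivers $\wh{u}$ from $\wt{v}$ without further calculation.

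The only obstacle is algebraic bookkeeping: matching numerators and denominators, including overall signs, with the printed forms of $F_I,F_{II},F_{IV}$. Rationality of the eliminated system is guaranteed a priori by Corollary \ref{qrr}, so no conceptual difficulty arises — the substitutions are simply Möbius normalizations placing the two free orbits in the parametrization adopted in the classification of \cite{ABSf}.
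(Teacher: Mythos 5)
Your proposal is correct and is essentially the paper's own (implicit) argument: the paper offers no separate proof beyond the remark that the identification ``involves just making some substitutions and rearrangements'', and your plan—substitute, eliminate one variable from the symmetric pair of equations, and recover the partner formula from the triplet-swap involution $u\leftrightarrow v$, $\wh{u}\leftrightarrow\wt{v}$, $\alpha\leftrightarrow\beta$—is exactly that calculation, and it goes through for (\ref{F1a}) and (\ref{F4a}) precisely as you describe. One caveat for when you do the $F_{II}$ bookkeeping: the stated substitution reduces (\ref{F2a}) to $\alpha u\wh{u}=\beta v\wt{v}$ and $\alpha(u+\wh{u})-\beta(v+\wt{v})=\alpha-\beta$ as you say, but elimination then yields $\wh{u}=v(\alpha u-\beta v+\beta-\alpha)/\bigl(\alpha(u-v)\bigr)$ and $\wt{v}=u(\alpha u-\beta v+\beta-\alpha)/\bigl(\beta(u-v)\bigr)$ (the standard normalization of $F_{II}$ in \cite{ABSf}), i.e.\ (\ref{listF2}) with the $\alpha$ and $\beta$ prefactors interchanged—indeed the printed (\ref{listF2}) violates the product relation $\alpha u\wh{u}=\beta v\wt{v}$ forced by the substitution—so the ``match'' you assert holds only up to this prefactor swap, which is evidently a misprint in (\ref{listF2}) rather than a defect of your derivation.
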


The dynamics defined by the maps $F_I$, $F_{II}$ and $F_{IV}$ are therefore associated with a family of idempotent biquadratic correspondences that have a common fixed orbit, the distinction between the three maps being the number of distinct values in that orbit.
Besides this algebraic interpretation, the most salient new feature of the maps which is evident on the level of the three-orbit constraint is the previously hidden permutation symmetry group, which extends the symmetries of the quad (Figure \ref{quadgraph}(b)) replacing it with a triplet-pair.
The remaining quadrirational maps from \cite{ABSf}, namely $F_{III}$ and $F_{V}$, are obtained by limiting procedures that erode the connection to the idempotent biquadratics and break the additional permutation symmetry.
This means those maps do not arise naturally in the present context, and the subsequent theory developed here will not apply directly to them.

\subsection{Yang-Baxter maps and the hypercube}\label{NCUBE}
It is a combinatorial fact that the planar quad-graph with $n$ labelled characteristics (cf. Section \ref{F1Q4}), can be embedded in an $n$-dimensional hypercube (or $n$-cube), provided none of the characteristics self-intersect \cite{AdVeQ}.
The $n$-cube is a non-planar quad-graph domain.
For systems (like those listed in Proposition \ref{YBid}) that have the symmetry of the quad, the Yang-Baxter property is exactly the condition that a solution on the quad-graph can be extended to a solution on the embedding hypercube; the hypercube is therefore the more general, and also a more regular lattice geometry to associate with these systems.
Because the hypercube lattice geometry will play a key role subsequently in this paper, we formulate the following proposition giving coordinatisation and initial value problem.
The proposition is stated in terms of the idempotent-biquadratic three-orbit constraint, but is a standard consequence of the Yang-Baxter property of the equivalent maps listed in Proposition \ref{YBid}.
\begin{prop}\label{YBP}
Denote $n+n2^{n-1}$ variables by
\begin{equation}
y_i, \quad x_i^I, \qquad i \in \{1,\ldots,n\}, \quad I\subseteq \{1,\ldots,n\} \setminus \{i\}.
\end{equation}
Choose any $z_0,z_1,z_2\in\mathbb{C}\cup\{\infty\}$, and impose the idempotent-biquadratic three-orbit constraint (cf. Propositions \ref{constraint2} and \ref{conform}) on all sets of variables
\begin{equation}
\qquad y_i,x_j^I,x_j^{I\cup\{i\}}, \quad y_j,x_i^I,x_i^{I\cup\{j\}}, \quad z_0,z_1,z_2, \label{cc}
\end{equation}
for
\begin{equation}
i,j\in\{1,\ldots,n\}, \quad i\neq j, \quad I\subseteq \{1,\ldots,n\} \setminus \{i,j\}.
\end{equation}
Then variables remaining from the whole set are rational functions of the unconstrained subset 
\begin{equation}
x_1^{\{\}},\ldots,x_n^{\{\}},y_1,\ldots,y_n.\label{cubeivp}
\end{equation}
\end{prop}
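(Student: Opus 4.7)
The plan is to view the $n$-cube as built up from its 2-faces, each carrying an instance of the idempotent-biquadratic three-orbit constraint, and to propagate the initial data outward from the origin vertex by repeated application of the quadrirationality established in Corollary \ref{qrr}. Geometrically, think of $x_i^I$ as the variable on the edge in direction $i$ emanating from the vertex labelled $I$, with $y_i$ attached to the $i$-th parallel family of edges and $z_0,z_1,z_2$ to a common hidden third orbit. The 2-face spanned by directions $i,j$ at base $I$ then involves exactly the four edge variables $x_i^I, x_j^I, x_i^{I\cup\{j\}}, x_j^{I\cup\{i\}}$, together with $y_i,y_j$ and $z_0,z_1,z_2$, and the constraint (\ref{cc}) is precisely the quadrirational system of Figure \ref{quad}.

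Next I would propagate outward by induction on $|I|$. At $I=\{\}$, Corollary \ref{qrr} applied to each pair of directions $i,j$ determines $x_i^{\{j\}}$ and $x_j^{\{i\}}$ rationally from the data $x_i^{\{\}}, x_j^{\{\}}, y_i, y_j$ (and the fixed $z_0,z_1,z_2$). At the inductive step, the 2-face at base $I$ in directions $i,j$ takes already computed values $x_i^I, x_j^I$ and produces $x_i^{I\cup\{j\}}, x_j^{I\cup\{i\}}$, again rationally. Iterating gives a rational expression for every $x_i^I$ in the data (\ref{cubeivp}), \emph{provided} the answer does not depend on the particular chain of 2-faces used to reach the given edge.

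This path-independence is the essential content of the proposition, and is where the main obstacle lies. It suffices to verify consistency on every 3-subcube, indexed by a triple of directions $i,j,k$ and a base $I\subseteq\{1,\ldots,n\}\setminus\{i,j,k\}$: any two propagation paths from the origin to a given edge of the $n$-cube can be connected by a sequence of local modifications each supported on a single 3-subcube, by the standard reduced-word argument in the symmetric group. On a 3-subcube, agreement of the two paths is precisely the Yang-Baxter property of the underlying scalar map, which by Proposition \ref{YBid} is one of $F_I$, $F_{II}$, $F_{IV}$ according to the multiplicity of $\{z_0,z_1,z_2\}$, and this property is established for all three maps in \cite{ABSf}. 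Granting this, the propagation is globally well-defined and rational, which proves the claim.
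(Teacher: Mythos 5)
Your argument is correct and is essentially the route the paper takes: the paper offers no detailed proof, asserting the result as a standard consequence of the Yang-Baxter property of the equivalent maps of Proposition \ref{YBid} (case $n=3$ being equivalent to that property, $n>3$ following from it together with the quad symmetry), which is exactly the propagation-plus-3-subcube-consistency argument you spell out. Your write-up just makes the standard induction and path-independence reduction explicit.
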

Here the subscripts correspond to axes, whilst the superscripts $I\subseteq \{1,\ldots,n\}$ are in correspondence with vertices of the $n$-cube.
Thus the variable $x_i^I$ is associated with an edge of the $n$-cube, aligned in the same direction as axis $i$, and connected to vertex $I$.
The variable $y_i$ is associated with characteristic $i$ of the $n$-cube, which is the $(n-1)$-dimensional section that bisects all edges oriented in direction $i$.
The constraint on variables (\ref{cc}) is associated with the quad formed by the vertices $I$, $I\cup\{i\}$, $I\cup\{j\}$ and $I\cup\{i,j\}$.

We remark that in Proposition \ref{YBP}, the case $n=3$ is important because it is equivalent to the Yang-Baxter property, and subsequent cases $n>3$ are a consequence of the quad symmetry of the defining system and this property.
Case $n=2$ is a statement about the rationality of the defining system of equations.

The next step we take in this paper is the introduction of a second class of integrable dynamics, with a lattice geometry that is different from the $n$-cube, which emerges from elementary considerations of the same family.

\section{Edge-oriented tetrahedral consistency}\label{OTC}
We establish a closure/associativity property in the family of idempotent biquadratic correspondences that share a common orbit.
This can be seen as a consistency feature of the rational three-orbit constraint (\ref{ratmap}).
It allows to interpret it as a discrete integrable dynamical system whose domain is the edge-oriented $n$-simplex.

The relevant closure property has the combinatorics of the tetrahedron with oriented edges, thus in the following proposition the 
indices are associated with vertices and variables with edges, see Figures \ref{tetra}(a) and \ref{tetra}(b).
In this setting, traversing a triangular face of the tetrahedron in one direction yields a triplet of variables, whilst traversing it in the other direction yeilds a second triplet of variables, and this pair of triplets define the idempotent biquadratic polynomial (Lemma \ref{exists}) associated with the face.
\begin{prop}\label{octo}
Denote twelve variables taking values in $\mathbb{C}\cup\{\infty\}$ by
\begin{equation}
x_{ij}, \quad i,j\in\{1,2,3,4\},\ |\{i,j\}|=2,\label{otv}
\end{equation}
and take
\begin{equation}
x_{ij},x_{jk},x_{ki}, \quad x_{ik},x_{kj},x_{ji}, \quad i,j,k\in\{1,2,3,4\}, \ |\{i,j,k\}|=3,\label{ottp}
\end{equation}
to be pairs of orbits that define (cf. Lemma \ref{exists}) four idempotent biquadratic polynomials. 
If there exists an orbit common amongst any three of these, then it is common amongst all four.
\end{prop}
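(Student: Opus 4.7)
The plan is to recast the statement via Proposition \ref{constraint}, which asserts that three disjoint triplets form orbits of a common idempotent biquadratic if and only if the three cubic polynomials whose roots are those triplets are linearly dependent in the $4$-dimensional space $W$ of complex cubic polynomials. For each oriented $3$-cycle $(i,j,k)$ introduce $P_{ijk}(x) := (x - x_{ij})(x - x_{jk})(x - x_{ki})$, and for the assumed common orbit $\{z_0,z_1,z_2\}$ introduce $R(x) := (x - z_0)(x - z_1)(x - z_2)$. Any three faces of the tetrahedron meet at a unique common vertex, so after relabeling we may take the three hypothesis faces to be those through vertex $4$. The hypothesis then becomes the existence of scalars such that
\begin{equation*}
R = a_1 P_{124} + b_1 P_{142} = a_2 P_{134} + b_2 P_{143} = a_3 P_{234} + b_3 P_{243},
\end{equation*}
and the conclusion becomes $R \in \mathrm{span}(P_{123}, P_{132})$.

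The key structural observation is that each root of $P_{123}$ coincides with a root of exactly one of the six hypothesis face cubics: $x_{12}$ is a root of $P_{124}$, $x_{23}$ a root of $P_{234}$, and $x_{31}$ a root of $P_{143}$; symmetrically, the roots $x_{13}, x_{32}, x_{21}$ of $P_{132}$ match roots of $P_{134}, P_{243}, P_{142}$ respectively. Substituting each of these six roots into the appropriate hypothesis relation produces an identity expressing $R$ at that root as an explicit scalar multiple of the evaluation of the other cubic in that relation at the same point. Since $P_{123}$ and $P_{132}$ are both monic, the sought linear dependence must take the form $R = \lambda P_{123} + (1 - \lambda) P_{132}$; evaluating this candidate at each root of $P_{123} P_{132}$ determines $\lambda$ (resp.\ $1 - \lambda$) in three equivalent ways, so the task reduces to showing that these six expressions collapse to two consistent values summing to $1$.

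The main obstacle is the algebraic verification of that consistency: it amounts to a polynomial identity in the twelve edge variables together with the parameters $a_i,b_i,z_0,z_1,z_2$, modulo the three hypothesis relations. A tractable route is to normalize the common orbit to $(z_0,z_1,z_2) = (0,1,\infty)$, bringing the three hypothesis relations into the symmetric-function form (\ref{F1a}) of Proposition \ref{conform}; the six evaluation ratios then become manageable rational expressions in the remaining variables, and the consistency conditions can be confirmed by a direct (and, if needed, machine-assisted) computation. A more conceptual path, which I would also pursue, exploits the trisection-of-periods interpretation from Proposition \ref{character}(ii)$^*$: the common orbit fixes an underlying cubic, each face biquadratic corresponds to translation by a $3$-torsion element on the associated elliptic curve, and the tetrahedral combinatorics forces the fourth relation because the group law closes on the full $3$-torsion structure. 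Either route reduces the proof to a short verification after the setup.
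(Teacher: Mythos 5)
Your main route is valid and genuinely different from the paper's. The paper (in the text following the proposition) fixes the common orbit $z_0,z_1,z_2$, rewrites the three-orbit constraint as the binary operation (\ref{binop}) on pairs, notes commutativity from the triplet-pair symmetry, and observes that, given the three hypothesis constraints (\ref{first_three}), the fourth face (\ref{fourth}) is exactly associativity of that operation; associativity is then manifest because the operation linearizes to componentwise addition/multiplication under the substitutions of Table \ref{list}. You instead encode ``common orbit'' via Proposition \ref{constraint} as $R\in\mathrm{span}(P_{ij4},P_{i4j})$ for the three faces through vertex $4$, and deduce the fourth span condition by evaluation: since all the cubics are monic, $R-\lambda P_{123}-(1-\lambda)P_{132}$ has degree at most two, so it suffices that the ratios $R/P_{132}$ at the (generically distinct) roots of $P_{123}$ coincide. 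That coincidence is the entire content, and it is the one thing you do not verify -- you only assert it is machine-checkable. In fact, after your own normalization $(z_0,z_1,z_2)=(0,1,\infty)$ no evaluation machinery is needed: the hypotheses (\ref{F1a}) for the faces through vertex $4$ read
\begin{equation*}
x_{12}x_{24}x_{41}=x_{14}x_{42}x_{21},\qquad x_{23}x_{34}x_{42}=x_{24}x_{43}x_{32},\qquad x_{31}x_{14}x_{43}=x_{34}x_{41}x_{13},
\end{equation*}
together with the same three relations in the variables $1-x_{ij}$, and multiplying the three relations of each kind cancels every edge through vertex $4$, leaving precisely the two relations of (\ref{F1a}) for the face $\{1,2,3\}$. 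This one-line cancellation closes your argument, and it is ultimately the same mechanism the paper packages as associativity of $+$ and $\times$ via Table \ref{list}; what your formulation buys is a coordinate-free statement in the vector space of cubics, what the paper's buys is that the verification is already structural (a group law) rather than an identity to be checked.

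Two caveats. First, the common orbit need not consist of three distinct values, so the degenerate normal forms $(0,\infty,\infty)$ and $(\infty,\infty,\infty)$ of Proposition \ref{conform}, i.e., relations (\ref{F2a}) and (\ref{F4a}), must be treated as well (the same cancellation works with sums in place of, or alongside, products); likewise the nonvanishing used in the cancellation and the nondegeneracy clauses of Proposition \ref{constraint} (nonzero coefficients, no pairwise common roots) hold only generically and deserve a sentence, though the paper argues at the same level of genericity. Second, your ``conceptual path'' is misleading as stated: the four face biquadratics need not share a discriminant, so they are not translations by $3$-torsion on a single elliptic curve; the group actually governing composition with a fixed third orbit is the two-dimensional rational group exhibited in Table \ref{list}, so that route would need substantial reworking to become a proof.
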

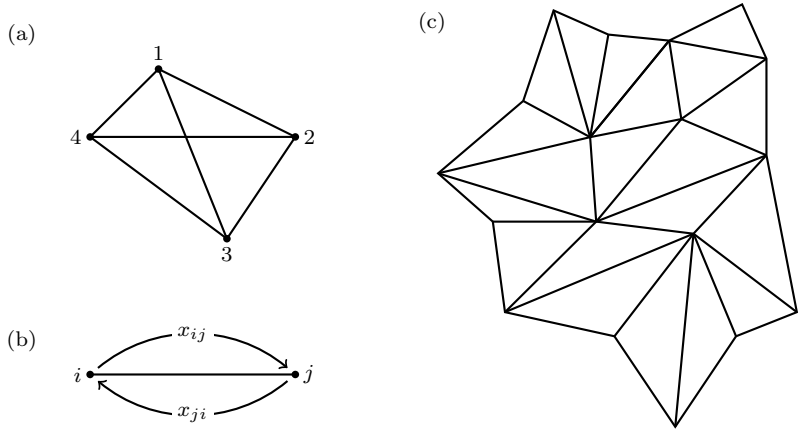
\begin{figure}[t]
\begin{center}
\begin{tikzpicture}[thick,scale=0.9]
  \draw (0,7) node{(a)};
  \coordinate (1) at (2,6.5) {}; 
  \coordinate (2) at (4,5.5) {}; 
  \coordinate (3) at (3,4) {}; 
  \coordinate (4) at (1,5.5) {}; 
  \draw (1) node[above]{1} -- (2) node[right]{2} -- (3) node[below]{3} -- (4) node[left]{4} -- cycle;
  \draw (1) node[circle,fill=black,inner sep=1]{} -- (3) node[circle,fill=black,inner sep=1]{};
  \draw (2) node[circle,fill=black,inner sep=1]{} -- (4) node[circle,fill=black,inner sep=1]{};
  \draw (0,2.5) node{(b)};
  \coordinate (i) at (1,2);
  \coordinate (j) at (4,2);
  \node[circle,fill=black,inner sep=1] (I) at (i) {};
  \node[circle,fill=black,inner sep=1] (J) at (j) {};
  \draw (i) node[left]{$i$} -- (j) node[right]{$j$};
  \draw (I) edge[->, bend left=40,shorten <= 2pt, shorten >= 2pt] node[fill=white,inner sep=2]{$x_{ij}$} (J);
  \draw (I) edge[<-, bend right=40,shorten <= 2pt, shorten >= 2pt] node[fill=white,inner sep=2]{$x_{ji}$} (J);
\end{tikzpicture}
\hspace{30pt}
\begin{tikzpicture}[thick,scale=0.8]
  \tikzstyle{every node}=[inner sep=0pt]
  \coordinate (1) at (2.5,7.2); 
  \coordinate (2) at (3.4,6.8);
  \coordinate (3) at (4.4,6.7); 
  \coordinate (4) at (5.6,7.3);
  \coordinate (5) at (2.0,5.7);
  \coordinate (6) at (3.1,5.1);
  \coordinate (7) at (4.6,5.4);
  \coordinate (8) at (6.0,6.4);
  \coordinate (9) at (0.6,4.5);
  \coordinate (10) at (1.5,3.7);
  \coordinate (11) at (3.2,3.7);
  \coordinate (12) at (4.8,3.5);
  \coordinate (13) at (6.0,4.8);
  \coordinate (14) at (1.7,2.2);
  \coordinate (15) at (3.5,1.8);
  \coordinate (16) at (4.5,0.3);
  \coordinate (17) at (5.5,1.8);
  \coordinate (18) at (6.5,2.2);
  \draw (1) -- (2) -- (3) -- (4) -- (8) -- (13) -- (18) -- (17) -- (16) -- (15) -- (14) -- (10) -- (9) -- (5) -- (1) -- cycle;
  \draw (1) -- (6) -- (11) -- (14);
  \draw (2) -- (6) -- (9);
  \draw (5) -- (6) -- (3);
  \draw (3) -- (7) -- (8);
  \draw (13) -- (7) -- (11);
  \draw (12) -- (18);
  \draw (3) -- (6);
  \draw (6) -- (7);
  \draw (3) -- (8);
  \draw (14) -- (12) -- (13);
  \draw (13) -- (11);
  \draw (12) -- (11) -- (10);
  \draw (9) -- (11);
  \draw (12) -- (17);
  \draw (12) -- (16);
  \draw (12) -- (15);
  \draw (0.5,7) node{(c)};
\end{tikzpicture}
\end{center}
\caption{(a) Tetrahedron, (b) one edge and two variables, one variable is assigned to each orientation of the edge, (c) a planar triangle tessellation.}
\label{tetra}
\end{figure}

More algebraic intuition is offered by interpreting this property as associativity.
Consider the binary operation
\begin{equation}
(x_0,y_0)\cdot(x_1,y_1) := (y_2,x_2)\label{binop}
\end{equation}
defined by system (\ref{ratmap}) when $z_0$, $z_1$ and $z_2$ are considered fixed.
The commutativity of the defined operation is clear from the permutation symmetry of the defining system, the associativity can verified directly by calculation.
This is equivalent to verifying Proposition \ref{octo}, which can be seen as follows.

The proposition states that if the three-orbit constraint (\ref{ratmap}) is satisfied for three of the idempotents, then it is satisfied for the fourth.
Taking the common third orbit to be $z_0$, $z_1$, $z_2$, three of the constraints written in terms of the above binary operation are as follows
\begin{equation}
\begin{split}
(x_{14},x_{41}) = (x_{13},x_{31})\cdot(x_{34},x_{43}),\\
(x_{24},x_{42}) = (x_{21},x_{12})\cdot(x_{14},x_{41}),\\
(x_{23},x_{32}) = (x_{21},x_{12})\cdot(x_{13},x_{31}).
\end{split}\label{first_three}
\end{equation}
Assuming these hold one sees by substitution that the fourth constraint,
\begin{equation}
(x_{24},x_{42}) = (x_{23},x_{32})\cdot(x_{34},x_{43}),\label{fourth}
\end{equation}
is equivalent to the associativity condition
\begin{equation}
(x_{21},x_{12})\cdot[(x_{13},x_{31})\cdot(x_{34},x_{43})] =
[(x_{21},x_{12})\cdot(x_{13},x_{31})]\cdot(x_{34},x_{43}).
\end{equation}

That the equations represented by (\ref{fourth}) are a consequence of those represented by (\ref{first_three}) means this is a consistency property similar in spirit to the known integrability feature of the quadrirational models, namely the consistency on a cube, or Yang-Baxter property.
Here equations on three faces of the tetrahedron (\ref{first_three}) imply the equation on the fourth face (\ref{fourth}).

Just as consistency on the cube leads naturally to the $n$-cube lattice geometry (Proposition \ref{YBP}), the edge-oriented tetrahedral consistency leads to lattice geometry of the $n$-simplex.
\begin{prop}\label{cons}
Denote $n(n+1)$ variables by
\begin{equation}
x_{ij}, \quad i,j\in\{1,\ldots,n+1\},\ |\{i,j\}|=2,
\end{equation}
choose any $z_0,z_1,z_2\in\mathbb{C}\cup\{\infty\}$, and impose the idempotent-biquadratic three-orbit constraint (cf. Propositions \ref{constraint2} and \ref{conform}) on all sets of variables
\begin{equation}
x_{ij},x_{jk},x_{ki}, \quad x_{ik},x_{kj},x_{ji}, \quad z_0,z_1,z_2,\label{tobs}
\end{equation}
for
\begin{equation}
i,j,k\in\{1,\ldots,n+1\}, \ |\{i,j,k\}|=3.
\end{equation}
Then variables remaining from the whole set are rational functions of the unconstrained subset 
\begin{equation}x_{i1},x_{1i}, \quad i\in\{2,\ldots,n+1\}.\label{simplexivp}\end{equation}
\end{prop}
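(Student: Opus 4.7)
The plan is to split the argument into two observations: first, that the constraints coming from the star of triangles through vertex $1$ already determine every remaining variable rationally from the initial data (\ref{simplexivp}); and second, that every other triangular constraint is then automatic by the tetrahedral closure Proposition \ref{octo}.

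For the rational determination, I would fix $2\le i<j\le n+1$ and apply Corollary \ref{qrr} to the three-orbit constraint on triangle $\{1,i,j\}$. Taking the base variables $x_0,y_0$ of the Corollary to be $x_{1i}$ and $x_{1j}$, the remaining four variables $x_{ij},x_{j1},x_{ji},x_{i1}$ occupy the edges of the quad in Figure \ref{quad}; a direct reading of the cyclic order around that quad shows that $\{x_{i1},x_{j1}\}$ (all initial data) and $\{x_{ij},x_{ji}\}$ (still unknown) are complementary adjacent pairs. Corollary \ref{qrr} then delivers $(x_{ij},x_{ji})$ as a rational function of $(x_{1i},x_{1j},x_{i1},x_{j1})$ and the fixed parameters $(z_0,z_1,z_2)$. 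Running over all pairs $(i,j)$ expresses the full complement of $n(n-1)$ unknowns rationally in terms of the $2n$ initial variables.

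To verify consistency of the three-orbit constraint on a triangle $\{i,j,k\}$ with $1\notin\{i,j,k\}$, I would embed it in the tetrahedron $\{1,i,j,k\}$ and invoke Proposition \ref{octo}. By construction the three faces through vertex $1$ satisfy the three-orbit constraint with common auxiliary triplet $(z_0,z_1,z_2)$, so the three associated face-biquadratics (defined from the edge-triplets via Lemma \ref{exists}) share $(z_0,z_1,z_2)$ as a common orbit. Proposition \ref{octo} then forces the fourth face-biquadratic to admit $(z_0,z_1,z_2)$ as an orbit as well, which is precisely the three-orbit constraint (\ref{tobs}) on $\{i,j,k\}$. Since every triple in $\{2,\ldots,n+1\}$ lies inside such a tetrahedron through vertex $1$, every face-constraint outside the star at $1$ is handled in a single stroke.

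The step requiring most care is the translation of terminology between the two propositions: the fixed auxiliary triplet $(z_0,z_1,z_2)$, built into every face-constraint in the present setting, is what plays the role of the ``common orbit amongst three biquadratics'' in the hypothesis of Proposition \ref{octo}. Once that identification is made, the proof reduces to the two structural observations above, with only the mild genericity check that the rationally-determined edge-triplets remain disjoint (so that Lemma \ref{exists} applies on the fourth face) needing to be flagged.
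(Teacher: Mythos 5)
Your argument is correct and matches the paper's own (very terse) justification: the paper integrates the system explicitly via $(x_{ij},x_{ji})=(x_{i1},x_{1i})\cdot(x_{1j},x_{j1})$, which is exactly your quadrirational determination on the star of triangles through vertex $1$ (Corollary \ref{qrr}), and then notes that the case $n>3$ follows from the tetrahedral consistency of Proposition \ref{octo} together with the triangle symmetry, which is precisely your embedding of each face $\{i,j,k\}$ with $1\notin\{i,j,k\}$ into the tetrahedron $\{1,i,j,k\}$. You supply the details (including the disjointness/genericity caveat for Lemma \ref{exists}) that the paper leaves implicit, but the route is the same.
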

Here indices correspond to vertices of the $n$-simplex, variables are associated with edges, and the constraints are on faces.
The given unconstrained subset of variables are associated with edges connected to vertex $1$.
Tessellation by triangles of any two-dimensional surface (see Figure \ref{tetra}(c)) can be considered as a sub-case of this domain, which is clear because the $n$-simplex can be recovered by adding edges until all pairs of vertices are joined.

We remark that case $n=2$ of Proposition \ref{cons} is a statement about rationality of the defining system, case $n=3$ is equivalent to the consistency property of Proposition \ref{octo}, and $n>3$ is a consequence of this consistency combined with the triangle symmetry of the defining system.

The system described in Proposition \ref{cons} can be directly integrated, the general solution in terms of the initial data is given in terms of the previously defined binary operation (\ref{binop}) by $(x_{ij},x_{ji}) = (x_{i1},x_{1i})\cdot(x_{1j},x_{j1})$, $i,j\in\{2,\ldots,n+1\}$.
\begin{table}[t]
\begin{center}
\begin{tabular}{lll}
\vspace{2pt}
$z_0,z_1,z_2$ & $\varphi(a,b)$& $\varphi(a,b)\cdot\varphi(c,d)$\\
\hline
$0,1,\infty$ & $(a(1-b)/(a-b),(1-b)/(a-b))$ & $\varphi(ac,bd)$\\
$0,\infty,\infty$ & $(a/(b-1),ab/(b-1))$ & $\varphi(a+c,bd)$\\
$\infty,\infty,\infty$ & $(a/b+b,a/b-b)$ & $\varphi(a+c,b+d)$
\end{tabular}
\end{center}
\caption{Decomposing the binary operation into addition and multiplication.}
\label{list}
\end{table}
The binary operation (\ref{binop}) defined by (\ref{ratmap}) is of itself very simple, it can be decomposed into addition and multiplication.
Substitutions $\varphi$ achieving this are listed in Table \ref{list} for the three canonical choices of the fixed third orbit that were used previously in Proposition \ref{conform}.

This example of a discrete system with lattice geometry of the $n$-simplex is thus solved completely.
More generally, the relationship between the edge-oriented tetrahedral consistency and associativity of a binary operation means that a sufficient ingredient for construction of integrable systems of the same kind, is nothing more than an abelian group $A$ in which variables $x_{ij}$ take values.
The system of equations (on faces of the $n$-simplex) $x_{ij}=x_{ik}x_{kj}$, $i,j,k\in\{1,\ldots,n+1\}$, is solved by $x_{ij}=a_ia_j^{-1}$ for some freely chosen $a_1,\ldots,a_{n+1}\in A$.
In isolation, this example of such system is therefore quite trivial, however it plays a key role in the narrative here.
In particular, the defining equations are the same as those that define the system on the $n$-cube described in Proposition \ref{YBP}, and it is therefore natural to ask how the $n$-cube and $n$-simplex lattice geometries are related.
Answering this question is in fact the main technical focus of the remainder of this paper.
We begin in the following section with a local consideration of this problem, using combinatorial arguments to unify the underlying cubic consistency and edge-oriented tetrahedral consistency properties.

\section{5-simplex consistency}\label{5S}
We continue to study the family of idempotent biquadratic polynomials that share a common fixed orbit.
The quadrirationality of the three-orbit constraint established in Section \ref{IB2} led to equivalence with the Yang-Baxter maps, but Section \ref{OTC} established a second consistency feature of the constraint which is in addition to the usual Yang-Baxter property.
The relationship between these two properties is explained here by a stronger consistency feature that unifies them.

\subsection{Completeness based on symmetry}
The defining system of equations, namely the idempotent biquadratic three-orbit constraint in which the third orbit is constant (cf. Propositions \ref{constraint2} and \ref{conform}), is of a general class that we will call a {\it rational triplet-pair system}.
\begin{definition} \label{tps}
By a rational triplet-pair system we mean a system of equations relating a pair of triplets of variables \[x_0,x_1,x_2,\quad y_0,y_1,y_2,\] that rationally determines one variable from each triplet from the remaining four variables, and which is invariant under the symmetry group of the triplet-pair, i.e., permutations of the variables that send the set $\{\{x_0,x_1,x_2\},\{y_0,y_1,y_2\}\}$ to itself.
\end{definition}
It is useful to broaden the notion of domain for this class of system to a more combinatorial setting.
In general the domain is a triplet-pair arrangement.
\begin{definition}\label{tp}
(i) A triplet-pair is a disjoint pair of sets, each set containing three variables.
(ii) A triplet-pair arrangement is a set of variables arranged into triplet-pairs.
\end{definition}

Propositions \ref{YBP} and \ref{cons} involve a rational triplet-pair system imposed on particular triplet-pair arrangements.
\begin{definition} \label{CTP}
The triplet-pair arrangement of the $n$-cube is a set of $n+n2^{(n-1)}$ variables
\begin{equation}\label{ecvars}
y_i, \quad x_i^I, \qquad i \in \{1,\ldots,n\}, \ I\subseteq \{1,\ldots,n\} \setminus \{i\},
\end{equation}
arranged into $n(n-1)2^{(n-3)}$ triplet-pairs
\begin{equation}
y_i,x_j^I,x_j^{I\cup\{i\}}, \quad y_j,x_i^I,x_i^{I\cup\{j\}}, \quad i,j\in\{1,\ldots,n\}, \ i\neq j, \ I\subseteq \{1,\ldots,n\}\setminus\{i,j\}.
\end{equation}
\end{definition}
\begin{definition} \label{EOTTP}
The triplet-pair arrangement of the edge-oriented $n$-simplex is a set of $n(n+1)$ variables
\begin{equation}\label{evars}
x_{ij}, \quad i,j\in\{1,\ldots,n+1\},\ |\{i,j\}|=2,
\end{equation}
arranged into $(n-1)n(n+1)/6$ triplet-pairs
\begin{equation}
x_{ij},x_{jk},x_{ki}, \quad x_{ji},x_{kj},x_{ik}, \quad i,j,k\in\{1,\ldots,n+1\},\ |\{i,j,k\}|=3.
\end{equation}
\end{definition}

The most significant feature of such arrangements is the associated symmetry group.
\begin{definition}\label{symdef}
A symmetry of a triplet-pair arrangement is a permutation of the variables which also permutes the triplet-pairs.
\end{definition}
The triplet-pair arrangements in Definitions \ref{CTP} and \ref{EOTTP} have the same group of symmetries as their respective associated geometric objects, namely the $n$-cube and the $n$-simplex with oriented edges.
However, a single triplet-pair has greater symmetry than the geometric object with which it is associated, in the first case that is a quad, and in the second case a triangle with oriented edges.
In the present setting we adopt the principle that this is a deficiency.
To overcome it we seek to extend the above arrangements, such extension will be termed symmetry-complete.
\begin{definition}\label{sc}
A triplet-pair arrangement is symmetry-complete when every symmetry of every triplet-pair is the restriction of a symmetry of the whole arrangement.
\end{definition}
It is natural to require any such extension be minimal, that is to ask for the {\it smallest} symmetry-complete extension that contains the original.
The question of whether a finite completion of this kind exists is purely combinatorial.

In this section we focus on the triplet-pair arrangements in Definitions \ref{CTP} and \ref{EOTTP} in the important case $n=3$, that is the cube and edge-oriented tetrahedron arrangements.
In these cases a finite symmetry-complete extension exists, and is in fact the same in both cases.
We begin by introducing the symmetrised arrangement, it has the combinatorics of the 5-simplex explained by Figure \ref{5simplex}.
Thus in the following definition we use the natural coordinatisation in which indices are associated with vertices of the 5-simplex, the variables are associated with edges and the triplet-pairs with pairs of vertex-disjoint (or polar) triangles.
\begin{figure}[t]
\begin{center}
\begin{tikzpicture}[thick,scale=1.2]
  \draw (0:1.5) -- (60:1.5) -- (120:1.5) -- (180:1.5) -- (240:1.5) -- (300:1.5) -- (0:1.5) -- cycle;
  \draw (0:1.5) -- (120:1.5);
  \draw (0:1.5) -- (180:1.5);
  \draw (0:1.5) -- (240:1.5);
  \draw (60:1.5) -- (180:1.5);
  \draw (60:1.5) -- (240:1.5);
  \draw (60:1.5) -- (300:1.5);
  \draw (120:1.5) -- (240:1.5);
  \draw (120:1.5) -- (300:1.5);
  \draw (180:1.5) -- (300:1.5);
  \tikzstyle{every node}=[circle,fill=black,inner sep=0pt,minimum size=3pt]
  \node at (0:1.5) {};
  \node at (60:1.5) {};
  \node at (120:1.5) {};
  \node at (180:1.5) {};
  \node at (240:1.5) {};
  \node at (300:1.5) {};
\end{tikzpicture}
\hspace{20pt}
\begin{tikzpicture}[thick,scale=1.2]
  \draw[gray!20!white] (0:1.5) -- (60:1.5) -- (120:1.5) -- (180:1.5) -- (240:1.5) -- (300:1.5) -- (0:1.5) -- cycle;
  \draw[gray!20!white] (0:1.5) -- (120:1.5);
  \draw[gray!20!white] (0:1.5) -- (180:1.5);
  \draw[gray!20!white] (0:1.5) -- (240:1.5);
  \draw[gray!20!white] (60:1.5) -- (180:1.5);
  \draw[gray!20!white] (60:1.5) -- (240:1.5);
  \draw[gray!20!white] (60:1.5) -- (300:1.5);
  \draw[gray!20!white] (120:1.5) -- (240:1.5);
  \draw[gray!20!white] (120:1.5) -- (300:1.5);
  \draw[gray!20!white] (180:1.5) -- (300:1.5);
  \draw (0:1.5) -- (240:1.5) -- (180:1.5) -- cycle;
  \draw (60:1.5) -- (120:1.5) -- (300:1.5) -- cycle;
  \tikzstyle{every node}=[circle,fill=black,inner sep=0pt,minimum size=3pt]
  \node at (0:1.5) {};
  \node at (60:1.5) {};
  \node at (120:1.5) {};
  \node at (180:1.5) {};
  \node at (240:1.5) {};
  \node at (300:1.5) {};
\end{tikzpicture}
\end{center}
\caption{The first diagram is the 5-simplex, fifteen variables are assigned to the fifteen edges, and ten triplet-pairs are assigned to the ten vertex-disjoint pairs of triangles, or {\it polar} triangles, a typical example of which is illustrated in the second diagram.}
\label{5simplex}
\end{figure}
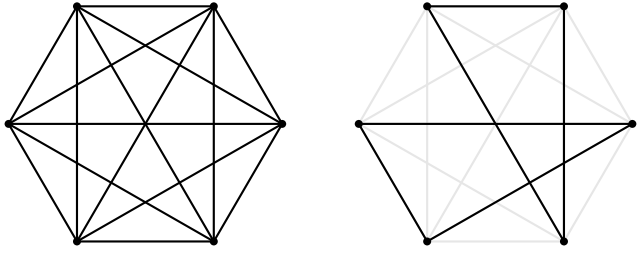
\begin{definition}\label{5SPC}
The triplet-pair arrangement of the 5-simplex is a set of fifteen variables
\begin{equation}
w_{ij} = w_{ji}, \quad i,j\in\{1,2,3,4,5,6\}, \ |\{i,j\}|=2,\label{simpvars}
\end{equation}
arranged into ten triplet-pairs
\begin{equation}
w_{ij},w_{jk},w_{ki}, \quad w_{lm},w_{mn},w_{nl}, \quad \{i,j,k,l,m,n\} = \{1,2,3,4,5,6\}.\label{triangles}
\end{equation}
\end{definition}
The combinatorial result can then be stated as follows.
\begin{prop}\label{5s}
The triplet-pair arrangement of the 5-simplex (Definition \ref{5SPC}) is the smallest symmetry-complete extension of the triplet-pair arrangement of both the cube (case $n=3$ of Definition \ref{CTP}) and the edge-oriented tetrahedron (case $n=3$ of Definition \ref{EOTTP}).
\end{prop}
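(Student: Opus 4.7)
The plan is to prove three facts: (a) the triplet-pair arrangement of the 5-simplex is itself symmetry-complete; (b) both the cube and the edge-oriented tetrahedron arrangements (case $n=3$) embed as sub-arrangements of it; (c) no strict sub-arrangement of the 5-simplex containing either embedded image is symmetry-complete. Together these imply the 5-simplex arrangement is a symmetry-complete extension of each, and since any smaller such extension would have to be a strict sub-arrangement of the 5-simplex (contradicting (c)), it is the smallest.

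For (a), the natural action of $S_6$ on $\{1,\ldots,6\}$ induces an action on the 15 edges preserving the 10 polar triangle-pairs; by standard rigidity arguments (using $\mathrm{Aut}(K_6)=S_6$) this is the full symmetry group of the arrangement. The stabilizer of a single polar pair $\{\{i,j,k\},\{l,m,n\}\}$ in $S_6$ is $(S_{\{i,j,k\}}\times S_{\{l,m,n\}})\rtimes S_2\cong S_3\wr S_2$ of order $72$, which acts faithfully on the six variables of the pair and realises the full abstract triplet-pair symmetry group (Definition \ref{tp}). Every triplet-pair symmetry therefore extends to an arrangement symmetry.

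For (b), label the 5-simplex vertices $\{1,\bar 1,2,\bar 2,3,\bar 3\}$ and let $B_3=(\mathbb{Z}/2)^3\rtimes S_3\subset S_6$ be the hyperoctahedral subgroup of signed permutations. Under $B_3$ the 15 edges split into three \emph{axial} edges $\{i,\bar i\}$ and twelve \emph{non-axial} edges, and the 10 polar pairs split into four \emph{uniform} pairs (those of the form $\{\{s_1\cdot 1,s_2\cdot 2,s_3\cdot 3\},\{-s_1\cdot 1,-s_2\cdot 2,-s_3\cdot 3\}\}$) and six \emph{mixed} pairs (those in which each triangle contains an antipodal vertex pair). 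For the cube, $B_3$ acts on the 15 variables with matching orbit sizes $(3,12)$; a $B_3$-equivariant identification $y_i\mapsto w_{i,\bar i}$ together with a compatible sign rule on the $x_i^I$ maps the six cube triplet-pairs bijectively onto the six mixed polar pairs. For the tetrahedron, one identifies the 12 oriented edges with the 12 non-axial edges so that the 4 face triplet-pairs match the 4 uniform polar pairs; the key combinatorial input is that the $\binom{4}{2}=6$ pairs of tetrahedron faces correspond to the $\binom{4}{2}=6$ pairs of uniform polar pairs, with each pair (in either picture) incident to exactly two edges.

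For (c), any automorphism of a sub-arrangement of the 5-simplex is induced by an element of $S_6$ preserving both the variable set $V$ and the triplet-pair set $\mathcal{T}$ (by the same rigidity used in (a)). Symmetry-completeness therefore requires $\mathrm{Stab}_{S_6}(T)\subseteq\mathrm{Stab}_{S_6}(V)\cap\mathrm{Stab}_{S_6}(\mathcal{T})$ for every $T\in\mathcal{T}$. Since $\mathrm{Stab}_{S_6}(T)\cong S_3\wr S_2$ is a maximal subgroup of $S_6$ and distinct polar pairs have distinct stabilizers, two or more triplet-pairs in $\mathcal{T}$ force $\mathrm{Stab}_{S_6}(V)\cap\mathrm{Stab}_{S_6}(\mathcal{T})=S_6$, so (by $S_6$-transitivity on each side) $V$ must be all 15 edges and $\mathcal{T}$ all 10 polar pairs. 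Both the cube ($|\mathcal{T}|=6$) and the tetrahedron ($|\mathcal{T}|=4$) trigger this. The principal obstacle is the explicit embedding in (b), particularly for the tetrahedron where one must construct a bijection between 12 oriented edges and 12 non-axial edges compatible with the face-to-uniform-pair correspondence; the minimality step (c) is conceptually straightforward given the maximality of $S_3\wr S_2$ in $S_6$.
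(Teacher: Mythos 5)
Your parts (a) and (b) follow essentially the paper's route (the $S_6$ stabilizer of a polar pair has order $72$ and acts faithfully on its six edges, so it realises the full triplet-pair symmetry group; the cube and edge-oriented tetrahedron embed, with the cube pairs landing on the ``mixed'' polar pairs and the tetrahedron faces on the ``uniform'' ones relative to a fixed perfect matching, which agrees with the paper's explicit tables (\ref{cfs}), (\ref{otfs}) up to an $S_6$ relabelling). The genuine gap is in (c), in two respects. First, ``smallest symmetry-complete extension'' quantifies over \emph{all} symmetry-complete triplet-pair arrangements containing the cube (resp.\ tetrahedron) arrangement, not only over sub-arrangements of the 5-simplex arrangement; your assertion that any smaller extension ``would have to be a strict sub-arrangement of the 5-simplex'' is precisely what needs proof and is nowhere justified --- a hypothetical smaller extension could be a completely different arrangement. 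The paper closes this by an intrinsic necessary condition: in the cube and tetrahedron arrangements triplet-pairs intersect in two variables, one from each triple, and since the symmetry group of a single triplet-pair acts transitively on such choices, symmetry-completeness forces \emph{every} one of the nine two-variable intersection patterns of each participating triplet-pair to be realised by some other triplet-pair of the extension; in the 5-simplex arrangement each pattern is realised exactly once and these account for all intersections, so nothing smaller can supply the required intersections.

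Second, even your restricted claim about sub-arrangements is not established, because it rests on the premise that every symmetry of a sub-arrangement is induced by an element of $S_6$ preserving its variable set $V$ and triplet-pair set $\mathcal{T}$. Definition \ref{symdef} only requires a permutation of the sub-arrangement's own variables permuting its own triplet-pairs; such permutations need not come from, nor extend to, the ambient $S_6$ (discarding triplet-pairs typically \emph{enlarges} the symmetry group), so the containment $\mathrm{Stab}_{S_6}(T)\subseteq\mathrm{Stab}_{S_6}(V)\cap\mathrm{Stab}_{S_6}(\mathcal{T})$ is not a necessary condition for symmetry-completeness, and the argument via maximality of $S_3\wr S_2$ in $S_6$ collapses. (The rigidity statement that the full arrangement has automorphism group exactly $S_6$ is also asserted rather than proved, but it is not needed for (a), where the stabilizer computation alone suffices, exactly as in the paper.) Finally, you explicitly leave the construction of the embeddings in (b) as an obstacle; the paper resolves this by exhibiting the correspondences (\ref{cfs}) and (\ref{otfs}) and checking the triplet-pairs directly, so to make your version complete you would still need to write down the sign rule for the $x_i^I$ and the oriented-edge bijection and verify them pair by pair.
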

\begin{proof}
That the triplet-pair arrangement of the 5-simplex is symmetry-complete in the sense of Definition \ref{sc} can be seen from Figure \ref{5simplex}, the symmetry group of the 5-simplex is $S_6$ which acts naturally on its vertices.
By inspection this group acts transitively on the triangle-pairs, whilst the subgroup that stabilizes any particular triangle-pair is the complete symmetry group of that pair.

Because all arrangements are finite, it is possible to directly verify that the 5-simplex arrangement contains the cube and edge-oriented tetrahedron arrangements.
We proceed by giving explicit associations between variables of the arrangements.
\begin{equation}
\begin{array}{lll}
y_1 & y_2 & y_3 \\[0.1em]
x_1^{\{\}} & x_2^{\{\}} &x_3^{\{\}} \\[0.1em]
x_1^{\{2\}} & x_2^{\{3\}} &x_3^{\{1\}} \\[0.1em]
x_1^{\{3\}} & x_2^{\{1\}} &x_3^{\{2\}} \\[0.1em]
x_1^{\{2,3\}} & x_2^{\{1,3\}} &x_3^{\{1,2\}}
\end{array}
\quad
\rightarrow
\qquad
\begin{array}{lll}
w_{12} \ & w_{34} \ & w_{56} \\[0.3em]
w_{45} \ & w_{16} \ & w_{23} \\[0.3em]
w_{35} \ & w_{15} \ & w_{13} \\[0.3em]
w_{46} \ & w_{26} \ & w_{24} \\[0.3em]
w_{36} \ & w_{25} \ & w_{14}
\end{array}
\label{cfs}
\end{equation}
Here the variables on the left are from the cube arrangement, and they are put in correspondence with the variables of the 5-simplex arrangement on the right.
It is straightforward to verify that each triplet-pair of the cube arrangement (case $n=3$ of Definition \ref{CTP}) corresponds to a triplet-pair in the 5-simplex arrangement (Definition \ref{5SPC}).

Similar associations are possible for the edge-oriented tetrahedral arrangement.
\begin{equation}
\begin{array}{lll}
x_{21} \ & x_{41} \ & x_{31} \\[0.3em]
x_{12} \ & x_{14} \ & x_{13} \\[0.3em]
x_{24} \ & x_{43} \ & x_{32} \\[0.3em]
x_{23} \ & x_{42} \ & x_{34} \\[0.3em]
- & - & - 
\end{array}
\quad
\rightarrow
\qquad
\begin{array}{lll}
w_{12} \ & w_{34} \ & w_{56} \\[0.3em]
w_{45} \ & w_{16} \ & w_{23} \\[0.3em]
w_{35} \ & w_{15} \ & w_{13} \\[0.3em]
w_{46} \ & w_{26} \ & w_{24} \\[0.3em]
w_{36} \ & w_{25} \ & w_{14}
\end{array}
\label{otfs}
\end{equation}
Once again, via this embedding, it is straightforward to verify that each triplet-pair of the edge-oriented tetrahedron arrangement (case $n=3$ of Definition \ref{EOTTP}) corresponds to a triplet-pair of the 5-simplex arrangement (Definition \ref{5SPC}).

To complete the proof it remains to establish that there are no smaller symmetry-complete triplet-pair arrangements that contain either the arrangement of the cube or of the edge-oriented tetrahedron.
This can be done by considering the intersections between triplet-pairs.
In both the cube and edge-oriented tetrahedron arrangements, all intersections are of the same nature, triplet-pairs intersect on two variables - one from each triplet.
For a hypothetical symmetry-complete extension it is necessary that for any participating triplet-pair and any two variables chosen one from each of its triplets, that there exist another triplet-pair intersecting the first on those variables.
For the 5-simplex arrangement, a triplet-pair intersecting in this way exists and is in fact {\it unique}, and furthermore {\it all} intersections are of this kind.
The extended arrangement therefore cannot be any smaller, for if it were it would omit some of the necessary intersections.
\end{proof}

\subsection{The stronger consistency property}
Proposition \ref{5s} established the triplet-pair arrangement of the 5-simplex as a natural combinatorial extension of the cube and edge-oriented tetrahedron arrangements.
We now introduce the corresponding extended consistency property as follows.
\begin{definition} \label{fsc}
If a rational triplet-pair system (Definition \ref{tps}) imposed on triplet-pairs of the 5-simplex arrangement (Definition \ref{5SPC}) leaves the variables 
\begin{equation}
w_{12}, w_{34}, w_{56}, w_{45}, w_{16}, w_{23},\label{us}
\end{equation}
unconstrained, then it is called 5-simplex consistent.
\end{definition}
It is straightforward to see that this definition gives the desired extension, that is:
\begin{corollary}[following from Proposition \ref{5s}]\label{xcp}
If a rational triplet-pair system is 5-simplex consistent (Definition \ref{fsc}), then it is consistent when imposed on the triplet-pairs of the cube arrangement (case $n=3$ of Definition \ref{CTP}) in the sense that it leaves the variables
\begin{equation}
x_{1}^{\{\}},x_{2}^{\{\}},x_{3}^{\{\}},y_1,y_2,y_3,\label{cubeid}
\end{equation}
unconstrained.
And it is also consistent on the triplet-pair arrangement of the edge-oriented tetrahedron (case $n=3$ of Definition \ref{EOTTP}), in the sense that it leaves the variables
\begin{equation}
x_{12},x_{13},x_{14},x_{21},x_{31},x_{41},\label{tetrahedronid}
\end{equation}
unconstrained.
(These are exactly the unconstrained variables in case $n=3$ of Propositions \ref{YBP} and \ref{cons}.)
\end{corollary}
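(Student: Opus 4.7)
The plan is to reduce the Corollary directly to the embeddings (\ref{cfs}) and (\ref{otfs}) already exhibited in the proof of Proposition \ref{5s}. Since those embeddings identify every triplet-pair of the cube (respectively edge-oriented tetrahedron) arrangement with a triplet-pair of the 5-simplex arrangement, imposing the rational triplet-pair system on the smaller arrangement amounts to imposing a subset of the ten conditions that define 5-simplex consistency. The only remaining content is to verify that the designated unconstrained variables in (\ref{cubeid}) and (\ref{tetrahedronid}) correspond, under the embeddings, to the six unconstrained 5-simplex variables (\ref{us}).

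For the cube case, I would proceed as follows. The map (\ref{cfs}) is a bijection between the fifteen variables of the cube arrangement and the fifteen 5-simplex variables, sending the six cube triplet-pairs to six of the ten 5-simplex triplet-pairs (by Proposition \ref{5s}). Reading off the first two rows of (\ref{cfs}), the image of the subset (\ref{cubeid}) is precisely the subset (\ref{us}). Given arbitrary values for the variables in (\ref{cubeid}), transport them through (\ref{cfs}) to (\ref{us}), invoke 5-simplex consistency to extend rationally to all remaining 5-simplex variables, and pull back through (\ref{cfs}) to obtain rational values for the remaining cube variables. Since the six cube constraints sit among the ten satisfied 5-simplex constraints, they hold, establishing the consistency claim.

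For the edge-oriented tetrahedron, the argument is analogous but requires one extra bookkeeping remark. The embedding (\ref{otfs}) is injective but not surjective: its image consists of twelve of the fifteen 5-simplex variables, leaving the three variables in the bottom row of the right-hand column of (\ref{otfs}) external to the tetrahedron arrangement. The four tetrahedron triplet-pairs map to four of the ten 5-simplex triplet-pairs. Inspecting (\ref{otfs}) row by row shows that the image of (\ref{tetrahedronid}) coincides (as a set) with (\ref{us}). Hence, given arbitrary values for (\ref{tetrahedronid}), transport them to (\ref{us}), apply 5-simplex consistency to determine all other 5-simplex variables rationally, and read off the remaining six tetrahedron variables as the preimages of the corresponding twelve-member image subset; the three external 5-simplex variables that are determined in the process are simply discarded. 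The four tetrahedron constraints hold because they are a subset of the satisfied 5-simplex constraints.

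There is no real analytic or algebraic obstacle here, as the combinatorial heavy lifting has already been done in Proposition \ref{5s}; the only step that requires care is the inspection verifying that under each embedding the unconstrained subsets (\ref{cubeid}) and (\ref{tetrahedronid}) land exactly on (\ref{us}). This is a finite check against (\ref{cfs}) and (\ref{otfs}) and is the crux of matching "5-simplex consistent" with the two classical consistency properties in the sense of Propositions \ref{YBP} and \ref{cons}.
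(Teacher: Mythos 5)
Your proposal is correct and follows essentially the same route as the paper: the paper's own proof likewise reduces the statement to the embeddings (\ref{cfs}) and (\ref{otfs}) established in Proposition \ref{5s} and observes that the unconstrained subsets (\ref{cubeid}) and (\ref{tetrahedronid}) map exactly onto (\ref{us}). Your extra bookkeeping (the cube embedding being a bijection on variables, the three 5-simplex variables outside the tetrahedron image being discarded) is a harmless elaboration of the same inspection.
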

\begin{proof}
This can be seen by inspection of the previously established correspondences (\ref{cfs}) and (\ref{otfs}) that embed the cube and edge-oriented tetrahedron triplet-pair arrangements in the 5-simplex one.
It is sufficient to observe that the six variables in the unconstrained subset (\ref{us}) correspond to the six variables (\ref{cubeid}) and (\ref{tetrahedronid}) respectively.
\end{proof}
The examples at hand are indeed 5-simplex consistent:
\begin{prop}\label{pdds}
With constant third orbit $z_0,z_1,z_2\in\mathbb{C}\cup\{\infty\}$, the idempotent-biquadratic three-orbit constraint (cf. Propositions \ref{constraint}, \ref{constraint2} and \ref{conform}) defines a rational triplet-pair system (Definition \ref{tps}) which is 5-simplex consistent (Definition \ref{fsc}).
\end{prop}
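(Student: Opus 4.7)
Rationality and the triplet-pair symmetry of the three-orbit constraint are immediate from earlier results. Rationality --- the explicit determination of $x_2$ and $y_2$ as rational functions of the other seven variables --- is the content of Proposition \ref{constraint2}, and the symmetric linear-dependence characterisation of Proposition \ref{constraint} makes it manifest that the constraint is invariant under the full $(S_3\times S_3)\rtimes\mathbb{Z}_2$ symmetry of the unordered pair of orbits (with $(z_0,z_1,z_2)$ held fixed). Hence the system qualifies as a rational triplet-pair system in the sense of Definition \ref{tps}.

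For the 5-simplex consistency the plan is to leverage the two consistency features already established: cube consistency (case $n=3$ of Proposition \ref{YBP}, i.e.\ the Yang-Baxter property of $F_I$), and edge-oriented tetrahedral consistency (Proposition \ref{octo}). Under the embeddings (\ref{cfs}) and (\ref{otfs}), the six cube polar-triangle-pair constraints together with the four tetrahedral polar-triangle-pair constraints realise only seven of the ten polar-triangle pairs of the 5-simplex, since three pairs lie in the intersection (for instance $\{1,2,6\},\{3,4,5\}$ arises both from the cube quad $(i=1,j=2,I=\emptyset)$ and from the tetrahedral face $\{1,2,4\}$). First I would use cube consistency to express each of the nine constrained 5-simplex variables as an explicit rational function of the six unconstrained ones $\{w_{12},w_{34},w_{56},w_{45},w_{16},w_{23}\}$.

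It then remains to verify four further polar-pair constraints: the tetrahedral polar pair $\{1,3,5\},\{2,4,6\}$ that is not a cube pair, together with the three polar pairs $\{1,3,6\},\{2,4,5\}$, $\{1,4,5\},\{2,3,6\}$, $\{1,4,6\},\{2,3,5\}$ belonging to neither embedding. The tetrahedron-exclusive constraint follows from Proposition \ref{octo}: the three cube-tetrahedron shared polar-pair constraints alone rationally determine all six tetrahedral-constrained variables $\{w_{13},w_{15},w_{24},w_{26},w_{35},w_{46}\}$ from the six free ones, so the cube and tetrahedral parameterisations must coincide on these six values, and then Proposition \ref{octo} forces the fourth tetrahedral constraint to hold automatically.

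The three genuinely new polar-pair constraints are the main obstacle. The cleanest approach is to reduce to canonical form via Proposition \ref{conform}: 5-simplex consistency is preserved under simultaneous M\"obius action on all variables, so it is enough to verify it in the three canonical cases $(z_0,z_1,z_2)\in\{(0,1,\infty),(0,\infty,\infty),(\infty,\infty,\infty)\}$. In each case, the substitution $\varphi$ of Table \ref{list} linearises the binary operation into addition in an abelian group, which tames the cube-parameterised expressions for the constrained variables. Each of the three residual constraints then becomes, after clearing denominators, a polynomial identity in the six free variables that is verified by direct expansion; the case $(\infty,\infty,\infty)$ is the most tractable because there the defining constraint (\ref{F4a}) is itself polynomial, so the cube-parameterised expressions are polynomial too and the residual identities are of modest degree. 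Symmetries among the three residual polar pairs and the internal triplet-pair symmetry of each constraint can then be exploited to reduce the number of independent polynomial identities that need explicit checking.
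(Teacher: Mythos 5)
Your proposal is correct and takes essentially the same route as the paper, whose proof of Proposition \ref{pdds} simply asserts that the 5-simplex consistency ``can be verified by calculation'' extending the cube and edge-oriented-tetrahedron consistency checks (case $n=3$ of Propositions \ref{YBP} and \ref{cons}). Your bookkeeping of the ten polar triangle-pairs (six covered via the cube embedding (\ref{cfs}), the tetrahedron-only pair disposed of via (\ref{otfs}) and Proposition \ref{octo}, leaving three residual identities for direct verification after reduction to the canonical cases of Proposition \ref{conform}) is a sound, more explicit organisation of that same calculation.
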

\begin{proof}
This can be verified by calculation.
Clearly from Corollary \ref{xcp} the calculation extends the ones required to verify the cubic and edge-oriented tetrahedral consistency properties (case $n=3$ of Propositions \ref{YBP} and \ref{cons}).
\end{proof}
Before introducing techniques to unify the $n$-cube and edge-oriented $n$-simplex lattice geometries also for $n>3$, we are now at a point where it is natural to establish contact with the integrable multi-quadratic quad-equations.

\subsection{Connection with the multi-quadratic quad-equations}\label{QUAD}
In Section \ref{F1Q4} it was shown how it is natural to understand the multi-quadratic quad-equation $Q4^*$ as being a consequence of the Yang-Baxter map $F_I$ imposed on the cube (recall Figure \ref{cube}).
The corresponding result generalised to the 5-simplex triplet-pair arrangement, and dealing now with all three possible sizes of the set $\{z_0,z_1,z_2\}$, is (in the notation of Section \ref{F1Q4}) as follows.
\begin{prop}[Connection with canonical forms in \cite{AtkNie}]\label{qq}
Impose on the 5-simplex triplet-pair arrangement (Definition \ref{5SPC}) the rational triplet-pair system defined by the idempotent-biquadratic three-orbit constraint (cf. Propositions \ref{constraint2} and \ref{pdds}).
Then there is a single polynomial equation relating the seven variables
\begin{equation}
w_{12}, w_{45}, w_{36}, w_{35}, w_{46}, w_{34}, w_{56}, \label{varlist}
\end{equation}
which is degree four in $w_{12}$ and degree two in each of the remaining six variables.
Changing notation by writing
\begin{equation}
w_{45}=x, \ w_{36}=\th{x}, \ w_{35}=\wt{x}, \ w_{46}=\wh{x}, \ w_{34}=p, \ w_{56}=q,\label{quadids}
\end{equation}
and making the substitution $(w_{12},z_0,z_1,z_2) = (c,-c,1/c,-1/c)$ for some fixed constant $c$, this polynomial equation coincides with equation $Q4^*$:
\begin{equation}
\begin{split}
&(p-q)[(c^{-2}p-c^2q)(x\wt{x}-\wh{x}\th{x})^2-(c^{-2}q-c^2p)(x\wh{x}-\wt{x}\th{x})^2]\\
&\quad -(p-q)^2[(x+\th{x})^2(1+\wt{x}^2\wh{x}^2)+(\wt{x}+\wh{x})^2(1+x^2\th{x}^2)]\\
&\quad +[(x-\th{x})(\wt{x}-\wh{x})(c^{-1}-cpq)-2(p-q)(1+x\wt{x}\wh{x}\th{x})]\\
&\quad \times [(x-\th{x})(\wt{x}-\wh{x})(c^{-1}pq-c)-2(p-q)(x\th{x}+\wt{x}\wh{x})]=0.
\end{split}
\label{QQ4again}
\end{equation}
On substitution $(w_{12},z_0,z_1,z_2)=(1,-1,\infty,\infty)$, it coincides with $Q3^*_{\delta=1}$:
\begin{equation}
\begin{split}
&(p-q)[p(u\wt{x}-\wh{x}\th{x})^2-q(x\wh{x}-\wt{x}\th{x})^2]-(p-q)^2[(x+\th{x})^2+(\wt{x}+\wh{x})^2]\\
&+[(x-\th{x})(\wt{x}-\wh{x})-p+q][(x-\th{x})(\wt{x}-\wh{x})(pq-1)-2(p-q)(x\th{x}+\wt{x}\wh{x})]=0.
\end{split}
\label{QQ3}
\end{equation}
And on substitution $(w_{12},z_0,z_1,z_2)=(0,\infty,\infty,\infty)$, it coincides with $Q2^*$:
\begin{equation}
\begin{split}
&(p-q)[p(x\wt{x}-\wh{x}\th{x})(x+\wt{x}-\wh{x}-\th{x})-q(x\wh{x}-\wt{x}\th{x})(x+\wh{x}-\wt{x}-\th{x})] \\
&+(x-\th{x})(\wt{x}-\wh{x})[p(x-\wh{x})(\wt{x}-\th{x})-q(x-\wt{x})(\wh{x}-\th{x})-pq(p-q)]=0.
\end{split}
\label{QQ2}
\end{equation}
\end{prop}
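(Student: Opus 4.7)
The plan is to reduce the claim to the computation outlined in Section \ref{F1Q4}, where $Q4^*$ was derived from $F_I$ on the cube (cf. Figure \ref{cube}). The embedding (\ref{cfs}) of the cube triplet-pair arrangement into the 5-simplex arrangement makes the transfer precise: it identifies $w_{34}$ and $w_{56}$ with the characteristic parameters of the quad closed by a third characteristic, $w_{45},w_{36},w_{35},w_{46}$ with its four edge variables, and $w_{12}$ with the third (closing) characteristic, whose value is fixed by the substitution in the statement.

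Concretely, I would take the six variables in (\ref{us}) as initial data; by Corollary \ref{xcp}, the remaining nine (and in particular $w_{35},w_{36},w_{46}$) are rational functions of these data and of $z_0,z_1,z_2$. The two initial-data variables $w_{16}$ and $w_{23}$ do not appear in (\ref{varlist}), so eliminating them produces an algebraic relation among the seven listed variables. That this elimination yields a single polynomial equation, rather than a higher-codimension ideal, follows by a parameter count: after elimination, a six-parameter family is mapped into a seven-dimensional coordinate space. The bidegrees claimed (four in $w_{12}$, two in each of the other six) can be tracked through the three-orbit constraint using the fact (cf. Corollary \ref{qrr}) that the constraint is quadratic in each edge variable; the higher degree in $w_{12}$ reflects its role as a characteristic parameter, cubic in each factor of (\ref{genh}).

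For each of the three canonical choices of $\{z_0,z_1,z_2\}$ in Proposition \ref{conform}, the three-orbit constraint reduces to the explicit system (\ref{F1a}), (\ref{F2a}), or (\ref{F4a}), equivalent to $F_I$, $F_{II}$, $F_{IV}$ by Proposition \ref{YBid}. In the first case, $(w_{12},z_0,z_1,z_2)=(c,-c,1/c,-1/c)$ corresponds to the M\"obius relabeling (\ref{sbs}), and the elimination is precisely the derivation of (\ref{QQ4again}) indicated in Section \ref{F1Q4}. In the second and third cases, $(w_{12},z_0,z_1,z_2)=(1,-1,\infty,\infty)$ and $(0,\infty,\infty,\infty)$ substitute into (\ref{F2a}) and (\ref{F4a}); the elimination is formally lighter because the defining relations are lower-degree polynomial identities, and after the relabeling (\ref{quadids}) the resulting equations can be compared directly with the canonical forms (\ref{QQ3}) and (\ref{QQ2}).

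The main obstacle is the verification step: performing the elimination of $w_{16}$ and $w_{23}$ and checking that the output is irreducible with exactly the stated bidegrees and matches the respective canonical form. For the first case this is essentially the calculation underlying the derivation of $Q4^*$ already implicit in Section \ref{F1Q4} and \cite{AtkNie}, transferred to the 5-simplex labelling via (\ref{cfs}); the remaining two cases are shorter variants due to the simpler forms (\ref{F2a}), (\ref{F4a}). Once the bidegrees are known the equation is pinned down up to a scalar factor, so matching a few leading monomials in each case suffices to complete the identification; the residual $S_{\{1,2\}}\times S_{\{3,4,5,6\}}$ symmetry of the 5-simplex arrangement acting on the seven variables, which restricts the admissible polynomials sharply, provides a useful consistency check on the final expressions.
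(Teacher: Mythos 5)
Your proposal is correct in outline and is, at its core, the same kind of argument the paper makes: the proposition is established by explicit elimination of all variables except (\ref{varlist}) from the imposed system, followed by comparison with the canonical forms. Where you differ is in the organisation of that computation. The paper performs the elimination once, for a generic third orbit with $|\{z_0,z_1,z_2\}|=3$, and records the answer in the symmetric cross-ratio form (\ref{phiphi})--(\ref{phidef}); the $Q3^*$ and $Q2^*$ cases are then obtained not by separate eliminations but as limits of that single expression, via $w_{12}=1$, $(z_0,z_1,z_2)\rightarrow(-1,z_1/\epsilon,z_2/\epsilon)$ and $w_{12}=0$, $(z_0,z_1,z_2)\rightarrow(z_0/\epsilon,z_1/\epsilon,z_2/\epsilon)$ with $\epsilon\rightarrow 0$. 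You instead propose three separate eliminations anchored to the reduced constraint forms of Proposition \ref{conform}, and, for the $Q4^*$ case, to the cube computation of Section \ref{F1Q4} transferred through the embedding (\ref{cfs}) -- which is legitimate, since by 5-simplex consistency the cube sub-system already determines the variables (\ref{varlist}) from the data (\ref{us}), so the induced relation is the same. What you lose relative to the paper is the single homogeneous expression exhibiting all three canonical equations as specialisations of one object, which is also what makes the symmetry-breaking role of $w_{12}$ transparent.

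Two points in your write-up are loose, though not fatal given that you acknowledge the full elimination must be carried out. First, (\ref{F1a}) and (\ref{F2a}) correspond to the orbits $(0,1,\infty)$ and $(0,\infty,\infty)$, whereas the proposition uses $(-c,1/c,-1/c)$ and $(-1,\infty,\infty)$; you note the M\"obius relabeling (\ref{sbs}) for the first case, but the second also requires an (affine) relabeling of the orbit, applied to all seven variables including $w_{12}$, before (\ref{F2a}) can be invoked. Second, the parameter count shows only that the seven variables satisfy \emph{some} relation (image of dimension at most six); that the relation ideal is principal, and that the bidegrees are as stated, comes out of the elimination itself, and ``the bidegrees pin the equation down up to a scalar'' is valid only once you know that the canonical form and the computed irreducible eliminant vanish on the same hypersurface and share those bidegrees -- it is not a substitute for the comparison, only a way to organise it.
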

\begin{proof}
The described polynomial relation can be obtained by straightforward elimination of all variables except those listed in (\ref{varlist}) from the system of equations in question.
It is a lengthy expression, but can be written in the following form when $|\{z_0,z_1,z_2\}|=3$,
\begin{multline}
\psi(z_0,z_1,z_2) + \psi(z_1,z_2,z_0)+\psi(z_2,z_0,z_1)+\\\psi(z_2,z_1,z_0)+\psi(z_0,z_2,z_1)+\psi(z_1,z_0,z_2)=0,\label{phiphi}
\end{multline}
where 
\begin{equation}
\begin{split}
\psi(z_0,z_1,z_2) & = (w_{12}-z_2)(z_0-z_1)\CR(w_{34},z_1,w_{56},z_2)\CR(w_{34},z_2,w_{56},z_0)\\
&\quad \times\big(\!\CR(w_{45},w_{12},w_{36},z_2)-\CR(w_{46},z_1,w_{35},z_0)\big)\\
&\quad \times\big(\!\CR(w_{45},z_1,w_{36},z_0)-\CR(w_{46},w_{12},w_{35},z_2)\big),
\end{split}
\label{phidef}
\end{equation}
and $\CR$ denotes the cross-ratio
\begin{equation}
\CR(a,b,c,d):=\frac{(a-b)(c-d)}{(a-c)(b-d)}.
\end{equation}
Starting from expression (\ref{phiphi}), the substitutions indicated in the proposition should be taken as limits in the second two cases. 
Specifically, the leading order term that remains in (\ref{phiphi}) upon substitution of $w_{12}=0$ and $(z_0,z_1,z_2)\rightarrow(z_0/\epsilon,z_1/\epsilon,z_2/\epsilon)$ and taking the limit $\epsilon\rightarrow 0$ is equivalent to (\ref{QQ2}) modulo identifications (\ref{quadids}).
The substitution $w_{12}=1$, $(z_0,z_1,z_2)\rightarrow(-1,z_1/\epsilon,z_2/\epsilon)$ leads similarly to (\ref{QQ3}).
And the substitution $w_{12}=c$, $(z_0,z_1,z_2)=(-c,1/c,-1/c)$ leads to (\ref{QQ4again}).
\end{proof}

The subset of variables (\ref{varlist}) corresponds to a singled-out edge of the 5-simplex connecting vertices $1$ and $2$, and the edges of its polar tetrahedron formed by the remaining vertices $3,4,5,6$, see Figure \ref{tetrahedron}.
The tetrahedral permutation symmetry of the polynomial equations (\ref{QQ4again}), (\ref{QQ3}) and (\ref{QQ2}) is a new observation which can be verified by calculation, like in the case of systems in Proposition \ref{YBid}, this additional symmetry is not obvious from the given expressions.
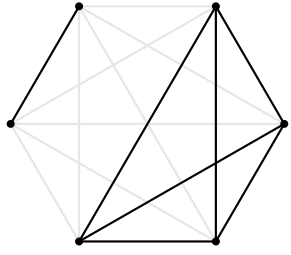
\begin{figure}[t]
\begin{center}
\begin{tikzpicture}[thick,scale=1.2]
  \draw[gray!20!white] (0:1.5) -- (60:1.5) -- (120:1.5) -- (180:1.5) -- (240:1.5) -- (300:1.5) -- (0:1.5) -- cycle;
  \draw[gray!20!white] (0:1.5) -- (120:1.5);
  \draw[gray!20!white] (0:1.5) -- (180:1.5);
  \draw[gray!20!white] (0:1.5) -- (240:1.5);
  \draw[gray!20!white] (60:1.5) -- (180:1.5);
  \draw[gray!20!white] (60:1.5) -- (240:1.5);
  \draw[gray!20!white] (60:1.5) -- (300:1.5);
  \draw[gray!20!white] (120:1.5) -- (240:1.5);
  \draw[gray!20!white] (120:1.5) -- (300:1.5);
  \draw[gray!20!white] (180:1.5) -- (300:1.5);
  \draw (180:1.5) -- (120:1.5);
  \draw (300:1.5) -- (60:1.5);
  \draw (0:1.5) -- (240:1.5);
  \draw (300:1.5) -- (0:1.5) -- (60:1.5) -- (240:1.5) -- cycle;
  \tikzstyle{every node}=[circle,fill=black,inner sep=0pt,minimum size=3pt]
  \node at (0:1.5) {};
  \node at (60:1.5) {};
  \node at (120:1.5) {};
  \node at (180:1.5) {};
  \node at (240:1.5) {};
  \node at (300:1.5) {};
\end{tikzpicture}
\end{center}
\caption{An edge of the 5-simplex and its polar tetrahedron.}
\label{tetrahedron}
\end{figure}
It is clear from Figure \ref{tetrahedron} that there are in fact 15 such seven-variable polynomial equations emerging as a consequence of the system described in Proposition \ref{qq}, corresponding to the 15 distinct edge-tetrahedron polar-pairs of the 5-simplex.
This system of 15 equations involves a total of 15 variables, and leaves at least 6 variables unconstrained according to Proposition \ref{pdds}, which is therefore a non-trivial consistency property for the seven-variable polynomial equation.
One can notice that, with respect to this consistency property, the specialisations leading to canonical forms (\ref{QQ4again}), (\ref{QQ3}) and (\ref{QQ2}) are unnatural in the sense that they break the symmetry of the 5-simplex by singling out variable $w_{12}$.
This is despite the fact that in the original context of the quad-equations, this specialisation is without loss of generality.
From the point of view of the theory developed here, these canonical forms are therefore parameter-deficient, and it is preferable to consider instead the homogeneous form (\ref{phiphi})-(\ref{phidef}), possibly with constant values chosen for $z_0,z_1,z_2$, but considering $w_{12}$ as being on an equal footing with the other variables.

We remark that Proposition \ref{qq} says there is exactly one polynomial relation between the seven variables (\ref{varlist}).
The immediate corollary is that any six of these variables are unconstrained.
Clearly then, the specified system of equations defines a rational transformation from the six variables (\ref{us}) to any six of the variables from (\ref{varlist}), and this is non-degenerate, meaning the inverse is only finitely multivalued.
Due to this non-degeneracy, all solutions of the seven-variable equation can be obtained rationally by expressing the participating variables (\ref{varlist}) in terms of the alternative set (\ref{us}).
This is a {\it rational reformulation}, existence of which is a non-trivial property of the polynomial equation, which is connected with the discriminant-factorisation at the basis of the original construction in \cite{AtkNie}.

The most salient aspect of Proposition \ref{qq} is the confirmation that the featured higher degree equations $Q4^*$, $Q3^*_{\delta=1}$ and $Q2^*$, are most naturally considered as derived equations from the simpler underlying rational system described in Proposition \ref{pdds}.
The subsequent considerations of this paper will therefore concern only the rational system.
The main question remaining about the multi-quadratic quad-equations is the nature of the relationship between the above described 5-simplex consistency and the usual cubic-consistency property.
The answer to this question is again combinatorial, it is connected with the symmetry-complete extension of arrangements in Definitions \ref{CTP} and \ref{EOTTP}.

\section{The vertex-map group}\label{FH}
In the previous section we introduced the 5-simplex consistency property, unifying the cubic and edge-oriented tetrahedral consistency.
To address the question of the relation between the $n$-cube and edge-oriented $n$-simplex lattice geometries also for $n>3$, we exploit a natural relation between lattice systems and birational groups, which allows application of group analysis to the problem.
The approach to studying lattice systems through birational groups is familiar from the setting of Yang-Baxter maps, as well as in the theory of the discrete Painlev\'e equations.

\subsection{The general notion of an induced birational group}\label{BRF}
The general setting for the induced birational group is a set of variables $V$ on which there are imposed equations such that, (i) the equations are invariant under permutations of the variables forming a group $G$ and, (ii) the equations determine rationally all variables remaining in $V$ from some unconstrained subset $\{x_1,x_2,\ldots\}\subset V$ (which may also be called the initial data set).

This setting allows to place the unconstrained subset of variables in an array $X=[x_1,x_2,\ldots]$, and define a function on $g\in G$ as $X(g)=[g(x_1),g(x_2),\ldots]$.
It is then clear that, by assumption (ii), the variables in array $X(g)$ may be expressed as rational functions of those in $X=X(\id)$.
This is what we call the induced rational action of $g$, and we write
\begin{equation}\label{Xind}
X(g) = g(X(\id)), \quad g\in G.
\end{equation}
We use the symbol $g$ also for the induced rational action (on the right-hand-side in (\ref{Xind})) because it will always be distinguishable by context.
Given the induced rational action of some fixed $g\in G$, it is clear that $X(gh)=g(X(h))$ for all $h\in G$.
This is because of (i), i.e., the equations defining the rational action of $g$ are invariant under permutations $h\in G$.
It therefore follows that
\begin{equation}
[gh](X(\id)) = X(gh) = g(X(h)) = g(h(X(\id))), \quad h,g\in G,
\end{equation}
so the composition of rational actions is consistent with the permutation composition.
The induced rational actions therefore constitute a birational representation of $G$.

The birational group is {\it equivalent} to the original system of equations, or {\it non-degenerate}, if application of the group to the initial data array yields all remaining variables in $V$.
That is, when $\{g(x): g\in G, x\in \{x_1,x_2,\ldots\}\} = V$.
All reformulations considered here will be non-degenerate in this sense.

\subsection{Basic participating mappings}\label{fg}
We make here a preliminary construction, introducing a set of primitive mappings which turn out to be very useful to express the emerging birational groups.
They are constructed from a generic rational triplet-pair system (definition \ref{tps}) as follows.
For each $j\in\{1,\ldots,n\}$ impose the rational triplet-pair system on all triplet-pairs
\begin{equation}
y_j,x_i,\wb{x}_i, \quad x_j,y_i,\wb{y}_i, \qquad i \in \{1,\ldots,n\}\setminus\{j\},\label{mc}
\end{equation}
and set $\wb{x}_j=x_j,\wb{y}_j=y_j$.
This means the variables 
\begin{equation}\label{Fvars}
\wb{x}_1,\ldots,\wb{x}_n,\wb{y}_1,\ldots,\wb{y}_n
\end{equation}
are rational functions of the variables 
\begin{equation}\label{Evars}
{x}_1,\ldots,{x}_n,{y}_1,\ldots,{y}_n.
\end{equation}
We then define a rational mapping, $\sigma_j$, acting on a $2\times n$ array of variables as follows
\begin{equation}
\sigma_j:=\flm{x_1,y_1}{x_n,y_n}\mapsto\flm{\wb{x}_1,y_1}{\wb{x}_n,y_n}.\label{vertex}
\end{equation}
Denoting the mapping which transposes the two columns by $\omega$,
\begin{equation}
\omega:=\flm{x_1,y_1}{x_n,y_n}\mapsto\flm{y_1,x_1}{y_n,x_n},\label{omegadef}
\end{equation}
allows, by conjugation, to introduce the natural companion to $\sigma_j$.
We denote it by $\sigma_j^\omega:=\omega\sigma_j\omega^{-1}$:
\begin{equation}
\sigma_j^\omega=\flm{x_1,y_1}{x_n,y_n}\mapsto\flm{x_1,\wb{y}_1}{x_n,\wb{y}_n}.\label{conjvertex}
\end{equation}
Due to the symmetry of the rational triplet-pair system, the mappings $\sigma_j$ and $\sigma_j^\omega$ commute, and their composition yields
\begin{equation}\label{comm}
\sigma_j\sigma_j^\omega=\sigma_j^\omega\sigma_j=\flm{x_1,y_1}{x_n,y_n}\mapsto\flm{\wb{x}_1,\wb{y}_1}{\wb{x}_n,\wb{y}_n}.
\end{equation}

For the rational triplet-pair systems being considered here, that is, the idempotent biquadratic three-orbit constraint in which the third orbit $z_0,z_1,z_2$ is constant, the explicit form of equations determining variables (\ref{Fvars}) as rational functions of variables (\ref{Evars}) are as follows:
\begin{equation}
\begin{split}
&\begin{array}{l}
h(x_j,y_i;z_0,z_1,z_2;x_i,y_j,\wb{x}_i)=0,\\
h(y_j,x_i;z_0,z_1,z_2;y_i,x_j,\wb{y}_i)=0,
\end{array}
\quad i\in\{1,\ldots,n\}\setminus \{j\},\\
&\begin{array}{l}
\wb{x}_j = x_j,\\
\wb{y}_j = y_j,
\end{array}
\end{split}
\label{vvertex}
\end{equation}
where the polynomial $h$ was defined in (\ref{genh}).

\subsection{Vertex-maps of the $n$-cube}\label{CM}
Here we give the result of applying the procedure described in Section \ref{BRF} to a rational triplet-pair system imposed on the triplet-pair arrangement of the $n$-cube, such as the system in Proposition \ref{YBP}.
\begin{prop}\label{BRBC}
A rational triplet-pair system (Definition \ref{tps}) imposed on all triplet-pairs of the $n$-cube arrangement (Definition \ref{CTP})
which leaves variables of the array
\begin{equation}\label{ca}
X:=\flm{x_1^{\{\}},y_1}{x_n^{\{\}},y_n}
\end{equation}
unconstrained, induces a birational representation of the symmetry group of the $n$-cube ($BC_n$) generated by the mappings
\begin{equation}\label{cubegenerators}
\sigma_1, \pi_{1,2},\ldots,\pi_{n-1,n}
\end{equation}
where $\sigma_1$ was the rational mapping defined in (\ref{vertex}), and $\pi_{i,j}$ transposes rows $i$ and $j$ of the array.
\end{prop}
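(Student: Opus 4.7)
The plan is to apply the general framework of Section~\ref{BRF} to the cube triplet-pair arrangement, and then to identify each listed mapping in (\ref{cubegenerators}) with the induced rational action of a specific element of $BC_n$.

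The first step is routine. By construction in Definition~\ref{CTP}, the geometric symmetry group $BC_n$ of the $n$-cube acts on the arrangement by permuting variables in a way that permutes triplet-pairs, so it is a symmetry group of the arrangement in the sense of Definition~\ref{symdef}. Because the rational triplet-pair system is, by Definition~\ref{tps}, invariant under the symmetries of each individual triplet-pair, and $BC_n$ just permutes triplet-pairs as wholes, the defining equations as a set are $BC_n$-invariant. Combined with the hypothesis that $X$ is an unconstrained (initial) data set, this puts us exactly in the situation of Section~\ref{BRF}, yielding an induced birational representation of $BC_n$ on $X$.

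The substantive step is the identification of generators. For each row-transposition $\pi_{i,j}$ this is immediate: the axis-swap $(i\,j) \in S_n \subset BC_n$ sends $(x_k^{\{\}}, y_k)$ to $(x_{(i\,j)k}^{\{\}}, y_{(i\,j)k})$, which is exactly the swap of rows $i$ and $j$ in $X$. The real content is to identify $\sigma_1$ with the axis-$1$ coordinate reflection $r_1 \in BC_n$. That reflection fixes every $y_k$ and fixes $x_1^{\{\}}$, and sends $x_i^{\{\}}$ to $x_i^{\{1\}}$ for $i > 1$; thus the induced rational action of $r_1$ replaces row $i > 1$ of $X$ by $(x_i^{\{1\}}, y_i)$, with $x_i^{\{1\}}$ computed by the cube triplet-pair at face $(1, i, \emptyset)$, i.e.\ the triplet-pair $\{y_1, x_i^{\{\}}, x_i^{\{1\}}\}$, $\{y_i, x_1^{\{\}}, x_1^{\{i\}}\}$. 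This coincides with the triplet-pair $\{y_1, x_i, \bar{x}_i\}$, $\{x_1, y_i, \bar{y}_i\}$ used to define $\sigma_1$ in (\ref{mc})--(\ref{vertex}), under the identification $\bar{x}_i = x_i^{\{1\}}$, $\bar{y}_i = x_1^{\{i\}}$, once the permutation symmetries within each triplet (granted by Definition~\ref{tps}) are used to reorder variables.

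The proof is completed by the standard Coxeter-theoretic fact that $BC_n$ is generated by any single coordinate reflection together with the transpositions of consecutive axes, so the elements $\sigma_1, \pi_{1,2}, \ldots, \pi_{n-1,n}$ generate the whole induced representation. The only non-formal point is the equation-matching in the identification of $\sigma_1$ with $r_1$; this is essentially bookkeeping in the symmetries of the defining triplet-pair equations, and it is the one step that uses the concrete structure of the cube arrangement rather than the generalities of Section~\ref{BRF}.
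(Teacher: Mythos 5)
Your proof is correct and follows essentially the same route as the paper's: invoke the framework of Section~\ref{BRF} with the $BC_n$-invariance of the cube system, then identify the axis-$1$ reflection with $\sigma_1$ (via $\wb{x}_i=x_i^{\{1\}}$, $\wb{y}_i=x_1^{\{i\}}$) and the adjacent axis swaps with the row transpositions $\pi_{i,i+1}$. The only difference is that you spell out the generator-matching that the paper dismisses as ``straightforward to verify,'' which is fine.
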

\begin{proof}
The standard generators of the $n$-cube symmetry group consist of one reflection through axis 1, and the $n-1$ reflections that interchange axis $i$ with axis $i+1$, for $i\in\{1,\ldots,n-1\}$, whilst fixing the remaining axes.
The corresponding action on the variables (\ref{ecvars}) clearly leaves the system of equations unchanged.
It is straightforward to verify that the reflection in axis $1$ induces, via initial data array (\ref{ca}), the rational mapping $\sigma_1$ (\ref{vertex}).
The remaining generators only permute the unconstrained subset of variables by interchanging row $i$ and $j$ in (\ref{ca}).
\end{proof}
The mappings (\ref{cubegenerators}) correspond to standard generators of $BC_n$, which has order $2^n n!$.
Note that the variables on characteristics, $y_1,\ldots,y_n$, play a passive role here, the group acts on them only by permutation.

Besides the variables on characteristics, the array $X$ in (\ref{ca}) contains variables on all edges connected to one vertex of the $n$-cube, and for this reason we refer to $X$ as vertex-type initial data and the induced rational mappings as vertex-maps.
We remark that it is often usual to consider the different path-type initial data, which corresponds to the choice of $X$ as
\begin{equation}
X=\left[\begin{array}{ll}
x_1^{\{\}}&, y_1\\
x_2^{\{1\}}&, y_2\\
x_3^{\{1,2\}}&, y_3\\
&\vdots\\
x_n^{\{1,\ldots,n-1\}}&, y_n
\end{array}\right].\label{ccaa}
\end{equation}
From this choice of $X$ one obtains the more usual form of the braid-type mappings \cite{Ves} in which the first generator, corresponding to $\sigma_1$ in (\ref{cubegenerators}) is omitted, the resulting birational group is still equivalent to the underlying system of equations (in the sense described at the end of Section \ref{BRF}).
The choice (\ref{ca}) instead of (\ref{ccaa}) is preferable here.

\subsection{Vertex-maps of the edge-oriented $n$-simplex}\label{SM}
The procedure described in Section \ref{BRF} applied to a rational triplet-pair system imposed on the triplet-pair arrangement of the edge-oriented $n$-simplex, such as the system described in Proposition \ref{cons}, yields the following.
\begin{prop}\label{BRA}
A rational triplet-pair system (Definition \ref{tps}) imposed on all triplet-pairs of the $n$-simplex arrangement (Definition \ref{EOTTP}) which leaves variables of the array
\begin{equation}\label{nsg}
X:= \left[\begin{array}{lcl}
x_{12}&, &x_{21}\\
x_{13}&, &x_{31}\\
&\vdots\\
x_{1(n+1)}\!\!&, &x_{(n+1)1}
\end{array}\right]
\end{equation}
unconstrained, induces a birational representation of the symmetry group of the edge-oriented $n$-simplex ($A_n\times A_1$) generated by the mappings
\begin{equation}\label{simplexgenerators}
\sigma_1\omega\sigma_1, \pi_{1,2},\ldots,\pi_{n-1,n}, \omega
\end{equation}
where $\sigma_1$ and $\omega$ were defined in (\ref{vertex}) and (\ref{omegadef}), whilst $\pi_{i,j}$ is the mapping which transposes rows $i$ and $j$.
\end{prop}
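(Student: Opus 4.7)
The plan is to mimic the proof of Proposition~\ref{BRBC}, applying the induced-birational-group construction of Section~\ref{BRF}. First I would identify the full symmetry group of the edge-oriented $n$-simplex as $S_{n+1}\times\mathbb{Z}_2\cong A_n\times A_1$: the $A_n$ factor acts on vertex labels by $x_{ij}\mapsto x_{\pi(i)\pi(j)}$, while the $A_1$ factor is the involution $r$ reversing every edge orientation, $x_{ij}\mapsto x_{ji}$. A direct check using Definition~\ref{EOTTP} shows that both actions permute the triplet-pairs, so the mechanism of Section~\ref{BRF} delivers a birational representation of the whole group on the initial data $X$ in (\ref{nsg}); it remains only to identify the induced action of a chosen set of generators.

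Taking as Coxeter generators the adjacent transpositions $s_k=(k,k+1)\in S_{n+1}$ for $k=1,\ldots,n$ together with $r$, the two easy cases come first. For $k\ge 2$, $s_k$ fixes vertex $1$, so it merely swaps rows $k-1$ and $k$ of $X$, realising $\pi_{k-1,k}$. The reversal $r$ interchanges $x_{1j}$ with $x_{j1}$ for every $j\ge 2$, so it swaps the two columns of $X$, realising $\omega$ as in (\ref{omegadef}). This accounts for $n$ of the $n+1$ generators in the claimed list (\ref{simplexgenerators}).

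The substantive step is to show that $s_1=(1,2)$ acts as $\sigma_1\omega\sigma_1$. The target action sends row $1$ of $X$ to $(x_{21},x_{12})$ and row $i\ge 2$ to $(x_{2,i+1},x_{i+1,2})$, the latter entries being expressible as rational functions of the initial data via the three-orbit constraint on the triangle $\{1,2,i+1\}$ whose two triplets are $\{x_{12},x_{2,i+1},x_{i+1,1}\}$ and $\{x_{21},x_{i+1,2},x_{1,i+1}\}$. I would trace $\sigma_1\omega\sigma_1$ through $X$: identifying $x_i=x_{1,i+1}$ and $y_i=x_{i+1,1}$ in the generic definition (\ref{vvertex}), the first $\sigma_1$ has $\bar x_i$ determined by equations that, by Lemma~\ref{exists}, pin down the same idempotent biquadratic as the triangle $\{1,2,i+1\}$; matching triplets then forces $\bar x_i=x_{i+1,2}$ and $\bar y_i=x_{2,i+1}$. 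Applying $\omega$ swaps columns. For the second $\sigma_1$ the slot-identifications become $x_1=x_{21}$, $y_1=x_{12}$, $x_i=x_{i+1,1}$, $y_i=x_{i+1,2}$; invoking the same triangle once more now forces the new $\bar x_i=x_{2,i+1}$. The composite therefore agrees with $X(s_1)$ as required.

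The main obstacle is the book-keeping for the second $\sigma_1$: after the intervening $\omega$ one must carefully re-identify which simplex variable occupies each generic slot before invoking the triplet-pair equations, and verify that the idempotent biquadratic singled out really is the one associated with triangle $\{1,2,i+1\}$ rather than some other triangle sharing two vertices. Once the generators' actions are matched, the Coxeter relations of $A_n\times A_1$ follow automatically: by the general principle of Section~\ref{BRF} the induced mappings compose in accordance with composition in $S_{n+1}\times\mathbb{Z}_2$.
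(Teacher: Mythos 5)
Your proposal is correct and follows essentially the same route as the paper: identify the standard generators of the edge-oriented $n$-simplex symmetry group, note that the transpositions fixing vertex $1$ and the orientation reversal act as the pure permutations $\pi_{k-1,k}$ and $\omega$ on the array (\ref{nsg}), and verify that the transposition $(1,2)$ induces $\sigma_1\omega\sigma_1$, with the composition law supplied by the general mechanism of Section \ref{BRF}. The only difference is that you spell out the triplet-matching trace through $\sigma_1\omega\sigma_1$ (which the paper dismisses as ``straightforward to verify''), and your appeal to Lemma \ref{exists} there is unnecessary --- the matching of triplets works for any rational triplet-pair system by the symmetry required in Definition \ref{tps}, which keeps the argument at the generality of the proposition.
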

\begin{proof}
The standard generators of the symmetry group of the $n$-simplex act by permuting the $n+1$ vertices (associated with indices), specifically interchanging pairs of vertices $i$ and $i+1$ for each $i\in\{1,\ldots,n\}$.
The induced permutations of edges, and thus permutations of variables (\ref{evars}), clearly leave the system of equations unchanged.
The array $X$ (\ref{nsg}) singles out vertex $1$, because it contains variables on all edges connected to that vertex, so the action of the first generator is distinguished (it is the only one involving vertex $1$).
It is straightforward to verify that the first generator (interchanging vertices $1$ and $2$) induces the rational mapping $\sigma_1\omega\sigma_1$, whereas the remaining generators act by pure permutation on the unconstrained variable set, in particular interchanging rows $i$ and $i+1$ of the array $X$.
The final generator listed in (\ref{simplexgenerators}) is again a pure permutation of the unconstrained variable set, it comes from the overall reversal of edge orientations, which commutes with the rest of the group.
\end{proof}
The first $n$ mappings in (\ref{simplexgenerators}) correspond to standard generators of the $n$-simplex symmetry group, $A_n$, which has order $(n+1)!$.
The last mapping in (\ref{simplexgenerators}), reversing orientation of edges, commutes with the other generators, so the full group is $A_n\times A_1$.
The array (\ref{nsg}) contains variables on all edges attached to vertex $1$, so again it may be called vertex-type initial data.

\subsection{Birational group of the 5-simplex}\label{5SVM}
Here we apply the procedure described in Section \ref{BRF} to the system appearing in the definition of 5-simplex consistency (Definition \ref{fsc}).
\begin{prop}\label{BRA5}
A rational triplet-pair system imposed on all triplet-pairs of the 5-simplex arrangement (Definition \ref{5SPC}) which leaves variables of the array
\begin{equation}\label{gg}
X:= \left[\begin{array}{ll}
w_{12},&w_{54}\\
w_{34},&w_{16}\\
w_{56},&w_{32}
\end{array}\right]
\end{equation}
unconstrained (i.e., is 5-simplex consistent, cf. Definition \ref{fsc}), induces a birational representation of the symmetry group of the 5-simplex ($A_5$) generated by the mappings
\begin{equation}\label{fivegenerators}
\sigma_1^\omega,\sigma_3,\sigma_2^\omega,\sigma_1,\sigma_3^\omega,
\end{equation}
where $\sigma_1,\sigma_2,\sigma_3$ and $\omega$ are the rational mappings defined in (\ref{vertex}) and (\ref{omegadef}) in the particular case $n=3$.
\end{prop}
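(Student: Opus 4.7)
The plan is to invoke the general induced-birational-group construction of Section~\ref{BRF} for the natural $S_6$-action on the 5-simplex, and then to match the five mappings in (\ref{fivegenerators}) one by one with the induced actions of a set of Coxeter generators of $A_5$. Since every triplet-pair in Definition~\ref{5SPC} is a pair of polar triangles, the permutation of edge-variables $w_{ij}$ induced by any vertex permutation preserves the triplet-pair arrangement. Combined with the triplet-pair invariance of the defining system (Definition~\ref{tps}), this makes the whole system $S_6$-invariant, so Section~\ref{BRF} yields a birational representation of $A_5$ on the array $X$ in (\ref{gg}). It remains to realise five generators in the form (\ref{fivegenerators}).

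The combinatorial observation that drives the identification is that the six entries of $X$ are precisely the edges of the Hamiltonian $6$-cycle on vertices $1,2,3,4,5,6$ (in that cyclic order): the left column $\{w_{12},w_{34},w_{56}\}$ is one perfect matching of this cycle and the right column $\{w_{45},w_{16},w_{23}\}$ is the complementary one. I take as Coxeter generators of $A_5$ the five adjacent vertex transpositions $(k,k{+}1)$, $k=1,\ldots,5$. Each is an automorphism of the Hamiltonian cycle: it fixes one matching entirely, swaps two edges of the other matching, and leaves fixed the remaining edge of that matching (the unique one containing neither of the two swapped vertices). So the induced action on $X$ fixes one entire column and one row of the other column --- precisely the fixed-point pattern of $\sigma_j^\omega$ (which fixes the left column and row $j$ of the right, by (\ref{vertex}) and (\ref{conjvertex})) or of $\sigma_j$.

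Tracking which column and which row are fixed for each transposition yields the identifications $(1,2)\leftrightarrow\sigma_1^\omega$, $(2,3)\leftrightarrow\sigma_3$, $(3,4)\leftrightarrow\sigma_2^\omega$, $(4,5)\leftrightarrow\sigma_1$, and $(5,6)\leftrightarrow\sigma_3^\omega$, matching the order in (\ref{fivegenerators}). Since the adjacent transpositions generate $A_5$, the corresponding induced maps generate the whole representation. What remains is to confirm, for each of the five transpositions, that the two moved entries of $X$ really are determined by the equations (\ref{vvertex}) defining the matching $\sigma_j$ or $\sigma_j^\omega$: that is, that the pair of triplet-pairs used in (\ref{mc}) coincide, under the identification (\ref{gg}), with two polar-triangle pairs of the 5-simplex arrangement containing the two moved variables.

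The main obstacle is this last bookkeeping; it is controlled in each case by the fact that three edges in a polar-triangle pair uniquely determine the pair. As a representative case, for $(5,6)\leftrightarrow\sigma_3^\omega$ the moved entries are the first two of the right column of $X$; for $i=1$ the triplet-pair in (\ref{mc}) is $\{y_3,x_1,\wb{x}_1\}$ and $\{x_3,y_1,\wb{y}_1\}$, which under (\ref{gg}) becomes $\{w_{32},w_{12},\wb{x}_1\}$ and $\{w_{56},w_{54},\wb{y}_1\}$, and the unique polar-triangle pair of the 5-simplex compatible with the four known entries is the one on vertex sets $\{1,2,3\}$ and $\{4,5,6\}$; this forces $\wb{y}_1=w_{46}$, in agreement with the action of $(5,6)$ on $y_1=w_{54}$. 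The remaining nine such checks (five transpositions, two moved entries each) are analogous and combinatorially determined, completing the proof.
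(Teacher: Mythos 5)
Your proof follows essentially the same route as the paper's: take the standard generators of the Coxeter group $A_5\cong S_6$ to be the adjacent vertex transpositions $(k,k{+}1)$, and verify directly that their induced actions on the array (\ref{gg}) are the five maps (\ref{fivegenerators}). The paper simply declares this verification straightforward; your final paragraph, with the representative check of $(5,6)\leftrightarrow\sigma_3^\omega$, carries it out correctly. One intermediate sentence is wrong, though: an adjacent transposition is \emph{not} an automorphism of the Hamiltonian cycle $1,2,3,4,5,6$, and it does not swap two edges of the complementary matching --- for instance $(1,2)$ sends $w_{16},w_{23}$ to $w_{26},w_{13}$, which lie outside the array altogether. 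This point is not cosmetic: if the two non-fixed entries were merely swapped, the induced action would be a pure permutation of $X$ and could not coincide with the nontrivial rational maps $\sigma_j$, $\sigma_j^\omega$. The correct statement, which your last paragraph implicitly relies on, is that each transposition fixes one matching elementwise, fixes the single edge of the other matching disjoint from the transposed pair, and sends the remaining two edges out of the array, so that those two entries must be recovered from the triplet-pair equations (\ref{vvertex}) on the appropriate polar-triangle pairs --- exactly the bookkeeping you then do. With that sentence corrected, your argument is sound and is the same as the paper's, with the verification made explicit.
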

\begin{proof}
The symmetry group of the 5-simplex acts naturally on the vertices, and vertices correspond to indices of the edge-variables appearing in Definition \ref{5SPC}.
Standard generators of the group interchange pairs of indices: $1\leftrightarrow 2$, $2\leftrightarrow 3$, $3 \leftrightarrow 4$, $4\leftrightarrow 5$, $5\leftrightarrow 6$.
That the corresponding permutations of edge variables in array (\ref{gg}) induce the mappings (\ref{fivegenerators}) is straightforward to verify.
\end{proof}
The rational mappings (\ref{fivegenerators}) correspond to the standard generators of $A_5$.

\subsection{Characterisation of the complete vertex-map group}\label{CVM}
The main freedom we have in obtaining the birational group through the procedure of Section \ref{BRF} is the particular choice of the unconstrained variable set; modulo the symmetry group of the system (i.e. the group $G$ of Section \ref{BRF}) there are usually a few different ones that are possible.
For a system in isolation, such choice is quite irrelevant in the sense that they lead to birationally equivalent groups.
But here we have taken care to highlight the particular choice made, which is the vertex-type unconstrained variable array in the $n$-cube and $n$-simplex birational group construction.
This is a very deliberate choice which has been motivated by the case $n=3$: it exploits the previously established combinatorial unification described in Section \ref{5S}.
In particular, the combinatorial embedding of the cube and edge-oriented tetrahedron triplet-pair arrangements into the 5-simplex one ((\ref{cfs}) and (\ref{otfs}) respectively) establish the correct identification of variables from the unconstrained-variable array.
This manifests in the following fact.
\begin{prop}\label{coincidence}
The birational group of the 5-simplex, i.e., the group generated by mappings (\ref{fivegenerators}), coincides with the birational group obtained by combining generators of the cube and edge-oriented tetrahedron birational groups, respectively (\ref{cubegenerators}) and (\ref{simplexgenerators}), in the case $n=3$.
\end{prop}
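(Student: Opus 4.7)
My plan is to exploit Proposition~\ref{5s} and the explicit identifications (\ref{cfs}) and (\ref{otfs}) to realise all three birational groups on a common six-variable space, and then appeal to the maximality of a specific subgroup of $S_6$ to force equality.

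First I would check that the unconstrained variable arrays of the cube (\ref{ca}), the edge-oriented tetrahedron (\ref{nsg}) and the 5-simplex (\ref{gg}), case $n=3$ throughout, consist of the same six variables $w_{12},w_{34},w_{56},w_{45},w_{16},w_{23}$ up to row and column rearrangement. Under (\ref{cfs}) and (\ref{otfs}), each cube and each tetrahedron triplet-pair is identified with a 5-simplex triplet-pair, and together they exhaust all ten of the latter; hence the cube and tetrahedron equation systems are complementary subsystems of the 5-simplex system. Since each subsystem already determines the same nine constrained variables rationally in terms of the same unconstrained six (Propositions~\ref{YBP} and~\ref{cons}), the induced actions of all three systems coincide on the common six-variable space. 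Consequently every generator in (\ref{cubegenerators}) and (\ref{simplexgenerators}) (with $n=3$) is realised as a map belonging to the 5-simplex birational group, proving containment of the combined group in the 5-simplex group.

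For the reverse inclusion I would identify the 5-simplex birational group with $S_6$ via Proposition~\ref{BRA5}. From (\ref{cfs}), the image of the cube symmetry group in $S_6$ is precisely the stabiliser of the partition $P=\{\{1,2\},\{3,4\},\{5,6\}\}$, that is the imprimitive wreath-product subgroup $S_2\wr S_3$ of order $48$, which is a well-known maximal subgroup of $S_6$. Tracing the tetrahedron generator $\sigma_1\omega\sigma_1$ through (\ref{otfs}) shows that it realises the $S_6$ element $(15)(24)(36)$, which sends the block $\{1,2\}$ to $\{4,5\}$ and therefore does not preserve $P$. The combined group thus properly contains $S_2\wr S_3$, and by maximality must coincide with the whole of $S_6$, i.e.\ with the 5-simplex birational group.

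The main technical step is the compatibility statement used in the first inclusion: one must argue that applying the induced-action procedure of Section~\ref{BRF} to the cube, tetrahedron or 5-simplex equation systems separately yields literally the same rational maps on the common six-variable set. This reduces to the uniqueness of the rational expressions for the nine constrained variables, which holds because the cube and tetrahedron subsystems are each individually sufficient to determine them (Propositions~\ref{YBP} and~\ref{cons}), while the remaining 5-simplex equations impose only a consistency condition that is automatic by Proposition~\ref{pdds}.
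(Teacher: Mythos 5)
Your proposal is sound in outline and reaches the correct conclusion, but it distributes the work differently from the paper. For the inclusion of the combined cube/tetrahedron group in the 5-simplex group you argue, much as the paper does, that the extra combined generators are induced maps of 5-simplex symmetries: the paper identifies $\omega$, $\pi_{1,2}$, $\pi_{2,3}$ with the index permutations $(1,2,3)\leftrightarrow(4,5,6)$, $(1,3,5)\leftrightarrow(4,2,6)$, $(1,3,5)\leftrightarrow(2,6,4)$ and exhibits them as explicit words (\ref{omegarel})--(\ref{p23}) in the generators (\ref{fivegenerators}), whereas you obtain membership non-constructively from the subsystem-compatibility argument (Propositions \ref{YBP}, \ref{cons}, \ref{pdds} together with the embeddings (\ref{cfs}), (\ref{otfs})). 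That is legitimate, though it does not produce the explicit relations the paper reuses later (Lemma \ref{PIJLEM}, Proposition \ref{vmg}). For the reverse inclusion you invoke maximality of $S_2\wr S_3$ in $S_6$; the ingredients check out (the cube symmetries do embed as a partition stabiliser, the tetrahedron generator does correspond to $(15)(24)(36)$, and $S_2\wr S_3$ is indeed maximal), but this is far heavier than necessary: the paper gets this direction in one line from $\pi_{i,j}\sigma_i=\sigma_j\pi_{i,j}$ and $\sigma_i^\omega=\omega\sigma_i\omega$, since $\sigma_1,\pi_{1,2},\pi_{2,3},\omega$ are themselves among the combined generators, and your maximality route in any case leans on the same compatibility machinery as the other inclusion.

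Two points need tightening in a write-up. First, the initial-data arrays (\ref{ca}), (\ref{nsg}) and (\ref{gg}) label the same six variables differently (they differ by the column swap $\omega$ and the row transposition $\pi_{2,3}$), so one lattice symmetry induces different abstract maps depending on the array used: the cube reflection through axis 1 is the index permutation $(12)$, which induces $\sigma_1$ via (\ref{ca}) but $\sigma_1^\omega$ via (\ref{gg}), while $\sigma_1$ itself arises via (\ref{gg}) from $(45)$. Consequently statements such as ``the induced actions of all three systems coincide'' and ``$\sigma_1\omega\sigma_1$ realises $(15)(24)(36)$'' are only correct after conjugating by these relabelling maps; your group-level conclusions survive because $\omega$ and the $\pi_{i,j}$ lie in both groups (and for the tetrahedron generator the discrepancy even cancels, using $(\sigma_1\omega)^4=\id$), but the bookkeeping must be made explicit, since as stated the literal identifications are false. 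Second, the tetrahedron subsystem involves only twelve of the fifteen variables and therefore determines six, not ``the same nine'', constrained variables; what your argument actually requires is only that wherever a subsystem determines a variable its value agrees with that given by the full system, which your uniqueness-plus-consistency remark does supply.
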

\begin{proof}
As mentioned, this is basically due to the combinatorial relationship between the respective consistency properties established in Section \ref{5S}, but it can also be verified directly.
Bearing in mind that $\pi_{i,j}$ acts by interchanging rows $i$ and $j$, it is easy to see from the definitions in Section \ref{fg} that $\pi_{i,j}\sigma_i=\sigma_j\pi_{i,j}$ and $\pi_{i,j}\sigma_i^\omega=\sigma_j^{\omega}\pi_{i,j}$.
It is therefore clear that the group generated by (\ref{fivegenerators}) is a subgroup of the group generated by case $n=3$ of (\ref{cubegenerators}) combined with case $n=3$ of (\ref{simplexgenerators}).
To verify that this is not a proper subgroup, but actually the whole group, it is sufficient to verify that $\omega$, $\pi_{1,2}$, and $\pi_{2,3}$ emerge as some word in the generators (\ref{fivegenerators}).
In turn this is simple to do by making the transition back from the birational group generated by (\ref{fivegenerators}) to the permutation action on vertices of the 5-simplex.
In particular, observe that the action of $\omega$ (\ref{omegadef}) on (\ref{gg}) corresponds to permutation of indices $(1,2,3)\leftrightarrow (4,5,6)$, whilst $\pi_{1,2}$ corresponds to $(1,3,5)\leftrightarrow (4,2,6)$ and $\pi_{2,3}$ corresponds to $(1,3,5)\leftrightarrow (2,6,4)$.
For completeness we express these permutations in the standard generators which allows to write the following.
\begin{eqnarray}
&&\omega=\sigma_2^\omega\sigma_3\sigma_1^\omega\sigma_1\sigma_2^\omega\sigma_3\sigma_3^\omega\sigma_1\sigma_2^\omega,\label{omegarel}\\
&&\pi_{1,2} = (\omega\sigma_1\sigma_2)^3,\label{p12}\\
&&\pi_{2,3} = (\omega\sigma_2\sigma_3)^3.\label{p23}
\end{eqnarray}
Note that the right-hand-side of (\ref{p12}) and (\ref{p23}) are expressed in terms of generators (\ref{fivegenerators}) only modulo the first relation (\ref{omegarel}).
We stress that the equality in (\ref{omegarel}) does not hold if $n>3$.
\end{proof}
Proposition \ref{coincidence} is the basic motivating fact for how to proceed, it shows that the birational group of the 5-simplex can be recovered constructively by combining the vertex maps of the cube and edge-oriented tetrahedron.
Based on this, we consider now the birational group obtained by combining the generators of the $n$-cube birational group (\ref{cubegenerators}) and edge-oriented $n$-simplex birational group (\ref{simplexgenerators}) for a generic value of $n$.
It is convenient to establish first the following:
\begin{lemma}\label{PIJLEM}
Let $\sigma_1,\ldots,\sigma_n$ and $\omega$ be as defined in Section \ref{fg} in terms of an underlying rational triplet-pair system which is 5-simplex consistent (Definition \ref{fsc}).
Then
\begin{equation}
(\omega\sigma_i\sigma_j)^3=\pi_{i,j}, \quad i\neq j,\label{pijlem}
\end{equation}
where $\pi_{i,j}$ is the permutation mapping on the $2\times n$ array that interchanges row $i$ and $j$.
\end{lemma}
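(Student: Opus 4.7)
The plan is to reduce the general statement to the $n=3$ case, which has already been handled in Proposition \ref{coincidence}, by exploiting a row-by-row decoupling of the mappings $\sigma_j$ and $\omega$.

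First I would observe that, from the defining equations (\ref{vvertex}), the updated variables $\wb{x}_i, \wb{y}_i$ produced by $\sigma_j$ depend only on the data in rows $i$ and $j$ of the input array (together with the fixed third orbit $z_0,z_1,z_2$), while row $j$ itself is left unchanged by $\sigma_j$. Since $\omega$ acts independently within each row, the same row-decoupling propagates to any composition of these generators: the action of $(\omega \sigma_i \sigma_j)^3$ on an arbitrary row $k \notin \{i,j\}$ is determined by the restriction of the initial data to rows $\{i,j,k\}$ alone, while its action on rows $i$ and $j$ is determined by those two rows alone.

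Given this decoupling, the identity $(\omega \sigma_i \sigma_j)^3 = \pi_{i,j}$ can be verified row by row. To see that a row $k\notin\{i,j\}$ is fixed, I would restrict to the $2\times 3$ sub-array indexed by $\{i,j,k\}$ and apply the $n=3$ version of the claim within that sub-array; to see that rows $i$ and $j$ are interchanged, I would likewise restrict to any such $2\times 3$ sub-array (or embed the two-row picture into a three-row one). By the manifest symmetry of the construction under re-labelling of rows, it is enough to treat the representative case $(i,j)=(1,2)$ with $n=3$, which is precisely relation (\ref{p12}) of Proposition \ref{coincidence}. The hypothesis of 5-simplex consistency enters only here, through the proof of Proposition \ref{coincidence}.

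The main step is the row-decoupling observation; once that is in place, the lemma becomes a soft consequence of the already-verified $n=3$ instance. An attempt at a direct verification on the full $2\times n$ array, say by expanding $(\omega \sigma_i \sigma_j)^3$ as a rational map and computing, would be substantially more laborious and would obscure the fact that the 5-simplex consistency hypothesis is really the sole algebraic input: the lifting from $n=3$ to general $n$ is purely combinatorial and is already encoded in the structure of the mappings $\sigma_j$.
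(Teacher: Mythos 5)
Your proposal is correct and follows essentially the same route as the paper: both reduce to the case $(i,j)=(1,2)$, $n=3$ established via relation (\ref{p12}) in Proposition \ref{coincidence}, using the fact that $\sigma_i$, $\sigma_j$ and $\omega$ act on each extra row $k$ exactly as they act on the third row of the $n=3$ array, together with relabelling symmetry of the rows. The only cosmetic difference is that you make the row-decoupling from (\ref{vvertex}) explicit, which the paper leaves implicit in the phrase that the action on row $k>3$ is ``similar to'' the action on row $3$.
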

\begin{proof}
The general case described follows by symmetry from the particular case $i=1$, $j=2$ and $n=3$ that was established in the proof of Proposition \ref{coincidence}.
The action of the three participating mappings $\omega$, $\sigma_1$ and $\sigma_2$ on row $k$ for $k>3$ is similar to their action on row $3$.
But in the case $n=3$ the composed mapping $(\omega\sigma_1\sigma_2)^3$ fixes row $3$, and therefore for generic $n$ this mapping also fixes all rows $k$ for $k>3$.
Thus (\ref{pijlem}) holds in the case $i=1$, $j=2$ with $n$ arbitrary.
From the definition (\ref{vertex}), permutation of rows $1,\ldots,n$ corresponds to permutation of mappings $\sigma_1,\ldots,\sigma_n$, and application of such permutations leads directly to (\ref{pijlem}) as a consequence of the case $i=1$, $j=2$.
\end{proof}
This implies immediately the following.
\begin{prop}\label{cvmg}
If the rational triplet-pair system underlying construction of rational mappings in Section \ref{fg} is 5-simplex consistent, then the birational group obtained from the combined generators of the $n$-cube (\ref{cubegenerators}) and $n$-simplex (\ref{simplexgenerators}), i.e., the complete vertex map group, is equal to the group generated by $\sigma_1,\ldots,\sigma_n,\omega$.
\end{prop}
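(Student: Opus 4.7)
The plan is to verify the equality of the two groups by double inclusion, taking advantage of Lemma \ref{PIJLEM} and the conjugation relation $\pi_{i,j}\sigma_i = \sigma_j\pi_{i,j}$ (noted in the proof of Proposition \ref{coincidence}).

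Let $H$ denote the group generated by the combined $n$-cube and $n$-simplex generators, that is, by $\sigma_1$, $\pi_{1,2},\ldots,\pi_{n-1,n}$, $\sigma_1\omega\sigma_1$, and $\omega$, and let $K=\langle \sigma_1,\ldots,\sigma_n,\omega\rangle$. I claim $H=K$.

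For the inclusion $H\subseteq K$, every listed generator of $H$ is manifestly an element of $K$ except for the row transpositions $\pi_{i,i+1}$. By the 5-simplex consistency hypothesis, Lemma \ref{PIJLEM} applies and gives $\pi_{i,i+1}=(\omega\sigma_i\sigma_{i+1})^3$, which lies in $K$. Hence all combined generators belong to $K$.

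For the inclusion $K\subseteq H$, the elements $\sigma_1$ and $\omega$ are themselves combined generators, so it remains to obtain $\sigma_j$ for $j\geq 2$. The transpositions $\pi_{i,i+1}$ are combined generators, and composing them yields all row transpositions $\pi_{1,j}$. Using the conjugation relation $\pi_{1,j}\sigma_1\pi_{1,j}=\sigma_j$ (which follows directly from the defining structure of $\sigma_j$ in Section \ref{fg}, since $\pi_{1,j}$ permutes rows and $\sigma_k$ acts on row $k$ specifically), each $\sigma_j$ is obtained as a word in elements of $H$.

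There is really no substantive obstacle: the lemma supplies the only non-trivial direction, and the remaining verification is just bookkeeping of conjugation by row transpositions. The one point to be careful about is that the conjugation identity $\pi_{i,j}\sigma_i\pi_{i,j}=\sigma_j$ relies on the symmetry of the underlying rational triplet-pair system under the permutations that act by swapping rows; this symmetry is built into Definition \ref{tps} and the construction in Section \ref{fg}, so it may be invoked without further argument.
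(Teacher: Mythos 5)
Your double-inclusion argument is correct and is essentially the paper's own (implicit) reasoning: the paper derives the proposition immediately from Lemma \ref{PIJLEM}, which handles the inclusion of the row transpositions $\pi_{i,i+1}$ into $\langle\sigma_1,\ldots,\sigma_n,\omega\rangle$, while the reverse inclusion rests on the same conjugation identity $\pi_{i,j}\sigma_i=\sigma_j\pi_{i,j}$ already noted in the proof of Proposition \ref{coincidence}. You have merely spelled out the bookkeeping the paper leaves tacit, so no further comment is needed.
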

The remaining question we answer in this section is how to characterise this complete group.
In fact we will assume the integer $n$ is fixed and consider the group generated by 
\begin{equation}
\sigma_1,\ldots,\sigma_m,\omega,\label{gens}
\end{equation}
for $m<n$.
This has the advantage that if $m'<m$, then $\langle\sigma_1,\ldots,\sigma_{m'},\omega\rangle$ is a subgroup of $\langle\sigma_1,\ldots,\sigma_{m},\omega\rangle$.
The following relations are key.
\begin{equation}
\begin{split}
\sigma_i^2=\omega^2=(\sigma_i\omega)^4 = \id,\\
(\sigma_i\sigma_j)^2 = (\sigma_i\omega\sigma_j\omega)^3 = \id, & \qquad |\{i,j\}|=2,\\
((\sigma_i\sigma_j\omega)^2\sigma_k\omega)^2 = \id, & \qquad |\{i,j,k\}|=3.
\end{split}
\label{E6rels}
\end{equation}
These relations encode the stronger 5-simplex consistency property:
\begin{prop}\label{cgeqv}
Suppose $2<m<n$.
The underlying rational triplet-pair system is 5-simplex consistent (Definition \ref{fsc}), if and only if the mappings $\sigma_1,\ldots,\sigma_m$ and $\omega$, constructed from it in Section \ref{fg}, satisfy the relations (\ref{E6rels}).
\end{prop}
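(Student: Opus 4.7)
The plan is to prove both implications by case analysis on the relations in (\ref{E6rels}), reducing throughout to three fixed indices $i,j,k \in \{1,\ldots,m\}$, since every relation in the list involves at most three of the generators $\sigma_i$ together with $\omega$.

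For the forward direction (5-simplex consistency implies the relations), I would first dispatch the involutive identities $\sigma_i^2 = \omega^2 = \id$ directly from the definitions in Section \ref{fg}: each $\sigma_i$ replaces a variable with its partner in an involutive triplet-pair constraint, and $\omega$ is a column swap. Next, I would verify $(\sigma_i\omega)^4 = \id$ by the short calculation that rewrites it as $(\sigma_i \sigma_i^\omega)^2 = \id$, then invokes (\ref{comm}) together with the elementary fact that two commuting involutions $a,b$ satisfy $(ab)^2 = \id$. The three non-trivial relations involving distinct row indices are then verified by restricting attention to the rows $i,j,k$ involved: via the combinatorial embeddings (\ref{cfs}) and (\ref{otfs}), the generators $\sigma_i,\sigma_j,\sigma_k,\omega$ correspond to generators of the vertex-map group on the 5-simplex arrangement, which by Proposition \ref{BRA5} is a birational representation of $A_5 \cong S_6$, and the three words in question translate under this correspondence into vertex permutations of the 5-simplex which can be checked to be trivial.

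For the backward direction (relations imply 5-simplex consistency), I would fix three indices $i,j,k$ and consider the abstract group $\Gamma$ presented by generators $\sigma_i,\sigma_j,\sigma_k,\omega$ modulo the restriction of (\ref{E6rels}) to these indices. The plan is to identify $\Gamma$ with $S_6 \cong A_5$ (or a quotient thereof): under the correspondences (\ref{cfs}) and (\ref{otfs}), its generators align with standard Coxeter-type generators of $S_6$, and (\ref{E6rels}) should match the corresponding defining relations. Combined with the combinatorial embedding of the 5-simplex triplet-pair arrangement into the $2 \times n$ array, this yields a birational action of $\Gamma$ compatible with the 5-simplex symmetry, which by Proposition \ref{BRA5} is equivalent to 5-simplex consistency.

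The hard part will be the backward direction, specifically confirming that the three-generator restriction of (\ref{E6rels}) is indeed a presentation of $S_6 \cong A_5$ under the chosen identifications. This is a finite combinatorial check, but requires careful bookkeeping of the correspondence between $\sigma_i$-$\omega$ words and vertex permutations under (\ref{cfs}) and (\ref{otfs}), in particular confirming that no extraneous relations are needed. A useful simplification will be to exploit the forward-direction calculation: once (\ref{E6rels}) is known to hold whenever the underlying system is 5-simplex consistent, a counting argument comparing $|\Gamma|$ with $|S_6| = 720$ closes the remaining gap.
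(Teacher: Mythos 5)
Your overall architecture (easy involution/commutation relations first, then reduction to the $A_5$ structure of Proposition \ref{BRA5} via the correspondences (\ref{cfs}), (\ref{otfs})) matches the paper's, but the direction ``5-simplex consistency $\Rightarrow$ relations (\ref{E6rels})'' has a genuine gap exactly where the paper locates the subtlety. The relations are identities of mappings on the full $2\times n$ array with $m<n$ and $n$ arbitrary, and each $\sigma_i$ acts nontrivially on \emph{every} row, not only on the rows whose indices appear in the relation. So ``restricting attention to the rows $i,j,k$ involved'' only verifies the relation on a $2\times 3$ sub-array; you must also show the word acts trivially on every other row $l$. For the two-index relations $(\sigma_i\sigma_j)^2=\id$ and $(\sigma_i\omega\sigma_j\omega)^3=\id$ this can be rescued, because the action of the word on a generic row $l$ depends only on rows $i,j,l$ and is functionally identical to its action on the third row in the $n=3$ check — this is the argument of Lemma \ref{PIJLEM}, and it needs to be invoked explicitly. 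But for $((\sigma_i\sigma_j\omega)^2\sigma_k\omega)^2=\id$ the word involves three distinct $\sigma$-indices, and its action on a fourth row $l$ is an effectively four-row computation that the $n=3$ (5-simplex) consistency does not cover at all; your proposal gives no mechanism for it. The paper closes this by first establishing $\pi_{i,j}=(\omega\sigma_i\sigma_j)^3$ (Lemma \ref{PIJLEM}, again via the generic-row argument), and then rewriting $((\sigma_i\sigma_j\omega)^2\sigma_k\omega)^2$ as a conjugate of $(\pi_{i,j}\sigma_k)^2$, which vanishes because a pure row transposition of rows $i,j$ manifestly commutes with $\sigma_k$ for $k\notin\{i,j\}$ on every row; some substitute for this step is indispensable.

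On the other direction, your plan is heavier than necessary and partly off-target. You do not need the restricted relations to \emph{present} $S_6$, nor a counting argument: for consistency it suffices that the $A_5$ Coxeter relations among the words $\sigma_1^\omega,\sigma_3,\sigma_2^\omega,\sigma_1,\sigma_3^\omega$ are formal consequences of (\ref{E6rels}) (a three-line check: $(\sigma_i\sigma_j)^2=\id$ is listed, $(\sigma_i\sigma_i^\omega)^2=\id$ is $(\sigma_i\omega)^4=\id$, and $(\sigma_i\sigma_j^\omega)^3=\id$ is $(\sigma_i\omega\sigma_j\omega)^3=\id$), after which the group action assigns all fifteen 5-simplex variables consistently as rational functions of the six initial ones. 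Moreover the isomorphism claim $\Gamma\cong S_6$ is false as stated: by Proposition \ref{vmg} and Table \ref{vmt} the group presented by (\ref{E6rels}) with three $\sigma$'s and $\omega$ is $A_1\ltimes A_5$ of order $1440$, and in the abstract group $\omega$ is not a word in the conjugated $\sigma$'s (the identity (\ref{omegarel}) holds only for the concrete $n=3$ maps). Finally, the proposed counting step ``exploit the forward direction'' is circular for this implication, since consistency is precisely what is to be proved and cannot be assumed to bound the order of the concrete group.
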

\begin{proof}
By construction, the group generated by mappings (\ref{fivegenerators}) allows variables on all edges of the 5-simplex to be obtained consistently as rational functions of those from the initial data array (\ref{gg}) (cf. Definition \ref{fsc}), provided they satisfy the $A_5$ group relations.
Thus to prove the 5-simplex consistency it suffices to show that the $A_5$ Coxeter relations hold amongst the mappings (\ref{fivegenerators}) in the particular case $n=3$, as a consequence of the relations (\ref{E6rels}).
This is straightforward to verify, the Coxeter relations fall into three kinds.
The first kind are $(\sigma_i\sigma_j)^2=\id$ and this relation appears directly in (\ref{E6rels}), the second kind are $(\sigma_i\sigma_i^\omega)^2=\id$ which appears in (\ref{E6rels}) written in the form $(\sigma_i\omega)^4=\id$, and the third kind are $(\sigma_i\sigma_j^\omega)^3=\id$ with $i\neq j$, which appears in (\ref{E6rels}) in the form $(\sigma_i\omega\sigma_j\omega)^3=\id$.

The subtlety in the converse statement is that it relates to the generic case of $m<n$, where $n$ is arbitrary.
That $\sigma_i^2=\id$ follows from the assumed permutation symmetry of the rational triplet-pair system.
That $\omega^2=\id$ is obvious. 
And assuming these first two relations, the third relation $(\sigma_i\omega)^4=\id$ expresses the commutativity of $\sigma_i$ and $\sigma_i^\omega$, which is also a consequence of the assumed symmetry (established before in (\ref{comm})).
Thus it remains to establish the remaining three relations in (\ref{E6rels}) as a consequence of the 5-simplex consistency.
The two relations $(\sigma_i\sigma_j)^2=\id$ and $(\sigma_i\omega\sigma_j\omega)^3=\id$ for generic $i\neq j$ follow from the particular case $i=1$, $j=2$ and $n=3$ by using the same argument used in the proof of Lemma \ref{PIJLEM}.
As mentioned in the first part of this proof, the particular case $i=1$, $j=2$ and $n=3$ of these two relations are present in the $A_5$ relations between generators (\ref{fivegenerators}) and are therefore a consequence of the 5-simplex consistency.
The final relation in (\ref{E6rels}) expresses the obvious commutativity between the row-permutation mapping $\pi_{i,j}$ and the mapping $\sigma_k$ for $k\not\in\{i,j\}$, it is only obscured by the fact that $\pi_{i,j}$ is written in the form established in Lemma \ref{PIJLEM}.
To see this we write the following chain 
\begin{equation}
\begin{split}
(\pi_{i,j}\sigma_k)^2 & =\omega\sigma_i\sigma_j\omega\sigma_i\sigma_j\omega\sigma_i\sigma_j\sigma_k\pi_{i,j}\sigma_k\\
& = \omega\sigma_i\sigma_j\omega\sigma_i\sigma_j\omega\sigma_k\pi_{i,j}\sigma_j\sigma_i\sigma_k\\
& = \omega((\sigma_i\sigma_j\omega)^2\sigma_k\omega)^2\omega,
\end{split}
\end{equation}
in which we have used the formula (\ref{pijlem}), the fact that $\pi_{i,j}\sigma_i=\sigma_j\pi_{i,j}$, and that $\sigma_k$ commutes with $\sigma_i$ and $\sigma_j$ which follows from the already established relations.
\end{proof}

Thus we have characterised the abstract structure of the group of Proposition \ref{cvmg} by establishing a finite presentation for it.
We now identify that, like the subgroups participating in its construction, the complete vertex map group is also a Coxeter group.
\begin{figure}[t]
\begin{center}
\begin{tikzpicture}[thick]
  \tikzstyle{every node}=[draw,circle,fill=black,minimum size=5pt,inner sep=0pt,label distance=2pt]
  \draw[dashed] (3,0.5) -- (5,0.5);
  \draw (0,0) node[label=above:$2$]{} -- (1,0) node[label=above:$4$]{} -- (2,0.5) -- (3,0.5) node[label=above:$6$]{};
  \draw (0,1) node[label=above:$1$]{} -- (1,1) node[label=above:$3$]{} -- (2,0.5) node[label=above:$5$]{};
  \tikzstyle{every node}=[draw,circle,fill=black,minimum size=5pt,inner sep=0pt,label distance=-5pt]
  \draw (5,0.5) -- (5,0.5) node[label=above:$m\!+\!1$]{} -- (6,0.5) node[label=above:$m\!+\!2$]{};
\end{tikzpicture}
\end{center}
\caption{Coxeter-Dynkin diagram associated with the complete vertex-map group.}
\label{cdd}
\end{figure}
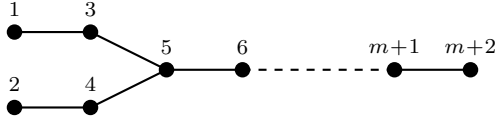
\begin{prop}\label{vmg}
Suppose $m>1$.
Then the finitely presented group with generators (\ref{gens}) and relations (\ref{E6rels}) is isomorphic to the Coxeter group of Figure \ref{cdd} extended by its graph automorphism.
\end{prop}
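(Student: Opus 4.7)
The plan is to construct mutually inverse group homomorphisms between $G := \langle \sigma_1, \ldots, \sigma_m, \omega \mid (\ref{E6rels}) \rangle$ and $\wt{W} := W \rtimes \langle \tau \rangle$, where $W$ is the Coxeter group of Figure \ref{cdd} and $\tau$ its graph automorphism swapping nodes $1 \leftrightarrow 2$ and $3 \leftrightarrow 4$ while fixing the rest. Writing $\tau_i := \omega \sigma_i \omega$, the relations (\ref{E6rels}) translate into statements about $2m$ involutions: the $\sigma_i$ pairwise commute, the $\tau_i$ pairwise commute, $\sigma_i$ commutes with $\tau_i$, one has $(\sigma_i \tau_j)^3 = 1$ for $i \neq j$, and a further cubic relation indexed by triples arises from (e). The conjugation action of $\omega$ is $\sigma_i \leftrightarrow \tau_i$, which is exactly how a graph automorphism should act on Coxeter generators.

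First I would define $\Psi : \wt{W} \to G$ by $\Psi(\tau) = \omega$ together with explicit words $w_k := \Psi(s_k)$ for each Coxeter generator. For the four short-arm nodes the natural choice is $w_1 = \sigma_1$, $w_2 = \tau_1$, $w_3 = \tau_2$, $w_4 = \sigma_2$, and the relations $(w_i w_j)^{m_{ij}} = 1$ among these four follow directly from the $\sigma$-$\tau$ reformulation above, whilst the action of $\omega$ by conjugation interchanges $w_1 \leftrightarrow w_2$ and $w_3 \leftrightarrow w_4$ as required. The central and tail generators $w_5, w_6, \ldots, w_{m+2}$ must be $\omega$-fixed involutions, and the key task is to produce an ansatz for $w_k$ when $k \geq 5$ such that the trivalent braid pattern at node $5$ and the linear pattern along the tail both hold. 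A natural candidate is to take $w_k$ to be an $\omega$-symmetrised conjugate of $\sigma_{k-3}$, so that the braid relation $(w_4 w_5)^3 = 1$ (with its $\omega$-conjugate $(w_3 w_5)^3 = 1$ automatic) reduces after expansion to relation (e) in (\ref{E6rels}) applied to a specific triple of indices, and the further braid/commutation relations propagate along the tail by the same mechanism.

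Conversely I would define $\Phi : G \to \wt{W}$ by $\Phi(\omega) = \tau$ and $\Phi(\sigma_i) = v_i$ for specific elements $v_i \in \wt{W}$. Because relation (c) requires the $v_i$ to form $m$ pairwise commuting involutions, and because the Coxeter diagram of Figure \ref{cdd} admits no independent set of size $m$ among its simple nodes once $m$ is moderately large, the $v_i$ for $i \geq 3$ must be non-simple reflections of $W$, i.e.\ conjugates $g s_k g^{-1}$ for suitable $g \in W$. These conjugates are pinned down by insisting $\Psi(v_i) = \sigma_i$ as a word equality in $G$; the resulting mutual compatibility on generators yields $\Psi \circ \Phi = \id_G$ and $\Phi \circ \Psi = \id_{\wt{W}}$. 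The main obstacle in carrying out this plan is the explicit identification of $w_5, \ldots, w_{m+2}$ (equivalently of the conjugators $g$ above) and the verification that the trivalent and tail braid relations reduce precisely to instances of relation (e); this is where the 5-simplex consistency of Proposition \ref{cgeqv} plays its essential role, since without relation (e) the group would collapse to the much larger bipartite Coxeter group on $2m$ generators whose diagram is $K_{m,m}$ with a perfect matching removed and all remaining labels equal to $3$, rather than to the $(m+2)$-generator group $W$ of Figure \ref{cdd}.
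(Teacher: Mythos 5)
Your proposal sets up the right kind of argument (explicit mutually inverse homomorphisms between $\langle\sigma_1,\ldots,\sigma_m,\omega\rangle$ and the extended Coxeter group, verified on generators and relations), and your treatment of the four short-arm nodes is fine as far as it goes. But the proof is not actually carried out: the entire substance of the statement lies in producing the explicit words for the generators attached to nodes $5,\ldots,m+2$ (and, in the other direction, for $\sigma_3,\ldots,\sigma_m$) and checking that the trivalent and tail relations hold — and you explicitly defer this as ``the main obstacle''. That obstacle \emph{is} the proof. The paper resolves it with a concrete two-sided dictionary (\ref{iso}): $t_0=\sigma_1$, $t_1=\omega$, $t_2=\sigma_1\omega\sigma_1$, $t_3=\omega\sigma_1\sigma_2\omega\sigma_2\sigma_1\omega$, and $t_{k}=(\omega\sigma_{k-3}\sigma_{k-2})^3$ for $k\geq 4$, with inverse $\sigma_k=t_0^{t_{k+2}\cdots t_5 t_4}$; the verification of relations in both directions is then a finite (if lengthy) computation, and the dictionary itself was found by generalising a Magma computation for $m=4$. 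Note that the tail generators are the row transpositions $\pi_{k-3,k-2}=(\omega\sigma_{k-3}\sigma_{k-2})^3$ of Lemma \ref{PIJLEM}, i.e.\ words in \emph{two} of the $\sigma$'s and $\omega$ — not ``$\omega$-symmetrised conjugates of $\sigma_{k-3}$'' — so your proposed ansatz points in the wrong direction.

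A second, related issue: your frame identifies the graph automorphism $\tau$ with $\omega$ and the four short-arm simple reflections with $\sigma_1,\omega\sigma_1\omega,\omega\sigma_2\omega,\sigma_2$, whereas the paper's isomorphism identifies the graph automorphism $t_0$ with $\sigma_1$ and the simple reflection $t_1$ with $\omega$; in the paper's frame the images of $\sigma_2,\ldots,\sigma_m$ are conjugates of the automorphism $t_0$, not reflections of $W$ at all, so your assertion that the $v_i$ ``must be non-simple reflections of $W$'' does not describe the isomorphism that is known to exist. Your frame is not obviously impossible, but you give no evidence it can be completed at the trivalent node (for instance, the paper's tail element $(\omega\sigma_2\sigma_3)^3$ has order-$4$ product with $\sigma_2$, so it cannot serve as your $w_5$), and the claims that the braid relations ``reduce after expansion to relation (e)'' and ``propagate along the tail by the same mechanism'' are asserted rather than demonstrated. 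As it stands, the proposal is a strategy sketch that stops exactly where the paper's proof begins.
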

\begin{proof}
The target group is, in precise terms, the finitely presented group with generators $t_0,\ldots,t_{m+2}$ and relations as follows.
The first set of relations, not involving generator $t_0$, are encoded in the Coxeter graph of Figure \ref{cdd}, where labels on nodes of the graph correspond to indices of the generators $t_1,\ldots,t_{m+2}$.
The remaining relations all involve the generator $t_0$, and are as follows:
\begin{equation}
\begin{split}
t_0^2 = \id,&\\
t_0 t_1 = t_2 t_0,&\\
t_0 t_3 = t_4 t_0,&\\
t_0 t_i = t_i t_0,& \qquad i \in \{5,\ldots,m+2\}.\\
\end{split}\label{ecr}
\end{equation}
Thus the action of $t_0$ can be identified with the non-trivial permutation of nodes under which the graph is invariant.

We proceed by exhibiting an isomorphism between this target group and the finitely presented group defined by generators (\ref{gens}) and relations (\ref{E6rels}):
\begin{equation}
\begin{array}{l}
t_0 = \sigma_1,\\
t_1 = \omega,\\
t_2 = \sigma_1\omega\sigma_1,\\
t_3 = \omega\sigma_1\sigma_2\omega\sigma_2\sigma_1\omega,\\
t_4 = (\omega\sigma_1\sigma_2)^3,\\
t_5 = (\omega\sigma_2\sigma_3)^3,\\
\quad \vdots\\
t_{m+2} = (\omega\sigma_{m-1}\sigma_{m})^3,
\end{array}
\qquad
\begin{array}{l}
\omega = t_1,\\
\sigma_1 = t_0,\\
\sigma_2 = t_0^{t_4},\\
\sigma_3 = t_0^{t_5t_4},\\
\sigma_4 = t_0^{t_6t_5t_4},\\
\sigma_5 = t_0^{t_7t_6t_5t_4},\\
\quad \vdots\\
\sigma_m = t_0^{t_{m+2}\cdots t_5t_4}.
\end{array}
\label{iso}
\end{equation}
Again here the notation $f^g$ denotes conjugation of $f$ by $g$, $f^g=gfg^{-1}$.
The precise claim that (\ref{iso}) defines the desired isomorphism is in two parts.
First, the elements $t_0,\ldots,t_{m+2}$ defined by the expressions on the left in (\ref{iso}) satisfy the relations encoded in the Coxeter group of Figure \ref{cdd} and its graph automorphism (\ref{ecr}), as well as the equations on the right in (\ref{iso}), by virtue of relations (\ref{E6rels}).
Second, the elements $\omega,\sigma_1,\ldots,\sigma_m$ defined by the expressions on the right in (\ref{iso}) satisfy the relations (\ref{E6rels}), as well as equations on the left in (\ref{iso}), by virtue of the relations on $t_0,\ldots,t_{m+2}$ that are encoded in the Coxeter graph of Figure \ref{cdd} and its graph automorphism (\ref{ecr}).
Both statements can be verified by straightforward but numerous calculations.

The explicit isomorphism (\ref{iso}) was obtained by generalising an isomorphism for the case $m=4$ found using the computer algebra system Magma \cite{WCB}.
\end{proof}
\begin{table}
\begin{center}
\begin{tabular}{lll}
Group & Order & Isomorphic to \\
\hline
$\langle\sigma_1,\sigma_2,\omega\rangle$ & $2\times 36$ & $A_1 \ltimes (A_2 \times A_2)$ \\
$\langle\sigma_1,\sigma_2,\sigma_3,\omega\rangle$ & $2\times 720$ & $A_1 \ltimes A_5$ \\
$\langle\sigma_1,\sigma_2,\sigma_3,\sigma_4,\omega\rangle$ & $2\times 51840$ & $A_1 \ltimes E_6$ \\
$\langle\sigma_1,\sigma_2,\sigma_3,\sigma_4,\sigma_5,\omega\rangle$ & $2\times \infty$ & $A_1 \ltimes \wt{E}_6$ 
\end{tabular}
\end{center}
\caption{
Sequence of subgroups of the complete vertex-map group.
}
\label{vmt}
\end{table}

The abstract structure of the vertex map group encodes a generalised lattice geometry.
Specifically, underlying the sequence of groups is a sequence of triplet-pair arrangements (Definition \ref{tp}), and each element of the group corresponds to a symmetry of the associated arrangement (Definition \ref{symdef}).
For each value of $m$ this arrangement contains, by construction, the triplet-pair arrangements of both the $m$-cube and the edge-oriented $m$-simplex.
Within Figure \ref{cdd} one can also see the group of a single triplet-pair ($m=2$) and the 5-simplex arrangement ($m=3$), cf. Proposition \ref{coincidence}.
Continuing, the combined vertex map group decomposes as in Table \ref{vmt}.
The fact that the arrangement becomes infinite beyond the case $m=4$ represents a significant and un-anticipated depth beyond the kind of lattice geometry of the $m$-cube or edge-oriented $m$-simplex.
A natural step to understanding the dynamics in the ambient space as a whole, is to relate the lattice geometry to the group.

\section{The generalised lattice geometry}\label{LG}
In the previous section we characterised the abstract structure of the complete vertex-map group, this group encodes the full ambient domain on which it is natural to consider a system with the 5-simplex consistency.
In this section we will explain how the geometry of the domain may be decoded from the group, which can be viewed as the culmination of efforts from Sections \ref{5S} and \ref{FH}, completing the unification of the $n$-cube and edge-oriented $n$-simplex lattice geometries.
We then make contact with the combinatorics of the 27 lines of a cubic surface in the case $n=4$, and discuss the generalised lattice geometry in the context of integrable dynamics.

\subsection{Decoding the lattice geometry from the group}\label{lgp}
The extended domain itself is a triplet-pair arrangement (Definition \ref{tp}).
To achieve the coordinatisation, we exploit the fact that variables of the arrangement are in natural bijection with a particular conjugacy class of its symmetry group (Definition \ref{symdef}).
This allows coordinatisation of the arrangement using nothing more than the group itself.
\begin{definition}\label{Gtp}
Denote by $G$ the finitely presented group defined by generators (\ref{gens}) and relations (\ref{E6rels}) (cf. Proposition \ref{vmg}).
The triplet-pair arrangement associated with $G$ is a set of variables, assigned to elements of the conjugacy class of $\sigma_1\in G$,
\begin{equation}
x(\sigma_1^g), \quad g\in G,
\end{equation}
arranged into the triplet-pairs
\begin{equation}
x(\sigma_1^{g}),x(\sigma_2^{g\omega}),x(\sigma_2^{g\sigma_1\omega}), \quad
x(\sigma_2^{g}),x(\sigma_1^{g\omega}),x(\sigma_1^{g\sigma_2\omega}), \quad
g\in G.\label{gc}
\end{equation}
\end{definition}
This definition is consistent because $\sigma_2$ is conjugate to $\sigma_1$ in $G$, specifically $\sigma_2=\sigma_1^{\omega\sigma_1\omega\sigma_2\omega}$, which is a re-writing of the case $i=2$, $j=1$ of the fifth relation in (\ref{E6rels}) using the first two relations.

The main combinatorial features of this constructed domain are as follows.
\begin{prop} \label{latext}
The triplet-pair arrangement of Definition \ref{Gtp} is symmetry-complete and contains as sub-arrangements both the $m$-cube and edge-oriented $m$-simplex arrangements of Definitions \ref{CTP} and \ref{EOTTP}.
In the case $m=3$ it coincides with the 5-simplex arrangement of Definition \ref{5SPC}.
\end{prop}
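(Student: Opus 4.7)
My plan is to verify the three claims of the proposition in sequence: symmetry-completeness, the two sub-arrangement embeddings, and coincidence with the 5-simplex arrangement when $m=3$. For symmetry-completeness, I first confirm that the six variables appearing in each triplet-pair (\ref{gc}) are genuinely distinct $G$-conjugates of $\sigma_1$ --- a finite check taking place inside the order-$72$ subgroup $\langle\sigma_1,\sigma_2,\omega\rangle$ identified in Table \ref{vmt}. The group $G$ then acts on the arrangement by $h\cdot x(\sigma_1^g) = x(\sigma_1^{hg})$, sending the triplet-pair at $g$ to the one at $hg$, which is a symmetry in the sense of Definition \ref{symdef}. By $G$-equivariance it is enough to exhibit the full symmetry group $S_3\wr S_2$ of order $72$ of the distinguished triplet-pair at $g=\id$ as the image of conjugation by elements of $G$. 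The natural subgroup is $\langle\sigma_1,\sigma_2,\omega\rangle$: here $\omega$ swaps the two triplets, $\sigma_1$ transposes $x(\sigma_2^\omega)\leftrightarrow x(\sigma_2^{\sigma_1\omega})$ while fixing the second triplet, and $\sigma_2$ does the mirror operation; the remaining transpositions arise as conjugates. Surjectivity onto $S_3\wr S_2$ then follows by cardinality from faithfulness of the action.

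For the sub-arrangement embeddings I would apply Propositions \ref{BRBC} and \ref{BRA} together with Lemma \ref{PIJLEM}. The subgroup $H_C = \langle\sigma_1,\pi_{1,2},\ldots,\pi_{m-1,m}\rangle$ of $G$, with $\pi_{i,j} = (\omega\sigma_i\sigma_j)^3$, is the $BC_m$ vertex-map group, and $H_S = \langle\sigma_1\omega\sigma_1,\pi_{1,2},\ldots,\pi_{m-1,m},\omega\rangle$ is the $A_m\times A_1$ vertex-map group. The initial-data variables (\ref{ca}) and (\ref{nsg}) map to specific conjugates of $\sigma_1$, and applying $H_C$ or $H_S$ produces the remaining variables as further conjugates, giving the embedding of variable sets. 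What remains is to check that each defining triplet-pair in Definitions \ref{CTP} and \ref{EOTTP} lands on a triplet-pair of the form (\ref{gc}); by $H$-equivariance it suffices to verify the base case $m=3$, which is supplied by the explicit identifications (\ref{cfs}) and (\ref{otfs}) from the proof of Proposition \ref{5s}.

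For the coincidence when $m=3$ I count. Table \ref{vmt} gives $|G| = 1440$. The conjugacy class of $\sigma_1$ in $G \cong A_1\ltimes A_5$ has size $15$ (its image in the $A_5\cong S_6$ factor is a reflection), matching the $15$ edges of the 5-simplex. The setwise stabiliser of the distinguished triplet-pair has order $144$ --- the order-$72$ symmetry-group image established above, together with a central $Z_2$ kernel coming from the $A_1$ factor --- giving $10$ triplet-pairs, matching the ten polar triangle-pairs of Definition \ref{5SPC}. Since the sub-arrangement embedding applied at $m=3$ together with Proposition \ref{5s} already embeds the 5-simplex arrangement into the $G$-arrangement, this cardinality match forces equality.

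The main obstacle is faithfulness of the $\langle\sigma_1,\sigma_2,\omega\rangle$-action on the six variables of the distinguished triplet-pair: I must show the centraliser in this order-$72$ subgroup of the set $\{\sigma_1,\sigma_2,\sigma_1^\omega,\sigma_2^\omega,\sigma_2^{\sigma_1\omega},\sigma_1^{\sigma_2\omega}\}$ is trivial. Using the decomposition $\langle\sigma_1,\sigma_2,\omega\rangle\cong A_1\ltimes(A_2\times A_2)$ from Table \ref{vmt}, this reduces to computing the centre of that group and confirming that its generator acts non-trivially on at least one of the six variables --- a finite but delicate calculation in which the commutation relations $(\sigma_i\sigma_j)^2=\id$ and braid-type relations $(\sigma_i\omega\sigma_j\omega)^3=\id$ must be handled carefully to avoid hidden collapses.
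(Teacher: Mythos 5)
Your treatment of the first two claims essentially reproduces the paper's argument: symmetry-completeness via the conjugation action (the paper phrases it as identifying the pair's symmetry group with the inner automorphisms of $\langle\sigma_1^g,\sigma_2^g,\omega^g\rangle$), and the sub-arrangements via the subgroups generated by $\sigma_1,\pi_{1,2},\ldots$ and $\sigma_1\omega\sigma_1,\pi_{1,2},\ldots,\omega$ -- the paper's explicit dictionaries, $x_i^I\leftrightarrow x(\sigma_i^{\sigma_I\omega})$ and its simplex analogue, are exactly the equivariant extensions of the initial-data correspondence that you describe. Two points there need repair, though. First, your ``base case $m=3$'' for the triplet-pair check cites (\ref{cfs}) and (\ref{otfs}), but those identify cube and tetrahedron variables with the $w_{ij}$ of the 5-simplex arrangement, not with conjugates of $\sigma_1$; the bridge between the two is precisely the $m=3$ coincidence you prove only afterwards, so as organised the argument is circular. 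The fix is easy and is what the paper in effect does: check directly that the base triplet-pair $\{y_1,x_2^{\{\}},x_2^{\{1\}}\},\{y_2,x_1^{\{\}},x_1^{\{1\}}\}$ maps to the pair (\ref{gc}) at $g=\id$. Second, your reduction of faithfulness to ``the centre of $A_1\ltimes(A_2\times A_2)$'' targets the wrong object: the kernel of the action on the six variables is the centraliser of those six elements, equivalently of the $A_2\times A_2$ subgroup they generate, and triviality of the centre of the order-$72$ group does not imply triviality of that centraliser (the needed fact is true, but must be checked as stated).

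The genuine gap is in your proof of the third claim. Proposition \ref{5s} asserts only that the 5-simplex arrangement is the \emph{smallest} symmetry-complete extension; it does not say that every symmetry-complete extension of the cube arrangement contains a copy of the 5-simplex arrangement, nor that the minimal completion is unique, so the sentence ``this cardinality match forces equality'' is not supported by what you cite. Note in particular that the $3$-cube arrangement already has $15$ variables and $6$ triplet-pairs, so knowing that the $G$-arrangement has $15$ variables and $10$ triplet-pairs and contains the cube's $6$ does not by itself determine the remaining $4$ pairs. Moreover the count of $10$ itself rests on the stabiliser having order \emph{exactly} $144$, i.e.\ on the centraliser in $G$ of the six base-pair elements having order exactly $2$; you assert this (``a central $Z_2$ kernel coming from the $A_1$ factor'') but do not establish it, and likewise the class size $15$ is asserted from the claimed structure $A_1\ltimes A_5$. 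The paper closes all of this at once by writing down the explicit bijection (\ref{cva}) between the $15$ conjugates of $\sigma_1$ in $\langle\sigma_1,\sigma_2,\sigma_3,\omega\rangle$ and the $15$ variables $w_{ij}$, and verifying directly that the ten triplet-pairs correspond; to make your route work you would either need that dictionary anyway, or a strengthened, proved form of Proposition \ref{5s} (embedding of the 5-simplex arrangement into an arbitrary symmetry-complete extension, or uniqueness of the forced completion) together with the exact stabiliser computation.
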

\begin{proof}
Observe that the symmetry group of the triplet-pair (\ref{gc}) associated with some $g\in G$ can be identified with the inner automorphisms of the group $\langle \sigma_1^g, \sigma_2^g, \omega^g\rangle$.
The action of this subgroup extends naturally to the whole of $G$, it clearly permutes elements of any conjugacy class, and thus permutes both variables and triplet-pairs, and this is sufficient for the symmetry-completeness.

The $m$-cube, edge-oriented $m$-simplex and the 5-simplex sub-arrangements are obtained by restricting to a particular subgroup $H<G$, which is, respectively, as follows
\begin{eqnarray}
&\langle\sigma_1,(\omega\sigma_1\sigma_2)^3,\ldots,(\omega\sigma_{m-1}\sigma_m)^3\rangle, \label{G1}\\
&\langle\sigma_1\omega\sigma_1,(\omega\sigma_1\sigma_2)^3,\ldots,(\omega\sigma_{m-1}\sigma_m)^3,\omega\rangle,\label{G2}\\
&\langle\sigma_1,\sigma_3^\omega,\sigma_2,\sigma_1^\omega,\sigma_3\rangle.\label{G3}
\end{eqnarray}
The corresponding triplet-pair sub-arrangement involves the subset of variables $\{x(\sigma_1^g),x(\sigma_1^{g\omega}):g\in H\}$ and the subset of triplet-pairs obtained by letting $g$ in (\ref{gc}) run over the subgroup $H$ rather than the whole of $G$.
Subgroup (\ref{G1}) is isomorphic to $BC_m$, subgroup (\ref{G2}) is isomorphic to $A_m\times A_1$, and subgroup (\ref{G3}) is isomorphic to $A_5$.
The generators listed correspond to standard generators of these groups, and the characterising relations between the generators can be verified as a consequence of relations (\ref{E6rels}).
Alternatively, characterisation of these subgroups of $G$ is clear by construction on the level of the birational representations in Propositions \ref{BRBC}, \ref{BRA} and \ref{BRA5} (modulo Lemma \ref{PIJLEM}).
We will now proceed by giving explicit identifications between coordinates for these respective sub-arrangements.

The notation 
\[
\sigma_I = \prod_{i\in I}\sigma_i, \quad I\subseteq \{1,\ldots,m\}
\]
(ordering is not important because $\sigma_i$ and $\sigma_j$ commute), allows to write an association between variables (\ref{ecvars}) of the $m$-cube arrangement (Definition \ref{CTP}) and the restriction via subgroup (\ref{G1}) of the arrangement of Definition \ref{Gtp} as follows:
\begin{equation}\label{cubeidents}
x(\sigma_i), \quad x(\sigma_i^{\sigma_I\omega}) \quad \leftrightarrow \quad y_i, \quad x_i^I
\end{equation}
where $i\in\{1,\ldots,m\}$ and $I \subseteq \{1,\ldots,m\} \setminus \{i\}$.
It is straightforward to verify via this association that the action of the generators in (\ref{G1}) correspond, respectively, to reflection through axis 1 of the $m$-cube, and then reflections that interchange axis $i$ and $i+1$ for $i\in\{1,\ldots,m\}$ whilst leaving the remaining axes unchanged.
These are standard generators of the $m$-cube symmetry group.
The explicit identification (\ref{cubeidents}) makes it straightforward to verify that the triplet-pairs of the two arrangements correspond.

The similar association between variables (\ref{evars}) of the edge-oriented $m$-simplex arrangement (Definition \ref{EOTTP}) and the restriction via subgroup (\ref{G2}) of the arrangement of Definition \ref{Gtp} is as follows:
\begin{equation}\label{simplexidents}
\begin{array}{ll}
x(\sigma_i^\omega), & x(\sigma_i)\\
x(\sigma_i^{\omega\sigma_j\omega}), & x(\sigma_i^{\sigma_j\omega})
\end{array}
\quad
\leftrightarrow
\quad
\begin{array}{ll}
x_{1(i+1)}, &x_{(i+1)1}\\
x_{(j+1)(i+1)},&x_{(i+1)(j+1)}
\end{array}
\end{equation}
where $i,j\in\{1,\ldots,m\}$ and $i\neq j$.
Recall that the indices of variables (\ref{evars}) correspond to vertices of the $m$-simplex.
This association therefore allows to verify that action of the generators in (\ref{G2}) correspond, respectively, to interchange of the $m$-simplex vertices $1\leftrightarrow 2$, $2\leftrightarrow 3$, $\ldots$, $m\leftrightarrow (m+1)$.
I.e., to the standard generators of the $m$-simplex symmetry group.
The final generator corresponds to orientation reversal of all edges.
Again, given the explicit identification (\ref{simplexidents}), it is straightforward to verify that the triplet-pairs of the two arrangements correspond.

Finally, we give an identification similar to (\ref{cfs}) and (\ref{otfs}).
\begin{equation}
\begin{array}{lll}
\sigma_1 \ & \sigma_2 \ & \sigma_3  \\[0.3em]
\sigma_1^\omega \ & \sigma_2^\omega \ & \sigma_3^\omega  \\[0.3em]
\sigma_1^{\sigma_2\omega} \ & \sigma_2^{\sigma_3\omega} \ & \sigma_3^{\sigma_1\omega}  \\[0.3em]
\sigma_1^{\sigma_3\omega} \ & \sigma_2^{\sigma_1\omega} \ & \sigma_3^{\sigma_2\omega}  \\[0.3em]
\sigma_1^{\sigma_2\sigma_3\omega} \ & \sigma_2^{\sigma_3\sigma_1\omega} \ & \sigma_3^{\sigma_1\sigma_2\omega}
\end{array}
\leftrightarrow
\qquad
\begin{array}{lll}
w_{12} \ & w_{34} \ & w_{56} \\[0.3em]
w_{45} \ & w_{16} \ & w_{23} \\[0.3em]
w_{35} \ & w_{15} \ & w_{13} \\[0.3em]
w_{46} \ & w_{26} \ & w_{24} \\[0.3em]
w_{36} \ & w_{25} \ & w_{14}
\end{array}
\label{cva}
\end{equation}
This is an explicit identification between elements of the conjugacy class of $\sigma_1$ in the group $\langle \sigma_1,\sigma_2,\sigma_3,\omega\rangle$, and the variables of the 5-simplex triplet-pair arrangement (Definition \ref{5SPC}).
It allows to directly verify the final claim of the proposition.
\end{proof}
Notice that Proposition \ref{latext} is purely combinatorial, as is the proof.
Thus it could have been stated and proven earlier at the end of Section \ref{5S}, but it would have meant introducing the finitely presented group $G$ (of Definition \ref{Gtp}) without motivation.
Unlike in Proposition \ref{5s}, we have not proven the minimality of the symmetry-complete extension, though it is likely to be minimal as we will argue later.

The natural initial-value problem for the 5-simplex consistent systems on the extended domain is as follows.
\begin{prop}\label{latgeom}
Impose a 5-simplex consistent rational triplet-pair system (Definitions \ref{tps} and \ref{fsc}) on all triplet-pairs of the arrangement of Definition \ref{Gtp}.
Then all remaining variables are composed rational functions of the unconstrained subset
\begin{equation}
x(\sigma_1^\omega),\ldots,x(\sigma_m^\omega),x(\sigma_1),\ldots,x(\sigma_m).\label{fvars}
\end{equation}
\end{prop}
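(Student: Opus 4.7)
The plan is to exploit the birational representation of the vertex-map group $G = \langle\sigma_1,\ldots,\sigma_m,\omega\rangle$, which Proposition \ref{cvmg} guarantees exists (the 5-simplex consistency hypothesis, combined with Proposition \ref{cgeqv}, ensures the defining relations (\ref{E6rels}) hold among the corresponding rational mappings).

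First, I would identify the unconstrained set (\ref{fvars}) with the vertex-type initial data array of Section \ref{fg} by setting $X = [x(\sigma_1), x(\sigma_1^\omega); \ldots; x(\sigma_m), x(\sigma_m^\omega)]$. Each $g\in G$ then acts birationally on $X$ by composition of the basic mappings (\ref{vertex}) and (\ref{omegadef}), producing an array $g(X)$ whose entries are explicit composed rational functions of the original entries, with the composition law of these rational actions matching the group law in $G$.

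Second, I would match the variables of the arrangement (labelled by elements of the conjugacy class of $\sigma_1$ in $G$, as in Definition \ref{Gtp}) with the entries of the various $g(X)$. For a generator $\sigma_j$, a direct comparison of the triplet-pair structure (\ref{gc}) at $g = \id$ with the defining equations (\ref{vvertex}) shows that the new first-column entries $\bar{x}_i$ of $\sigma_j(X)$ correspond to the variables $x(\sigma_j^{\sigma_i\omega})$ for $i \neq j$, with an analogous statement for $\sigma_j^\omega$ and the second column. Iterating along compositions, the entries of $g(X)$ give rational expressions for every variable $x(\sigma_1^g)$ of the arrangement, because the $\sigma_i$ are mutually conjugate in $G$ (as noted immediately after Definition \ref{Gtp}) and so every conjugacy class element arises as some $\sigma_i^{g'}$.

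Third, I would verify that the triplet-pair constraints at every $g \in G$ in (\ref{gc}) hold. At $g = \id$, they are built directly into the definitions (\ref{vvertex}) of the generators. At arbitrary $g$, they are the $g$-translates of the identity case, and hold by the $G$-invariance of the rational triplet-pair system together with the $G$-equivariance of the birational representation inherent in the construction of Section \ref{BRF}.

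The main obstacle is the bookkeeping required to establish that the correspondence between array entries of $\{g(X): g \in G\}$ and the conjugacy class labels $\{\sigma_1^g: g \in G\}$ is both well-defined (coincident group elements in $G$ yielding the same rational function) and exhaustive. Well-definedness relies on the fact that the birational representation factors through the relations (\ref{E6rels}), which is the content of Proposition \ref{cvmg}; exhaustiveness follows from the symmetry-completeness of the arrangement established in Proposition \ref{latext}. Once these are in place, the proposition reduces to reading off the rational functions that parametrise each variable of the arrangement.
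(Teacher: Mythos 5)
Your proposal is correct and follows essentially the same route as the paper's proof: you introduce the array of vertex-type data labelled by group elements, identify the action induced by the triplet-pair constraints with the vertex maps $\sigma_j$, $\sigma_j^\omega$ and $\omega$ of Section \ref{fg}, invoke Proposition \ref{cgeqv} (via the relations (\ref{E6rels})) so that this action is well-defined on all of $G$, and then read off every variable $x(\sigma_1^g)$ from the arrays $g(X)$ while checking the converse direction that the initial data remain unconstrained. Your index bookkeeping for $\bar{x}_i$ is consistent with your (column-swapped) choice of array, so apart from minor notational differences this matches the paper's argument.
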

\begin{proof}
This is constructive based on the procedure of birational reformulation described in Section \ref{BRF}.
The natural action of $g\in G$ on the defined variables, specifically $g(x(\nu))=x(\nu^g)$, clearly gives a permutation symmetry of the specified system of equations.
The central role is then played by the following function on elements $g\in G$:
\begin{equation}
X(g) := \flm{x(\sigma_1^{g \omega}),x(\sigma_1^g)}{x(\sigma_m^{g \omega}),x(\sigma_m^g)}.
\end{equation}
Here we approach the system from the converse direction to Section \ref{BRF}. 
Specifically, we wish to show that, as a consequence of the imposed equations, all variables can be obtained as composed rational functions of the variables in $X(\id)$, and that the variables in $X(\id)$ are themselves unconstrained.
Due to the imposed system on triplet-pairs (\ref{gc}), the action by composition from the left of some generator $h\in\{\sigma_1,\ldots,\sigma_m\}$ on some fixed $g\in G$ induces a rational action on $X(g)$ as follows:
\begin{equation}
X(hg) = h(X(g)).\label{induce}
\end{equation}
This defines the rational action of $h\in\{\sigma_1,\ldots,\sigma_m\}$ on the array of variables $X(g)$.
Due to the way $X$ has been defined, this action coincides with the previously defined rational action (\ref{vertex}), which is easily checked.
Similarly, the action of $\omega$ induced by (\ref{induce}) coincides with the action defined previously (\ref{omegadef}).

In Proposition \ref{cgeqv}, we showed that the 5-simplex consistency implies that generators (\ref{vertex}) and (\ref{omegadef}) satisfy relations (\ref{E6rels}), and therefore that (\ref{induce}) holds for all $h,g\in G$.
Thus equations
\begin{equation}
X(g) = g(X(\id)), \quad g \in G, \label{actdef}
\end{equation}
allow to obtain variables present in the array $X(g)$ in terms of those present in $X(\id)$, i.e., in terms of the subset of variables (\ref{fvars}).
Conversely, if all variables are obtained in terms of $X(\id)$ by (\ref{actdef}), then all equations on triplet-pairs (\ref{gc}) are satisfied, because they are equivalent to the equations $X(\sigma_1 g)=\sigma_1(X(g)),\ g\in G$, which, due to the consistency (\ref{induce}), are clearly a consequence of (\ref{actdef}).
\end{proof}
Proposition \ref{latgeom} concludes our construction of the natural lattice and initial-value problem for the 5-simplex consistent systems.
We now make several remarks on this result.

\begin{remark}
We have used the term {\it composed rational functions} in Proposition \ref{latgeom} because if some element $g\in G$ cannot be expressed as a finite word in the generators, then the corresponding variable $x(\sigma_1^g)$ is obtained in terms of the variables (\ref{fvars}) through composition of an infinite sequence of rational functions, so it is possible that such dependence is not itself rational.
\end{remark}

\begin{remark}
Although we did not prove that the symmetry-complete extension of the $m$-cube and edge-oriented $m$-simplex triplet-pair arrangements given in Definition \ref{Gtp} is the smallest possible (cf. Proposition \ref{latext}), what can be observed is that the number of initial data is the same for the extended domain as it is for the $m$-cube and edge-oriented $m$-simplex domains which it contains ($2m$).
This suggests the minimality of the extension, but it does not constitute a satisfactory proof.
The combinatorial proof of minimality used in the case $m=3$ (Proposition \ref{5s}) does not apply directly in the case where the extension has infinite extent.
\end{remark}

\begin{remark}
Due to the fact that rational triplet-pair systems (Definition \ref{tps}) are necessarily quadrirational, it is likely that the systems in Proposition \ref{conform} exhaust the examples of {\it scalar} 5-simplex consistent systems.
However, we have taken care to show how the natural lattice geometry emerges exclusively from the 5-simplex consistency property, and therefore the lattice is natural also for potential generalisations, which could be multi-component, matrix or non-commutative triplet-pair systems.
\end{remark}

\subsection{Case $m=4$}
In the case $m=4$ of Definition \ref{latext}, the triplet-pair arrangement is still finite (see Table \ref{vmt}).
This lattice shares the same beautiful combinatorics as the configuration of the 27 lines of a cubic surface.

Amongst the substantial literature on this subject, we cite here mainly the work of Baker \cite{Baker} who studied the geometry of Burkhardt's quartic \cite{HB}, which is a geometric object related to the cubic surface. 
This study was purely in terms of position in four-dimensional space of the 45 singular points, called nodes, which he found were in natural correspondence with 45 particular elements of the symmetry group under which the nodes are permuted, called projections.
The projections were proven to form a complete conjugacy class by Todd \cite{Todd}, who also makes a concise summary of Bakers extensive geometric description of the 45 nodes.
The set of 45 projections are in correspondence with the conjugacy class to which we have associated variables in case $m=4$ of Definition \ref{latext}.

We recount a part of Bakers geometric description, in order to identify in that setting what we have called triplet-pairs here.
The 45 nodes lie by threes on lines, called $\kappa$-lines, with 16 lines through each node and 240 lines in total.
The nodes and $\kappa$-lines therefore form a $(45_{16}240_{3})$ point-line configuration.
The set of all $\kappa$-lines intersecting a particular one contain every node of the configuration except three.
Those three remaining nodes themselves lie on a second $\kappa$-line, which is called polar to the first.
The relationship is reciprocal, and the 120 pairs of polar $\kappa$-lines can be identified with the triplet-pairs in (\ref{gc}).

The $m=4$ triplet-pair arrangement includes 36 $m=3$ (5-simplex) triplet-pair arrangements.
This sub-arrangement is left unnamed by Baker, but is called a 15-set by Todd.
On the other hand, one of Bakers additional contributions was to provide a coordinatisation, and this implicitly acknowledges the 15-set because it directly extends the natural coordinatisation of it given earlier in Definition \ref{5SPC}.
Bakers coordinatisation allows to write down the $m=4$ triplet-pair arrangement as follows.
\begin{prop}[notation due to Baker \cite{Baker}]
The triplet-pair arrangement of Definition \ref{Gtp} in the case $m=4$ consists of 45 variables arranged into 120 triplet-pairs.
If we denote 15 of the variables by
\begin{equation}
w_{ij}=w_{ji}, \quad i,j\in\{1,2,3,4,5,6\}, |\{i,j\}|=2,
\end{equation}
and 30 of them by
\begin{equation}
w_{(ij,kl,mn)}=w_{(ji,kl,mn)}=w_{(kl,mn,ij)}, \quad \{i,j,k,l,m,n\}=\{1,2,3,4,5,6\},
\end{equation}
then the 120 triplet-pairs may be written as follows: there are 10 of the form
\begin{equation}
w_{ij},w_{jk},w_{ki}, \quad w_{lm},w_{mn},w_{nl},\label{a120}
\end{equation}
90 of the form
\begin{equation}
w_{ij},w_{(ik,jl,mn)},w_{(jk,il,mn)}, \quad w_{mn},w_{(mk,nl,ij)},w_{(nk,ml,ij)},\label{b120}
\end{equation}
and 20 of the form
\begin{equation}
w_{(ij,kl,mn)},w_{(il,kn,mj)},w_{(in,kj,ml)}, \quad w_{(ij,mn,kl)},w_{(il,mj,kn)},w_{(in,ml,kj)},\label{c120}
\end{equation}
where in (\ref{a120}), (\ref{b120}) and (\ref{c120}), $\{i,j,k,l,m,n\}=\{1,2,3,4,5,6\}$.
The subset of 8 unconstrained variables appearing in case $m=4$ of Proposition \ref{latgeom} can be taken as
\begin{equation}
w_{12}, w_{34}, w_{56}, w_{(12,34,56)}, w_{45}, w_{16}, w_{23}, w_{(16,45,23)}.
\end{equation}
\end{prop}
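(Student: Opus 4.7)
My plan is to extend the explicit coordinatisation (\ref{cva}), which identifies conjugacy-class elements with the 5-simplex labels $w_{ij}$, from the case $m=3$ to the case $m=4$, and then to verify that the resulting labelling of the conjugacy class of $\sigma_1$ in $G=\langle\sigma_1,\sigma_2,\sigma_3,\sigma_4,\omega\rangle$ matches Baker's 45 symbols, with the 120 triplet-pairs from (\ref{gc}) matching Baker's 120 polar $\kappa$-line pairs. By restricting to the 5-simplex subgroup $\langle\sigma_1,\sigma_2,\sigma_3,\omega\rangle$ and invoking Proposition \ref{latext}, the identifications in (\ref{cva}) attach the 15 symbols $w_{ij}$ to 15 elements of the conjugacy class, and the 10 triplet-pairs of the embedded 5-simplex sub-arrangement become precisely the 10 triplet-pairs of form (\ref{a120}).

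For the remaining 30 elements of the conjugacy class, I would exploit the $S_6$ action on the whole $m=4$ arrangement, arising from the 5-simplex symmetry group as a subgroup of $G$ (by Proposition \ref{BRA5}, this $S_6$ is generated by $\sigma_1^\omega,\sigma_3,\sigma_2^\omega,\sigma_1,\sigma_3^\omega$). Under this $S_6$ the conjugacy class of $\sigma_1$ decomposes into two orbits, of sizes $15$ and $30$; the first is already labelled by the $w_{ij}$, and for the second the stabilizer in $S_6$ has order $|S_6|/30=24$, matching the stabilizer of a cyclic-ordered triple of disjoint unordered pairs partitioning $\{1,\ldots,6\}$. I would anchor the labelling of the 30-orbit by assigning Baker's symbol $w_{(12,34,56)}$ to the distinguished element $\sigma_4 \in G$, and propagate by $S_6$-equivariance. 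Well-definedness of this labelling reduces to matching the $S_6$-stabilizer of $\sigma_4$ (under conjugation in $G$) with the stabilizer of $w_{(12,34,56)}$ under Baker's equivalence (swapping letters within each pair and cyclic rotation of the three pairs); this is a finite verification using the relations (\ref{E6rels}).

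Once the labelling is in place, classification of the 120 triplet-pairs by type follows by $S_6$-equivariance: it suffices to exhibit one triplet-pair representative of each of the three possible types and count its $S_6$-orbit. Taking $g=\omega$ in (\ref{gc}) produces a triplet-pair inside the 5-simplex sub-arrangement of type (\ref{a120}), whose orbit has size $10$; taking $g$ to involve $\sigma_4$ minimally produces a triplet-pair which mixes the 15-orbit and the 30-orbit of form (\ref{b120}), whose orbit has size $90$; and a triplet-pair lying entirely within the 30-orbit, of form (\ref{c120}), has orbit size $20$. The three counts sum to $120$. For the unconstrained subset, case $m=4$ of (\ref{fvars}) consists of $x(\sigma_i)$ and $x(\sigma_i^\omega)$ for $i\in\{1,2,3,4\}$; six of these are identified as in the 5-simplex case with $w_{12},w_{34},w_{56},w_{45},w_{16},w_{23}$, and by the anchor $\sigma_4\leftrightarrow w_{(12,34,56)}$, the remaining element $x(\sigma_4^\omega)$ corresponds to $w_{(16,45,23)}$ via the $\omega$-action.

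The main obstacle is verifying the labelling of the 30-orbit precisely: $S_6$-equivariance alone does not pin it down, and matching the cyclic-triple equivalence relation with $S_6$-conjugation requires explicit use of the $\sigma_4$-relations in (\ref{E6rels}) beyond what appears in the 5-simplex case. This verification is most naturally carried out using a computer algebra system, as was done for Proposition \ref{vmg}, after which the triplet-pair enumeration is a purely combinatorial consequence of the $S_6$ action on Baker's symbols together with the formula (\ref{gc}).
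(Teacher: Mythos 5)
Your route is genuinely different from the paper's. The paper simply writes down an explicit thirty-entry extension of the table (\ref{cva}), assigning a Baker symbol to every conjugate of $\sigma_1$ involving $\sigma_4$, and then checks directly on the finite group that the list exhausts the conjugacy class and that the triplet-pairs (\ref{gc}) match (\ref{a120})--(\ref{c120}); your proposal replaces the explicit table by an $S_6$-equivariance argument (orbit decomposition $45=15+30$, anchoring $\sigma_4\leftrightarrow w_{(12,34,56)}$, stabilizer matching for well-definedness, and orbit counts $10+90+20$ for the triplet-pairs). This is a more structural organisation of the same finite verification, and the stabilizer-matching step you identify is indeed the right substitute for writing the table out; note, however, that after anchoring you must still check that the equivariant labelling so obtained makes the mixed and pure triplet-pairs take exactly the forms (\ref{b120}) and (\ref{c120}) --- the labelling of the $30$-orbit is a priori only pinned down up to the cyclic-reversal involution on Baker's symbols, so this compatibility is an extra check, not automatic.

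Two concrete soft spots. First, the count: showing that three $S_6$-orbits of triplet-pairs have sizes $10$, $90$, $20$ summing to $120$ does not by itself show the arrangement of Definition \ref{Gtp} has \emph{only} $120$ triplet-pairs; you still need the total count (equivalently, that every triplet-pair (\ref{gc}), $g\in G$, lies in one of your three orbits), which is a separate finite computation --- as is the fact that the conjugacy class of $\sigma_1$ has exactly $45$ elements, which your $15+30$ decomposition presupposes. Second, the final step "$x(\sigma_4^\omega)$ corresponds to $w_{(16,45,23)}$ via the $\omega$-action" does not follow from your framework as stated: $\omega$ does not lie in the $S_6$ subgroup of Proposition \ref{BRA5}, and conjugation by $\omega$ does \emph{not} act on the $30$ new symbols by the index permutation $(14)(25)(36)$ --- that naive reading would give $w_{(45,16,23)}$, which has the opposite cyclic order and is a different symbol. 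The correct label of $\sigma_4^\omega$ must be obtained inside your scheme by finding $s\in S_6$ with $\sigma_4^{s}=\sigma_4^{\omega}$ (a relation computation using (\ref{E6rels})) and applying the index permutation of $s$ to $(12,34,56)$, or else read off from an explicit correspondence as the paper does. Both points are of the same ``check it on the finite group'' character as the paper's own proof and fit under the Magma verification you propose, but they need to be stated as such rather than derived from equivariance alone.
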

\begin{proof}
We proceed by giving an extension of the association (\ref{cva}).
\begin{equation*}
\begin{array}{lll}
\sigma_1^{\sigma_4\omega} \ & \sigma_2^{\sigma_4\omega} \ & \sigma_3^{\sigma_4\omega} \\[0.3em]
\sigma_1^{\sigma_4\sigma_2\omega} \ & \sigma_2^{\sigma_4\sigma_3\omega} \ & \sigma_3^{\sigma_4\sigma_1\omega}  \\[0.3em]
\sigma_1^{\sigma_4\sigma_3\omega} \ & \sigma_2^{\sigma_4\sigma_1\omega} \ & \sigma_3^{\sigma_4\sigma_2\omega}  \\[0.3em]
\sigma_1^{\sigma_4\sigma_2\sigma_3\omega} \ & \sigma_2^{\sigma_4\sigma_3\sigma_1\omega} \ & \sigma_3^{\sigma_4\sigma_1\sigma_2\omega} \\[0.3em]
\sigma_4^{\sigma_1\omega} \ & \sigma_4^{\sigma_2\omega} \ & \sigma_4^{\sigma_3\omega} \\[0.3em]
\sigma_4^{\sigma_2\sigma_3\omega} \ & \sigma_4^{\sigma_3\sigma_1\omega} \ & \sigma_4^{\sigma_1\sigma_2\omega} \\[0.3em]
\sigma_1^{\omega\sigma_2\sigma_3\sigma_4\omega} \ & \sigma_1^{\sigma_1\omega\sigma_2\sigma_3\sigma_4\omega} \ & \sigma_4 \\[0.3em]
\sigma_2^{\omega\sigma_1\sigma_3\sigma_4\omega} \ & \sigma_2^{\sigma_1\omega\sigma_1\sigma_3\sigma_4\omega} \ & \sigma_4^{\omega} \\[0.3em]
\sigma_3^{\omega\sigma_1\sigma_2\sigma_4\omega} \ & \sigma_3^{\sigma_1\omega\sigma_1\sigma_2\sigma_4\omega} \ & \sigma_4^{\sigma_1\sigma_2\sigma_3\omega} \\[0.3em]
\sigma_4^{\omega\sigma_1\sigma_2\sigma_3\omega} \ & \sigma_4^{\sigma_1\omega\sigma_1\sigma_2\sigma_3\omega} \ & \sigma_4^{\omega\sigma_4\omega\sigma_1\sigma_2\sigma_3\omega}
\end{array}
\leftrightarrow
\quad
\begin{array}{lll}
w_{(12,35,46)} \ & w_{(15,26,34)} \ & w_{(13,24,56)} \\[0.3em]
w_{(12,45,36)} \ & w_{(16,25,34)} \ & w_{(14,56,23)} \\[0.3em]
w_{(12,36,45)} \ & w_{(16,34,25)} \ & w_{(14,23,56)} \\[0.3em]
w_{(12,46,35)} \ & w_{(15,34,26)} \ & w_{(13,56,24)} \\[0.3em]
w_{(13,26,45)} \ & w_{(16,35,24)} \ & w_{(15,46,23)} \\[0.3em]
w_{(15,36,24)} \ & w_{(13,25,46)} \ & w_{(14,26,35)} \\[0.3em]
w_{(13,45,26)} \ & w_{(16,23,45)} \ & w_{(12,34,56)} \\[0.3em]
w_{(16,24,35)} \ & w_{(14,35,26)} \ & w_{(16,45,23)} \\[0.3em]
w_{(15,23,46)} \ & w_{(13,46,25)} \ & w_{(14,25,36)} \\[0.3em]
w_{(14,36,25)} \ & w_{(15,24,36)} \ & w_{(12,56,34)}
\end{array}
\end{equation*}
Combined with (\ref{cva}), this defines a correspondence between the 45 elements of the conjugacy class of $\sigma_1$ in the group $\langle \sigma_1,\sigma_2,\sigma_3,\sigma_4,\omega\rangle$ and the 45 variables in the given notation. 
That the given list exhausts the conjugacy class can be checked directly because the group in question is finite.
From the explicit correspondence it can be checked that the triplet-pairs appearing in case $m=4$ of Proposition \ref{latgeom}, i.e., 
\begin{equation}
\sigma_1^{g},\sigma_2^{g\omega},\sigma_2^{g\sigma_1\omega}, \qquad
\sigma_2^{g},\sigma_1^{g\omega},\sigma_1^{g\sigma_2\omega}, \qquad g\in \langle \sigma_1,\sigma_2,\sigma_3,\sigma_4,\omega\rangle,
\end{equation}
are in one-to-one correspondence with the triplet-pairs (\ref{a120}), (\ref{b120}) and (\ref{c120}).
\end{proof}

Adler, Bobenko and Suris, when introducing the scalar quadrirational models \cite{ABSf}, gave also a geometric incidence picture in terms of pencils of conics.
The refinement of this geometric picture, accounting for the additional symmetry and consistency, is potentially very interesting. 
In particular, the exceptional nature of the associated combinatorics indicates a likelihood that such refinement will connect with the cubic surface.

\subsection{Subgroup dynamics and integrability}\label{SDI}
Here we discuss the familiar sense of integrability that is recovered when the restriction is made to particular subgroups of the complete vertex-map group, $G$ in Definition \ref{Gtp}.

By construction, the complete vertex-map group contains $A_n$ and $BC_n$ subgroups corresponding to the edge-oriented $n$-simplex and $n$-cube domains (see Proposition \ref{latext}).
In both of these domains it is envisaged that $n$ is arbitrarily large, and that the number of degrees of freedom is therefore un-bounded.
Integrable dynamics associated with the $n$-cube is on a quad-graph and is of KdV-type, whereas the edge-oriented $n$-simplex (as discussed in Section \ref{OTC}) leads to a simpler kind of integrability on a triangle-graph.

Due to the nature of the complete vertex map group, the unbounded domain required for non-trivial dynamics is also present for finite $n$, the simplest case being $n=5$.
The group in this case is the affine reflection group $\wt{E_6}$, such affine groups are familiar from dynamics in the Painlev\'e setting, though the birational representation here looks very different.
Nevertheless it would be natural to adopt the similar point of view, that dynamics are associated with the normal abelian subgroup, whilst the action of the finite quotient group ($E_6$) corresponds to symmetries of the dynamical system.
This would give a natural notion of integrability, but it remains open whether the systems here can actually be integrated in terms of (discrete) Painlev\'e transcendents or their generalisations.

Restriction to subgroup dynamics by taking a reduced set of generators leads to an additional variety of possibilities.
An interesting case would be the group
\begin{equation}
\langle t_1, t_3,\ldots,t_{10}\rangle,
\end{equation}
written in terms of elements given in (\ref{iso}), which constitutes a birational representation of $\wt{E_8}$.

The lattice in its full generality is a combinatorial structure which of itself carries no obvious notion of distance, it is {\it non-metric}.
In this setting the identification of subgroups in which there is a meaningful notion of distance (like in the aforementioned three cases) is of itself useful, because it implies non-trivial dynamics. 
For such dynamics there is the inherent structure of consistent embedding in the ambient space, and it is natural to expect this leads to integrability in other senses. 
This is something which could potentially be tested by the notion of algebraic entropy \cite{FalVia,bv} for instance.

\section{Discussion}\label{D}
It has been established that there is a stronger consistency property, beyond the usual integrability, of quad-graph systems at the top level of broader classes defined in \cite{ABSf} and \cite{AtkNie}, namely the quadrirational maps and the multi-quadratic quad-equations.
The key feature distinguishing the top-level systems is additional permutation symmetry that puts parameters on an equal footing with the variables.
The stronger consistency, which has combinatorics of the 5-simplex, can be viewed as an extension of the usual cubic consistency that respects this symmetry.
That is, there is no distinction between parameters and variables also on the level of the consistency property.
This aspect of the consistency theory is purely combinatorial, and has been formalised here in the notion of {\it symmetry completeness} (Definition \ref{sc}).

The usual situation due to the cubic consistency, is that a solution of the planar quad-graph system can be extended to the hypercube in which the quad-graph is embedded.
The significance of the 5-simplex consistency is that it leads, in a similar fashion, to a yet larger embedding lattice which extends beyond the hypercube.
The characterisation the extended lattice is one of the main technical problems tackled in this paper.
The key method used is the reformulation of the system as a birational representation of the symmetry group of the lattice, which turns out to be the Coxeter group given at the end of Section \ref{FH} (Figure \ref{cdd}).

The significance of the extended lattice is that it constitutes a unification between distinct classes of integrable systems (cf. Section \ref{SDI}).
Besides the original quad-graph equations, we identify two other classes of systems that are contained by restriction to a sublattice.
In the first case there is a system defined on a triangle-graph, which is integrable because solutions can be extended consistently to an embedding multidimensional simplex.
In the second case, which is most unanticipated, there are finite degree-of-freedom rational mappings that can be considered integrable because they are identified as elements of a birational affine reflection group.
The extended lattice therefore sits above integrable systems from three a-priori quite separate settings, unifying their respective notions of integrability in the 5-simplex consistency property.

One can ask about degenerate cases of the quad-graph systems that formed the starting-point of the present investigations, in particular this includes all remaining systems from the broader classes in \cite{ABSf,AtkNie}.
The degeneration procedures break the symmetry, essentially reintroducing the distinction between variables and parameters. 
Due to this symmetry breaking, the extension of the limiting procedure from the hypercube, which is now considered as a sub-lattice, to the extended space, is non-unique.
For this reason it is an interesting task to find what is the natural manifestation of the extended consistency property for the degenerate systems.
A similar situation is also encountered if one considers systems related to the original quad-graph equations by non-local B\"acklund or Miura-type transformations.
Several such transformations are known \cite{PTV,PSTV,KaNie,KaNie3}, and they of course have some natural extension to the hypercube.
Again however, it is non-trivial to extend such transformations beyond the hypercube to the new lattice, and in general one can expect that the additional structure encoded in the extended lattice will become non-local or be otherwise obscured.

We have emphasised strongly a natural algebraic setting for the considered systems, namely the discrete part of the theory of elliptic functions.
The connection is through a class of biquadratic polynomials that we call idempotent (Definition \ref{idempotent}), which are connected with period trisection, and from which the considered systems emerge as the three-orbit constraint.
This gives a perspective from which applications and further generalisations of these systems may arise.
For instance, it reveals a further hidden symmetry of the studied systems that puts the fixed parameters (associated with the constant third orbit) on the same footing as the variables.
Combined with the completeness principle, this hidden symmetry suggests there may exist a further extended consistency property and lattice in which there is no distinguished orbit.
Obtaining such extension and asking about its relation to integrability is the most obvious further question raised by this perspective.

\appendix
\section{Idempotent biquadratic polynomials in the Weierstrass theory of elliptic functions}\label{H}
Consider the symmetric biquadratic polynomial,
\begin{equation}
h(x,y) = c_0+c_1(x+y)+c_2xy+c_3(x^2+y^2)+c_4xy(x+y)+c_5x^2y^2, \label{sb2}
\end{equation}
$c_0,\ldots,c_5\in\mathbb{C}$.
As mentioned in the main text, its orbits (\ref{orbit}), (\ref{orbdef}) may be taken as a definition of elliptic functions in the discrete setting.
We develop here an aspect of this discrete theory connected to the Weierstrass addition formula, and relate it to the idempotent class of biquadratics.

A basic role in this aspect of the Weierstrass theory is played by the discriminant polynomial $r$ associated with $h$, which is defined as
\begin{equation}
\dis[h(x,y),y] = r(x), \quad  \dis :={\textrm{discriminant}}.
\end{equation}
This terminology is unambiguous due to the symmetry of $h$.
Henceforth we assume that $r$ has at least one simple root.
The fundamental property on the level of biquadratics is a kind of closure under composition within the family with shared discriminant polynomial.
More precisely, if biquadratics $h_1$ and $h_2$ have, up to a scalar multiple, a common discriminant polynomial $r$, then the composition of $h_1$ and $h_2$ takes the form
\begin{equation}
\res[h_1(x,y),h_2(y,z),y] = h_3(x,z)h_4(x,z), \quad \res:={\textrm{resultant,}}\label{composition}
\end{equation}
where $h_3$ and $h_4$ are again symmetric biquadratic polynomials whose discriminant polynomial is a (now possibly zero) scalar multiple of $r$.
The appearance of two factors on the right-hand-side of (\ref{composition}) means this composition gives a structure which is sometimes called a two-valued group.

A natural parameterisation of this group is obtained through the evaluation map,
\begin{equation}
h\mapsto \alpha: h(e_0,\alpha)=0.\label{evaluation}
\end{equation}
Choosing the point of evaluation, $e_0\in\mathbb{C}\cup\{\infty\}$, to be the assumed simple root of $r$, makes the evaluation map single-valued.
Furthermore, $h$ can be re-constructed (up to constant scalar multiple) from its discriminant polynomial and the associated parameter $\alpha$.
Specifically
\begin{equation}
h(x,y) \sim H(x,y,\alpha) \label{hH}
\end{equation}
where $H$ is a triquadratic polynomial which may be expressed as a discriminant
\begin{equation}
H := \dis\left[ \frac{r_1(t)r_2(e_0)-r_2(t)r_1(e_0)}{t-e_0}, t\right], \label{bqd}
\end{equation}
where $t$ is a dummy variable, and
\begin{equation}
r_1(t)\sim \frac{r(t)}{t-e_0}, \quad r_2(t)=(t-x)(t-y)(t-\alpha).
\end{equation}
Here $r_1$ is also a polynomial of at most degree three because $e_0$ is the assumed simple root of $r$.
Observe that $H$ is invariant under permutations of the three variables $x$, $y$ and $\alpha$.
This permutation symmetry has the remarkable consequence that the parameters $\alpha_1,\ldots,\alpha_4$ obtained by the evaluation map from the biquadratics $h_1,\ldots,h_4$ appearing in the composition formula (\ref{composition}), 
\begin{equation}
h_i\mapsto\alpha_i:h_i(e_0,\alpha_i)=0 \quad \Rightarrow \quad h_i(x,y)\sim H(x,y,\alpha_i),\quad i\in\{1,2,3,4\},
\end{equation}
are themselves related by the same polynomial $H$,
\begin{equation}
H(\alpha_1,\alpha_2,z)=0 \ \Leftrightarrow \ z \in\{\alpha_3,\alpha_4\}.
\end{equation}
In particular, the binary operation $(x,y)\mapsto z$ defined by equation $H(x,y,z)=0$ endows $\mathbb{C}\cup\{\infty\}$ with the structure of an abelian two-valued group.
Most of the important features of $H$ (\ref{bqd}) can be seen from its canonical form, the case $e_0=\infty$, $r(t)=4t^3-g_2t-g_3$:
\begin{equation}
H(x,y,z) = (xy+yz+zx+\tfrac{1}{4}g_2)^2-4(x+y+z)(xyz-\tfrac{1}{4}g_3).\label{wb}
\end{equation}
Those features being that it appears as the discriminant of a polynomial which is degree-one in each of three variables, that it is symmetric in the three variables, and, through the Weierstrass addition formula
\begin{equation}
H(\wp(\alpha),\wp(\beta),\wp(\gamma))=0 \quad \Leftrightarrow \quad \wp(\gamma) = \wp(\alpha\pm\beta),\label{af}
\end{equation}
also that it defines a two-valued abelian group structure.
The more generic form (\ref{bqd}) reveals the connection with the Caley-Bezout-type expression for the idempotent biquadratic correspondences (\ref{redrag}).

\begin{acknowledgements}
This research was funded by Australian Research Council Discovery Grant DP 110104151.
Thanks are due to Pavlos Kassotakis for many discussions and David Howden for advice on Magma.
\end{acknowledgements}

\bibliographystyle{unsrt}
\bibliography{references}
\end{document}